\documentclass[5p,times,twocolumn]{elsarticle}
\usepackage{amsmath,amssymb,amsthm}
\usepackage{graphicx}
\usepackage{booktabs}
\usepackage{multirow}
\usepackage{makecell}
\usepackage{pdflscape}
\usepackage{caption}

\usepackage{microtype}

\usepackage{dblfloatfix}
\usepackage{float}
\usepackage{longtable}
\usepackage{array}
\usepackage{tabularx}
\usepackage{threeparttable}
\renewcommand{\arraystretch}{2.0}
\usepackage{xcolor}
\usepackage[colorlinks=true,
linkcolor=blue,
citecolor=blue,
urlcolor=blue]{hyperref}

\let\oldref\ref
\renewcommand{\ref}[1]{\textcolor{blue}{[\oldref{#1}]}}
\journal{Insurance: Mathematics and Economics}

\usepackage{listings}
\usepackage{xcolor}

\lstdefinestyle{Rstyle}{
  language=R,
  basicstyle=\ttfamily\scriptsize,
  breaklines=true,
  breakatwhitespace=false,
  columns=fullflexible,
  keepspaces=true,
  showstringspaces=false,
  frame=single,
  xleftmargin=0pt,
  xrightmargin=0pt,
  aboveskip=0.7em,
  belowskip=0.7em
}

\newtheorem{proposition}{Proposition}

\begin{document}

\begin{frontmatter}

\title{A Copula-Based Framework for Multivariate Zero-Inflated Mixed Poisson Models}

\author[inst1]{Nguyen Quang Huy}
\address[inst1]{Actuarial Sciences Laboratory, College of Technology, National Economics University, Hanoi, Vietnam}
\author[inst1]{Pham Thi Hong Tham\corref{cor1}}
\ead{thamtkt@neu.edu.vn}
\cortext[cor1]{Corresponding author}

\author[inst2]{Tran Thi Thu Hien}
\address[inst2]{Faculty of Management and Tourism, Hanoi University, Hanoi, Vietnam}

\begin{abstract}
Multivariate count data often contain overdispersion, excess zeros, and complex dependence. Existing multivariate zero-inflated count models usually use a single dependence structure to jointly model structural zeros and count outcomes. This makes the two sources of dependence difficult to interpret separately and often leads to slow computation or intractable likelihoods. This paper proposes a general framework for multivariate zero-inflated mixed Poisson models that explicitly separates these two sources of dependence within a unified likelihood-based formulation. The proposed hierarchical model combines zero-inflated mixed Poisson marginals with separate dependence models for the structural-zero and latent-intensity components. Dependence among structural zeros is modeled by standard parametric copulas, while dependence among latent mixing variables is modeled by checkerboard copulas. Likelihood-based inference is carried out using an inference-functions-for-margins procedure, and the proposed likelihood also supports full maximum likelihood estimation when computationally feasible. Simulation studies show accurate parameter estimation and good finite-sample performance under different dependence structures and latent mixing distributions. Applications to benchmark datasets from existing statistical software and a healthcare utilization dataset demonstrate the flexibility, computational efficiency, and practical usefulness of the proposed framework.

\end{abstract}

\begin{keyword}
Multivariate zero-inflated count models;
Mixed Poisson distribution;
Checkerboard copula;
Factor copula.
\end{keyword}

\end{frontmatter}

\section{Introduction}
\subsection{Literature Review}
The analysis of count data has received much attention in actuarial science, insurance, healthcare, epidemiology, and risk management because of its wide range of applications. In many practical settings, count data exhibit overdispersion, excess zeros, and complex dependence at the same time. Developing statistical models that can handle these features remains an active research topic.

A large body of research has focused on modeling overdispersed count data using mixed Poisson distributions. In these models, the Poisson intensity is treated as a latent random variable, introducing additional heterogeneity into the count process. Classical examples include the Poisson--Inverse Gaussian model of \citet{Willmot1987} and the general family of mixed Poisson distributions reviewed by \citet{karlis2005mixed}. In actuarial science, mixed Poisson models are widely used for claim frequency modeling because they naturally account for unobserved risk heterogeneity while remaining analytically tractable. Comprehensive discussions are given by \citet{denuit2007actuarial}, while \citet{frees2009regression} describe their role in actuarial regression models. More recently, mixed Poisson and related latent variable models have been applied to insurance pricing, reserving, and risk classification \citep{garrido2016generalized}.

Another common feature of count data is the presence of excess zeros. In insurance portfolios, many policyholders do not report any claims during the observation period. Likewise, healthcare datasets often contain many individuals with no use of particular services. To model this feature, \citet{lambert1992zero} proposed the Zero-Inflated Poisson (ZIP) model, in which observed zeros arise from a mixture of a structural-zero component and a standard count-generating process. Later work extended the ZIP model to include random effects, overdispersion, and correlated observations \citep{ridout1998models,hall2000zero,min2005random}. These models are widely used because they separate the probability of structural zeros from the intensity of the count process.

A separate line of research has focused on multivariate count data. Dependence among multiple count variables arises naturally in many applications. For example, claim frequencies from different insurance coverages may be affected by common risk factors, while different types of healthcare utilization are often influenced by an individual's overall health condition. Early work was based on multivariate Poisson distributions and shared random-effects models \citep{kocherlakota2017bivariate,chib2001markov}. More flexible models based on finite mixtures and latent variable representations were later proposed by \citet{karlis2007finite}. Reviews of count regression models and multivariate count data are provided by \citet{cameron2013regression}.

Copula-based methods provide a flexible way to model multivariate count data by separating marginal distributions from the dependence structure. This allows different marginal models to be combined with a wide range of dependence models within a unified likelihood framework. Similar ideas have also been successfully used in other statistical problems, such as copula link-based additive models for censored time-to-event data \citep{petti2022copula}. Early work on copulas for discrete data includes the practitioner-oriented review of \citet{triverdi2005copula} and the overview of \citet{genest2007primer}. A comprehensive treatment of copula theory is given by \citet{joe2014dependence}. For multivariate count data, copula-based models have been developed by \citet{nikoloulopoulos2009finite} and \citet{nikoloulopoulos2013copula}, among others, and have been applied to healthcare utilization, insurance claim frequencies, and other types of multivariate count data.

More recently, latent-factor copula models have been proposed to model high-dimensional dependence. These models provide a parsimonious representation of dependence and avoid the large number of parameters required by fully parameterized copula models \citep{krupskii2013factor}. As a result, they have made copula-based methods more practical for high-dimensional applications.

Recent research has also highlighted the role of copulas for discrete data. In particular, \citet{panagiotelis2012pair} proposed a copula-based framework for multivariate discrete distributions and developed practical methods for likelihood evaluation when exact computation of discrete copula probabilities is difficult. These developments are especially useful for multivariate count data, where discrete outcomes and complex dependence often occur together.

Latent variable models have also extended the applicability of mixed Poisson models. For example, Poisson--Lognormal models provide a flexible way to capture dependence through latent intensities and have been successfully applied to multivariate count data \citep{chiquet2021poisson}. Other models based on compound Poisson and related mixed count distributions have also been proposed for highly zero-inflated data \citep{park2023zero}. Together, these studies show that latent-intensity models provide a natural way to account for both overdispersion and dependence.

Dependence modeling has also received increasing attention in actuarial science. For example, \citet{shi2014multivariate} proposed multivariate negative binomial models for insurance claim frequencies, while \citet{garrido2016generalized} studied generalized linear models for dependent claim frequencies and severities. Copula-based methods have also been used to model the dependence between claim frequency and claim severity \citep{shi2015dependent}. These studies demonstrate the importance of modeling heterogeneity and dependence jointly in insurance applications.

Modeling excess zeros and dependence simultaneously remains much more challenging than modeling either feature alone. Early zero-inflated models were extended to include random effects and correlated latent structures \citep{hall2000zero,min2005random}. More recent work has introduced copulas, latent variable models, and finite mixtures into zero-inflated count models. Examples include Gaussian copula models for zero-inflated count time series \citep{alqawba2019zero}, copula-based finite mixture regression models for insurance claim counts \citep{bermudez2022copula}, and bivariate zero-inflated count copula regression models implemented in the \texttt{bizicount} package \citep{niehaus2024bizicount}. Comprehensive reviews of recent developments are provided by \citet{young2022zero} and \citet{young2022zeroPartII}.

Overall, these studies have greatly improved the modeling of overdispersion, excess zeros, latent heterogeneity, and dependence in count data. However, most existing multivariate zero-inflated count models introduce dependence through a single copula, shared random effect, or latent factor. As a result, they do not distinguish dependence arising from the structural-zero component and dependence arising from latent count intensities. In addition, copula models for discrete count data often require evaluating multidimensional rectangle probabilities, which makes exact likelihood evaluation computationally expensive and sometimes infeasible in high dimensions.

\subsection{Research Gap and Motivation}

Although recent studies have improved the modeling of multivariate zero-inflated count data, important challenges remain. First, most existing models use a single dependence mechanism, such as a copula, a shared random effect, or a latent factor, to model both structural zeros and count outcomes. As a result, they do not distinguish dependence arising from the structural-zero component and dependence arising from latent count intensities. This makes the two sources of dependence difficult to interpret and estimate separately.

This distinction is important in actuarial applications. Some policyholders may have a low probability of making claims across several coverages, creating dependence among structural zeros. At the same time, policyholders who enter the claim-generating process may still have dependent claim frequencies because they share common latent risk factors. These two types of dependence have different interpretations and may lead to different conclusions for risk classification, portfolio segmentation, premium calculation, and risk assessment. Modeling them separately therefore provides a clearer understanding of the dependence structure.

A second challenge is computation. Existing copula-based models for multivariate count data often require the evaluation of high-dimensional rectangle probabilities, making likelihood evaluation slow and computationally expensive as the dimension increases. This limits the use of flexible copula models for high-dimensional count data.

To address these challenges, we propose a multivariate zero-inflated mixed Poisson regression framework that separates dependence in the structural-zero component from dependence among latent mixing variables. The framework allows different dependence models for these two components within a single likelihood framework. Dependence among structural zeros is modeled using standard parametric copulas, while dependence among latent mixing variables is modeled using checkerboard copulas. Gaussian factor representations are used to obtain efficient likelihood computation in moderate- and high-dimensional settings.

The main contributions of this paper are threefold. First, we propose a general multivariate zero-inflated mixed Poisson regression framework that models dependence in the structural-zero component and the latent-intensity process separately. This makes the two sources of dependence easier to interpret. Second, the proposed framework applies to a broad class of mixed Poisson distributions under a common modeling and estimation framework, so different mixing distributions can be used without changing the dependence model or the estimation procedure. Third, we develop a computationally efficient estimation procedure for likelihood-based inference, making the proposed framework practical for moderate- and high-dimensional multivariate count data.

\section{Multivariate Zero-Inflated Mixed Poisson Model}

\subsection{Hierarchical Model}\label{sec:hierarchical_model}

Let $\mathbf{Y}=(Y_1,\ldots,Y_k)^\top$ denote a $k$-dimensional vector of count responses.

To jointly accommodate excess zeros, overdispersion, and multivariate dependence, we employ a three-layer hierarchical model consisting of a structural-zero layer, a latent-intensity layer, and a conditional Poisson layer.

For each margin \(j=1,\ldots,k\), let \(B_j\in\{0,1\}\) be a latent Bernoulli indicator denoting whether an observation belongs to the structural-zero state. Let \(N_j\) denote the underlying count variable, and assume that \(B_j\) and \(N_j\) are independent. The observed response is defined as
\begin{equation}
Y_j=B_jN_j,
\label{eq:Y_definition}
\end{equation}
so that
\[
Y_j=
\begin{cases}
0,& B_j=0,\\
N_j,& B_j=1.
\end{cases}
\]

Conditional on a latent-intensity variable \(\Lambda_j\), we assume
\[
N_j\mid\Lambda_j=\lambda_j
\sim
\mathrm{Poisson}(\lambda_j),
\]
where
\begin{equation}
\Lambda_j
\sim
F_j(\mu_j,\phi_j),
\label{eq:mixing_distribution}
\end{equation}
and \(F_j\) denotes a positive continuous mixing distribution parameterized by a mean parameter \(\mu_j\) and a dispersion parameter \(\phi_j\).

Throughout the paper, we assume
\[
\mathbb{E}(\Lambda_j)=\mu_j,\qquad
\mathbb{V}(\Lambda_j)=\phi_jV_j(\mu_j),
\]
where \(V_j(\cdot)\) is the variance function of the mixing distribution (see \ref{app_distribution_for_lambda}). For many common mixing distributions, the variance function has the form
\[
V_j(\mu_j)=\mu_j,\qquad
V_j(\mu_j)=\mu_j^2,\qquad
\text{or}\qquad
V_j(\mu_j)=\mu_j^3.
\]
More general variance functions can also be used if needed. Unless stated otherwise, we use the natural variance function of each mixing distribution. In a few cases, we use a different variance function to match the parameterization used in existing software, allowing a direct comparison with competing methods.

The proposed framework can be used with many different mixing distributions, and different margins may follow different distributions. In this paper, we consider five commonly used choices: Gamma, Lognormal, Pareto, Weibull, and Inverse Gaussian. These give the corresponding Poisson--Gamma, Poisson--Lognormal, Poisson--Pareto, Poisson--Weibull, and Poisson--Inverse Gaussian margins. The proposed framework is not limited to these distributions and can be extended to other mixing distributions with suitable distributional and computational properties.

The hierarchical construction naturally separates the structural-zero component from the latent-intensity component, providing the foundation for introducing distinct dependence structures for the two latent layers.

\subsection{Marginal Model Specification}\label{sec:marginal_selection}
The previous subsection introduced the hierarchical structure of the proposed model. We now specify the regression models for the structural-zero probabilities and the latent-intensity means.

For each margin \(j=1,\ldots,k\), suppose we observe \(n\) independent observations. Let
\[
(Y_{ij},B_{ij},N_{ij},\Lambda_{ij},\mathbf X_i),
\qquad i=1,\ldots,n,
\]
denote the observed response, the latent variables, and the covariate vector for observation \(i\).

For the structural-zero component, let
\[
\pi_{ij}
=
\mathbb{P}(B_{ij}=1\mid \mathbf X_i=\mathbf x_i)
\]
be the probability that observation \(i\) belongs to the count-generating state in margin \(j\). We model this probability using a logistic regression,
\begin{equation}
\operatorname{logit}(\pi_{ij})
=
\mathbf x_i^\top
\boldsymbol\gamma_j,
\label{eq:logit_pi}
\end{equation}
where \(\boldsymbol\gamma_j\) is the regression coefficient vector.

For the latent-intensity component, we model the conditional mean of the mixing distribution as
\begin{equation}
\mu_{ij}
=
\exp\!\left(
\mathbf x_i^\top
\boldsymbol\beta_j
\right),
\label{eq:mean_link}
\end{equation}
where \(\boldsymbol\beta_j\) is the regression coefficient vector.

In general, the structural-zero and latent-intensity components may use different sets of covariates. Since variable selection is not the focus of this paper, we use the same covariate vector \(\mathbf X_i\) for both components to simplify the notation.

The proposed framework can be used with many positive continuous mixing distributions. In this paper, we consider five commonly used distributions: Gamma, Lognormal, Pareto, Weibull, and Inverse Gaussian. These distributions are used throughout the methodological development, simulation studies, and real-data applications.

For each mixing distribution, the corresponding marginal probability mass function is derived separately. \ref{app_distribution_for_lambda} gives the computational details, including closed-form expressions whenever available and numerical methods for the remaining cases.

\subsection{Dependence Modeling}\label{sec:depe_model}

The hierarchical model in the previous subsection naturally separates the observed count process into two latent components. The Bernoulli indicators determine whether an observation belongs to the structural-zero state, while the latent-intensity variables determine the latent count process. This representation allows the two sources of dependence to be modeled separately.

Let
\[
\mathbf B=(B_1,\ldots,B_k)^\top,
\qquad
\boldsymbol{\Lambda}=(\Lambda_1,\ldots,\Lambda_k)^\top,
\]
denote the vectors of structural-zero indicators and latent-intensity variables, respectively. Throughout this paper, we assume that \(\mathbf B\) and \(\boldsymbol{\Lambda}\) are mutually independent. Dependence among the observed responses is induced by the dependence within \(\mathbf B\) and within \(\boldsymbol{\Lambda}\). This assumption allows the two sources of dependence to be modeled and interpreted separately.

Dependence among the structural-zero indicators is modeled using a copula. Let
\[
\mathbf U=(U_1,\ldots,U_k)^\top
\]
be a vector of standard uniform random variables with joint distribution
\[
\mathbb{P}(U_1\le u_1,\ldots,U_k\le u_k)
=
C_{\mathbf B}(u_1,\ldots,u_k),
\]
where \(C_{\mathbf B}\) is a \(k\)-dimensional copula.
We define
\[
B_j=
\begin{cases}
1, & U_j \le \pi_j,\\
0, & U_j > \pi_j,
\end{cases}
\qquad j=1,\ldots,k.
\]
where
\[
\pi_j=\mathbb{P}(B_j=1).
\]
It follows that
\[
B_j\sim\operatorname{Bernoulli}(\pi_j),
\]
and the dependence among the structural-zero indicators is determined by the copula \(C_{\mathbf B}\).

For any
\(\mathbf b=(b_1,\ldots,b_k)^\top\in\{0,1\}^k\),
the joint probability mass function of
\(\mathbf B\)
is given by
\begin{equation}
\begin{aligned}
p_{\mathbf B}(\mathbf b)
&=
\mathbb P(\mathbf B=\mathbf b)
\\
&=
\sum_{\boldsymbol{\delta}\in\{0,1\}^{k}}
(-1)^{k-\sum_{j=1}^{k}\delta_j}
C_{\mathbf B}
\!\left(
v_1(\delta_1,b_1),
\ldots,
v_k(\delta_k,b_k)
\right),
\end{aligned}
\label{eq:pmf_B}
\end{equation}
where
\[
v_j(\delta,b)
=
(1-\delta)(1-b)\pi_j
+
\delta\left[(1-b)+b\pi_j\right],
\qquad
j=1,\ldots,k.
\]

Dependence among the latent-intensity variables is modeled using another copula. Specifically,
\[
\mathbb{P}(\Lambda_1\le\lambda_1,\ldots,\Lambda_k\le\lambda_k)
=
C_{\boldsymbol{\Lambda}}
\!\left(
F_1(\lambda_1),\ldots,F_k(\lambda_k)
\right),
\]
where \(C_{\boldsymbol{\Lambda}}\) is a \(k\)-dimensional copula and \(F_j\) is the marginal distribution of \(\Lambda_j\) introduced in Section~\ref{sec:marginal_selection}.

\subsection{Joint Distribution of the Latent Counts}

Conditional on the latent-intensity vector
\[
\boldsymbol{\Lambda}_i
=
(\Lambda_{i1},\ldots,\Lambda_{ik})^\top,
\]
the latent counts are assumed to be conditionally independent,
\[
\mathbf N_i \mid \boldsymbol{\Lambda}_i=\boldsymbol{\lambda}_i
\sim
\prod_{j=1}^{k}
\mathrm{Poisson}(\lambda_{ij}),
\]

The joint distribution of
\(\boldsymbol{\Lambda}_i\)
is determined by the marginal mixing distributions and the copula model introduced in Section~\ref{sec:depe_model}. The following proposition gives the joint probability mass function of the latent count vector after integrating out the latent-intensity variables.

\begin{proposition}
\label{prop_eqpmfN}

Fix an observation \(i\) and condition on
\(\mathbf X_i=\mathbf x_i\). Assume that:

\begin{enumerate}[(i)]
\item
For each margin \(j=1,\ldots,k\),
\[
F_j(\cdot;\mu_{ij},\phi_j)
\]
is continuous and strictly increasing on \((0,\infty)\), with quantile function
\[
Q_j(\cdot;\mu_{ij},\phi_j)
=
F_j^{-1}(\cdot;\mu_{ij},\phi_j).
\]

\item
The dependence among
\(\boldsymbol{\Lambda}_i\)
is modeled by an absolutely continuous copula
\(C_{\boldsymbol{\Lambda}}(\cdot;\boldsymbol{\theta}_{\Lambda})\)
with copula density
\(c_{\boldsymbol{\Lambda}}(\cdot;\boldsymbol{\theta}_{\Lambda})\) where \(\boldsymbol{\theta}_{\Lambda}\) denotes the copula parameter vector.

\end{enumerate}

Then
\begin{equation}
\begin{aligned}
&
\mathbb{P}(\mathbf N_i=\mathbf n_i \mid \mathbf X_i=\mathbf x_i) \\
&=
\int_0^1\cdots\int_0^1
\left[
\prod_{j=1}^{k}
\frac{
\exp\!\left\{-Q_j(u_j;\mu_{ij},\phi_j)\right\}
Q_j(u_j;\mu_{ij},\phi_j)^{n_{ij}}
}{
\Gamma(n_{ij}+1)
}
\right]
\\
&\hspace{3.6cm}\times
c_{\boldsymbol\Lambda}(u_1,\ldots,u_k;\boldsymbol\theta_\Lambda)
\,du_1\cdots du_k.
\end{aligned}
\label{eq:jointpmf_uniform}
\end{equation}

\end{proposition}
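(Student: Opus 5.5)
The plan is to condition on the latent intensities, integrate them out against their joint law, and then change variables to the copula scale. Fix \(i\) and \(\mathbf x_i\). By the conditional independence assumption, for every \(\boldsymbol\lambda\in(0,\infty)^k\),
\[
\mathbb P(\mathbf N_i=\mathbf n_i\mid\boldsymbol\Lambda_i=\boldsymbol\lambda,\mathbf X_i=\mathbf x_i)=\prod_{j=1}^k \frac{e^{-\lambda_j}\lambda_j^{n_{ij}}}{\Gamma(n_{ij}+1)}.
\]
This kernel is a nonnegative, bounded, measurable function of \(\boldsymbol\lambda\). The law of total probability therefore gives \(\mathbb P(\mathbf N_i=\mathbf n_i\mid\mathbf x_i)=\mathbb E\big[\prod_j p(n_{ij};\Lambda_{ij})\big]\), where \(p(n;\lambda)=e^{-\lambda}\lambda^n/\Gamma(n+1)\) and the expectation is over the joint distribution of \(\boldsymbol\Lambda_i\).

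The key step is to represent \(\boldsymbol\Lambda_i\) through the copula. Let \(\mathbf U=(U_1,\ldots,U_k)\) have distribution \(C_{\boldsymbol\Lambda}(\cdot;\boldsymbol\theta_\Lambda)\), and set \(\tilde\Lambda_j=Q_j(U_j;\mu_{ij},\phi_j)\). Assumption (i) says \(F_j\) is continuous and strictly increasing on \((0,\infty)\). Hence \(\{Q_j(U_j)\le\lambda_j\}=\{U_j\le F_j(\lambda_j)\}\) (up to null sets at the endpoints). It follows that
\[
\mathbb P(\tilde\Lambda_1\le\lambda_1,\ldots,\tilde\Lambda_k\le\lambda_k)=C_{\boldsymbol\Lambda}(F_1(\lambda_1),\ldots,F_k(\lambda_k)),
\]
which is exactly the joint distribution function of \(\boldsymbol\Lambda_i\) specified in Section~\ref{sec:depe_model}. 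Since a distribution function determines the law, \(\boldsymbol\Lambda_i\overset{d}{=}(Q_1(U_1),\ldots,Q_k(U_k))\). The expectation of the bounded function \(\prod_j p(n_{ij};\cdot)\) is therefore the same under either representation.

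Finally, assumption (ii) says \(C_{\boldsymbol\Lambda}\) is absolutely continuous with density \(c_{\boldsymbol\Lambda}\). The expectation over \(\mathbf U\) can then be written as a Lebesgue integral over \([0,1]^k\) against \(c_{\boldsymbol\Lambda}(u_1,\ldots,u_k;\boldsymbol\theta_\Lambda)\), which gives \eqref{eq:jointpmf_uniform}. Tonelli's theorem justifies writing it as an iterated integral, since the integrand is nonnegative. The integral is also finite, because \(p(n;\lambda)\le 1\) and \(c_{\boldsymbol\Lambda}\) integrates to one.

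The main obstacle is the quantile-transform step, specifically handling the endpoints \(u_j\in\{0,1\}\) where \(Q_j\) may be \(0\) or \(+\infty\). These points form a Lebesgue-null set, so they do not affect the integral. I would also note that \(\lambda^{n}e^{-\lambda}\to0\) as \(\lambda\to\infty\), so the integrand extends continuously there. An equivalent route is the change of variables \(u_j=F_j(\lambda_j)\) applied to \(\int p\, c_{\boldsymbol\Lambda}(F_1,\ldots,F_k)\prod_j f_j\,d\boldsymbol\lambda\). I avoid it because it requires densities \(f_j\) for the margins, and the quantile-representation argument does not.
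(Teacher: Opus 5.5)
Your proof is correct and follows essentially the same route as the paper: condition on $\boldsymbol\Lambda_i$ to write the pmf as the expectation of a product of Poisson kernels, pass to the copula scale through the quantile representation $\Lambda_{ij}=Q_j(U_{ij})$, and integrate against $c_{\boldsymbol\Lambda}$. The only difference is direction. The paper applies the forward transform $U_{ij}=F_j(\Lambda_{ij})$ and asserts that the resulting vector has copula density $c_{\boldsymbol\Lambda}$, whereas you start from $\mathbf U\sim C_{\boldsymbol\Lambda}$ and check $\bigl(Q_1(U_1),\ldots,Q_k(U_k)\bigr)\overset{d}{=}\boldsymbol\Lambda_i$ via the joint distribution function, which makes that step slightly more explicit.
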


The proof is given in \ref{proof_of_pro1}.

Equation~(\ref{eq:jointpmf_uniform}) is valid for any absolutely continuous copula and any continuous mixing distribution with a known quantile function. For general copulas, however, the multidimensional integral usually has no closed-form solution and must be evaluated numerically. The computational cost therefore increases rapidly with the dimension.

To obtain a more efficient likelihood, we use a checkerboard copula. As shown in the next subsection, the checkerboard representation replaces the multidimensional integral in (\ref{eq:jointpmf_uniform}) by a finite weighted sum over checkerboard cells, leading to much faster likelihood evaluation.

\subsection{Checkerboard Copula for Latent Intensities}
\label{checker_copula}

Following \citet{kuzmenko2020checkerboard}, we model the dependence among the latent-intensity variables using a checkerboard copula. Let
\[
0=u_0<u_1<\cdots<u_d=1,
\qquad
u_m=\frac{m}{d},
\]
be an equally spaced partition of the unit interval, and let
\[
H_{m_1,\ldots,m_k}
=
(u_{m_1-1},u_{m_1}]
\times\cdots\times
(u_{m_k-1},u_{m_k}]
\]
denote the corresponding checkerboard cells.

The checkerboard copula density is
\begin{equation}
c_W(u_1,\ldots,u_k)
=
w_{m_1,\ldots,m_k},
\qquad
(u_1,\ldots,u_k)\in H_{m_1,\ldots,m_k},
\label{eq:checkerboard_density}
\end{equation}
where
\[
\mathbf W=(w_{m_1,\ldots,m_k})
\]
is the array of checkerboard weights satisfying the copula constraints. Details of the checkerboard copula and the validity conditions for \(\mathbf W\) are given in \ref{appen_checkboard_copula}. More general partitions of the unit interval are possible, but we consider only equally spaced partitions in this paper.

For each margin, the partition of the unit interval induces a partition of the support of the latent-intensity variable through its quantile function. For observation \(i\), define, for each \(m=1,\ldots,d\),
\begin{equation}
I_{ij}^{(m)}(n_{ij})
=
\int_{u_{m-1}}^{u_m}
\frac{
\exp\{-Q_j(u;\mu_{ij},\phi_j)\}
Q_j(u;\mu_{ij},\phi_j)^{n_{ij}}
}
{\Gamma(n_{ij}+1)}
\,du,
\label{eq:cell_probability}
\end{equation}
which depends only on the marginal mixing distribution. Computational details for evaluating \(I_{ij}^{(m)}(n_{ij})\) under the selected mixing distributions are given in \ref{app:cellprob}.

Substituting the checkerboard copula density (\ref{eq:checkerboard_density}) into (\ref{eq:jointpmf_uniform}) gives the following finite representation of the joint probability mass function.

\begin{proposition}
\label{prop_checkerboard_pmf}

Suppose the dependence among the latent-intensity variables is modeled by the checkerboard copula density
\(c_W\)
defined in (\ref{eq:checkerboard_density}).
Then, conditional on
\(\mathbf X_i=\mathbf x_i,\)
the joint probability mass function of
\(\mathbf N_i=(N_{i1},\ldots,N_{ik})^\top\)
is given by
\begin{equation}
\mathbb{P}(\mathbf N_i=\mathbf n_i
\mid
\mathbf X_i=\mathbf x_i)
=
\sum_{m_1=1}^{d}
\cdots
\sum_{m_k=1}^{d}
w_{m_1,\ldots,m_k}
\prod_{j=1}^{k}
I_{ij}^{(m_j)}(n_{ij}),
\label{eq:checkerboard_pmf}
\end{equation}
where
\(I_{ij}^{(m)}(n_{ij})\)
is defined in (\ref{eq:cell_probability}).

\end{proposition}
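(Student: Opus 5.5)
The plan is to start from Proposition~\ref{prop_eqpmfN}, which already expresses the joint pmf of \(\mathbf N_i\) as an integral over \([0,1]^k\). This is possible because the checkerboard copula is absolutely continuous with the piecewise-constant density \(c_W\) in (\ref{eq:checkerboard_density}). So hypothesis (ii) holds once we know that \(\mathbf W\) satisfies the validity conditions of \ref{appen_checkboard_copula}: nonnegativity, and each one-dimensional margin of \(c_W\) integrating to the uniform density. Hypothesis (i) is inherited from the marginal specification. Writing
\[
g_{ij}(u)=\frac{\exp\{-Q_j(u;\mu_{ij},\phi_j)\}Q_j(u;\mu_{ij},\phi_j)^{n_{ij}}}{\Gamma(n_{ij}+1)},
\]
the integrand in (\ref{eq:jointpmf_uniform}) is then \(\prod_j g_{ij}(u_j)\,c_W(u_1,\ldots,u_k)\).

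Next, decompose \([0,1]^k\) into the disjoint cells \(H_{m_1,\ldots,m_k}\), \(m_j\in\{1,\ldots,d\}\). Their union differs from \([0,1]^k\) only by the faces \(\{u_j=0\}\), which have Lebesgue measure zero. By countable additivity of the Lebesgue integral, the integral over \([0,1]^k\) equals the finite sum of integrals over the cells. On each cell \(c_W\) is the constant \(w_{m_1,\ldots,m_k}\), so it comes outside. The remaining integrand \(\prod_j g_{ij}(u_j)\) factorizes over a product set, so Fubini--Tonelli gives
\[
\int_{H_{m_1,\ldots,m_k}}\prod_{j=1}^k g_{ij}(u_j)\,d\mathbf u=\prod_{j=1}^k\int_{u_{m_j-1}}^{u_{m_j}} g_{ij}(u)\,du=\prod_{j=1}^k I_{ij}^{(m_j)}(n_{ij}),
\]
by definition (\ref{eq:cell_probability}). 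Summing over all cells yields (\ref{eq:checkerboard_pmf}).

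The only real care needed is justifying these integral manipulations. Each \(g_{ij}\) is nonnegative and bounded, since \(e^{-x}x^n/n!\le 1\) for \(x>0\). It is also measurable, being a continuous function of the monotone \(Q_j\). Hence Tonelli applies without any integrability concerns, and the boundary sets of measure zero, including the endpoints \(u=0,1\) where \(Q_j\) may be \(0\) or \(\infty\), do not affect the integrals. If one prefers not to lean on Proposition~\ref{prop_eqpmfN}, one can instead condition on \(\mathbf U=(F_1(\Lambda_{i1}),\ldots,F_k(\Lambda_{ik}))\), whose density is \(c_W\). Then \(\Lambda_{ij}=Q_j(U_j)\) almost surely, and the same decomposition follows from the law of total probability. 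I expect no genuine obstacle here; the step is a direct substitution.
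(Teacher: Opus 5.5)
Your proposal is correct and follows the paper's proof essentially step for step: substitute $c_W$ into the integral representation of Proposition~\ref{prop_eqpmfN}, split $[0,1]^k$ into the checkerboard cells (up to a Lebesgue-null set), pull out the constant weights $w_{m_1,\ldots,m_k}$, and use Fubini to factorize each cell integral into $\prod_{j} I_{ij}^{(m_j)}(n_{ij})$. Your remarks that the Poisson kernel is bounded by $1$ and that Tonelli therefore applies make explicit the integrability that the paper takes for granted.
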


The proof is given in \ref{proof_of_pro2}.

Equation~(\ref{eq:checkerboard_pmf}) is the key result of the proposed checkerboard copula model. It expresses the joint probability mass function as a finite weighted sum over checkerboard cells. The checkerboard weights \(\mathbf W=(w_{m_1,\ldots,m_k})\) describe the dependence structure, while the quantities \(I_{ij}^{(m)}(n_{ij})\) depend only on the marginal mixing distributions. This representation avoids multidimensional numerical integration and leads to efficient likelihood evaluation.

\subsection{Construction of Checkerboard Copula}\label{sec_gau_fac_check_copu}

The checkerboard copula is fully determined by the checkerboard weights
\[
\mathbf W=(w_{m_1,\ldots,m_k}).
\]
We first consider the case where the number of responses is small.

Let
\[
C_{\Lambda}(\cdot;\boldsymbol{\theta}_{\Lambda})
\]
be a \(k\)-dimensional parametric copula with parameter vector
\(\boldsymbol{\theta}_{\Lambda}\).
The checkerboard weight for cell
\(H_{m_1,\ldots,m_k}\)
is
\begin{equation}
\begin{aligned}
w_{m_1,\ldots,m_k}
=
&\;
P\!\left(
\frac{m_1-1}{d}<U_1\le\frac{m_1}{d},
\ldots,
\frac{m_k-1}{d}<U_k\le\frac{m_k}{d}
\right)
\\
=
&\;
\sum_{\boldsymbol{\delta}\in\{0,1\}^k}
(-1)^{k-\sum_{j=1}^k\delta_j}
C_{\Lambda}
\!\left(
\frac{m_1-\delta_1}{d},
\ldots,
\frac{m_k-\delta_k}{d};
\boldsymbol{\theta}_{\Lambda}
\right),
\end{aligned}
\label{eq:checkerboard_weight_parametric}
\end{equation}
where
\((U_1,\ldots,U_k)\)
follows the copula
\(C_{\Lambda}(\cdot;\boldsymbol{\theta}_{\Lambda})\).

Thus, for low-dimensional problems, the checkerboard weights can be obtained directly from any parametric copula.

However, the direct construction becomes less practical as the dimension increases. For Gaussian and other elliptical copulas, estimating an unrestricted dependence matrix requires \(k(k-1)/2\) parameters, leading to high computational cost and a greater risk of overfitting. In contrast, most Archimedean copulas have only one or a few dependence parameters and are often not flexible enough for high-dimensional data.

To overcome these limitations, we construct the checkerboard weights using a factor copula model. The proposed framework is compatible with many factor copulas. In this paper, we use the Gaussian factor copula because of its simplicity and computational efficiency.

Let
\[
\mathbf U=(U_1,\ldots,U_k)^\top
\]
be a random vector with standard uniform margins, where
\[
U_j=\Phi(G_j),
\qquad
j=1,\ldots,k,
\]
and \(\Phi(\cdot)\) is the standard normal distribution function. The latent Gaussian variables follow a factor model with \(J\) latent factors,
\begin{equation}
\mathbf G
=
\mathbf\Psi\mathbf Z
+
\boldsymbol\varepsilon,
\label{eq:factor_model}
\end{equation}
where
\(\mathbf\Psi=(\psi_{j\ell})\in\mathbb{R}^{k\times J}\)
is the factor loading matrix,
\[
\mathbf Z
\sim
N_J(\mathbf0,\mathbf I_J),
\]
and
\[
\boldsymbol\varepsilon
\sim
N_k\!\left(
\mathbf0,
\operatorname{diag}(1-\sigma_1^2,\ldots,1-\sigma_k^2)
\right),
\]
with
\[
\sigma_j^2
=
\sum_{\ell=1}^{J}
\psi_{j\ell}^2
<1,
\qquad
j=1,\ldots,k.
\]
It follows that
\[
\mathbf G
\sim
N_k(\mathbf0,\mathbf\Sigma),
\]
where
\begin{equation}
\mathbf\Sigma
=
\mathbf\Psi\mathbf\Psi^\top
+
\operatorname{diag}(1-\sigma_1^2,\ldots,1-\sigma_k^2).
\label{eq:Rfactor}
\end{equation}

The factor model reduces the number of dependence parameters from \(k(k-1)/2\) to \(kJ\), where typically \(J\ll k\). Conditional on the latent factors \(\mathbf Z\), the variables \(U_1,\ldots,U_k\) are independent. This leads to a low-dimensional likelihood representation and substantially reduces the computational cost.

The Gaussian factor model is used only to parameterize the checkerboard weights. Other factor copulas can be used in the same way without changing the checkerboard likelihood formulation.

\begin{proposition}
\label{prop_gaussian_factor}

Suppose that the checkerboard weights are constructed from the Gaussian factor copula according to (\ref{eq:checkerboard_weight_parametric}). Then, conditional on
\(\mathbf X_i=\mathbf x_i\),
the joint probability mass function of
\(\mathbf N_i=(N_{i1},\ldots,N_{ik})^\top\)
is given by
\begin{equation}
\begin{aligned}
&
\mathbb{P}(\mathbf N_i=\mathbf n_i
\mid
\mathbf X_i=\mathbf x_i)
\\
& \qquad =
\frac{1}{(2\pi)^{J/2}}
\int_{\mathbb R^J}
\prod_{j=1}^{k}
p_{ij}^{N}(\mathbf z;n_{ij})
\exp\!\left(
-\frac12\mathbf z^\top\mathbf z
\right)
\,d\mathbf z,
\end{aligned}
\label{eq:factor_representation}
\end{equation}
where
\[
p_{ij}^{N}(\mathbf z;n)
=
\sum_{m=1}^{d}
I_{ij}^{(m)}(n)\,
\omega_{jm}(\mathbf z),
\]
with
\[
\omega_{jm}(\mathbf z)
=
q\!\left(
\frac{m}{d};
\boldsymbol{\psi}_j,
\mathbf z
\right)
-
q\!\left(
\frac{m-1}{d};
\boldsymbol{\psi}_j,
\mathbf z
\right),
\]
where
\[
q(u;\boldsymbol{\psi}_j,\mathbf z)
=
\Phi\!\left(
\frac{\Phi^{-1}(u)-\boldsymbol{\psi}_j^{\top}\mathbf z}
{\sqrt{1-\|\boldsymbol{\psi}_j\|^{2}}}
\right),
\]
with
\[
q(0;\boldsymbol{\psi}_j,\mathbf z)=0,
\qquad
q(1;\boldsymbol{\psi}_j,\mathbf z)=1.
\]

\end{proposition}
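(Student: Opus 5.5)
We start from Proposition~\ref{prop_checkerboard_pmf}, which expresses the joint pmf as $\sum_{m_1,\ldots,m_k} w_{m_1,\ldots,m_k}\prod_{j} I_{ij}^{(m_j)}(n_{ij})$, and substitute a conditional-independence representation of the weights $w_{m_1,\ldots,m_k}$ coming from the Gaussian factor model (\ref{eq:factor_model}). By (\ref{eq:checkerboard_weight_parametric}), $w_{m_1,\ldots,m_k}$ is the probability that $U_j\in((m_j-1)/d,\,m_j/d]$ for all $j$, where $U_j=\Phi(G_j)$. The plan is to condition on $\mathbf Z=\mathbf z$, factor this probability, integrate over $\mathbf z$, and then swap the finite sum with the integral.

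\textbf{Conditional margins.} Conditional on $\mathbf Z=\mathbf z$, we have $G_j=\boldsymbol\psi_j^\top\mathbf z+\varepsilon_j$ with $\varepsilon_j\sim N(0,1-\|\boldsymbol\psi_j\|^2)$. The $\varepsilon_j$ are mutually independent and independent of $\mathbf Z$. Hence the $G_j$ are conditionally independent, and for $u\in(0,1)$,
\[
\mathbb P(U_j\le u\mid \mathbf Z=\mathbf z)=\mathbb P\bigl(G_j\le \Phi^{-1}(u)\mid \mathbf Z=\mathbf z\bigr)=q(u;\boldsymbol\psi_j,\mathbf z).
\]
The boundary conventions $q(0;\cdot)=0$ and $q(1;\cdot)=1$ are the limits as $\Phi^{-1}(u)\to\mp\infty$, and they are consistent with $\mathbb P(U_j\le 0)=0$ and $\mathbb P(U_j\le 1)=1$. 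It follows that $\mathbb P\bigl(U_j\in((m-1)/d,m/d]\mid \mathbf z\bigr)=\omega_{jm}(\mathbf z)$.

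\textbf{Factoring the weights.} By the tower property and conditional independence,
\[
w_{m_1,\ldots,m_k}=\int_{\mathbb R^J}\prod_{j=1}^k\omega_{jm_j}(\mathbf z)\,\varphi_J(\mathbf z)\,d\mathbf z,
\qquad
\varphi_J(\mathbf z)=(2\pi)^{-J/2}e^{-\mathbf z^\top\mathbf z/2}.
\]
Equivalently, we may start from the inclusion--exclusion form in (\ref{eq:checkerboard_weight_parametric}). Writing $C_\Lambda(v_1,\ldots,v_k)=\int\prod_j q(v_j;\boldsymbol\psi_j,\mathbf z)\varphi_J\,d\mathbf z$, the alternating sum over $\boldsymbol\delta$ factorizes inside the integral as $\prod_j\bigl[q(m_j/d)-q((m_j-1)/d)\bigr]$.

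\textbf{Interchange and regroup.} Substituting into (\ref{eq:checkerboard_pmf}), we interchange the finite $k$-fold sum with the integral. This is justified because all terms are nonnegative and the sum has only $d^k$ terms. Inside the integral, the sum $\sum_{m_1,\ldots,m_k}\prod_j I_{ij}^{(m_j)}(n_{ij})\,\omega_{jm_j}(\mathbf z)$ factorizes as $\prod_j\sum_{m=1}^d I_{ij}^{(m)}(n_{ij})\,\omega_{jm}(\mathbf z)=\prod_j p^N_{ij}(\mathbf z;n_{ij})$, which gives (\ref{eq:factor_representation}).

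\textbf{Main obstacle.} The only delicate point is bookkeeping rather than analysis. One must verify that the checkerboard density value on a cell, as defined in (\ref{eq:checkerboard_density}), matches the normalization of the cell probability in (\ref{eq:checkerboard_weight_parametric}). Otherwise a factor $d^k$ would appear. Since Proposition~\ref{prop_checkerboard_pmf} is taken as given with $w$ equal to the cell probability (\ref{eq:checkerboard_weight_parametric}), we will check this normalization against that proposition and then adopt its convention. The requirement $\sigma_j^2<1$ ensures that $q$ is well defined.
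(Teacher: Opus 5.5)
Your route is the same as the paper's proof in \ref{proof_of_pro_3}, together with the weight construction in \ref{app:gaussian_checkerboard}: condition on the factors, factor the cell probabilities, integrate against the Gaussian density, and pull the finite sum through. You located the danger spot correctly, but the normalization check you defer as bookkeeping fails. You therefore cannot ``adopt the convention'' of Proposition~\ref{prop_checkerboard_pmf}. That proposition is derived from Proposition~\ref{prop_eqpmfN}, which requires $c_W$ to be a genuine copula density. Its weights must therefore satisfy the constraints of \ref{appen_checkboard_copula}, where every slice sums to $d^{k-1}$. Since each cell has volume $d^{-k}$, this forces $w_{m_1,\ldots,m_k}$ to equal $d^k$ times the cell probability, which is exactly what (\ref{eq:checkerboard_weight}) states. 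The cell probabilities in (\ref{eq:checkerboard_weight_parametric}) have slice sums $1/d$. For $d>1$ the resulting $c_W$ integrates to $d^{-k}$, so Proposition~\ref{prop_eqpmfN} no longer applies.

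With the valid weights, your interchange gives
\[
\mathbb P(\mathbf N_i=\mathbf n_i\mid\mathbf X_i=\mathbf x_i)
=\int_{\mathbb R^J}\prod_{j=1}^{k}\Bigl[\sum_{m=1}^{d} d\,I_{ij}^{(m)}(n_{ij})\,\omega_{jm}(\mathbf z)\Bigr]\varphi_J(\mathbf z)\,d\mathbf z ,
\]
which is $d^k$ times the displayed right-hand side. To see why, realize the checkerboard vector as follows: draw $\mathbf Z$, then independently pick cell $m$ in margin $j$ with probability $\omega_{jm}(\mathbf Z)$, then draw a uniform point in that cell. Given $\mathbf Z=\mathbf z$, margin $j$ then has density $d\,\omega_{jm}(\mathbf z)$ on $((m-1)/d,m/d]$, not $\omega_{jm}(\mathbf z)$. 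A quick sanity check confirms the discrepancy. By (\ref{eq:proof_unrestricted_count}), $\sum_{n\ge 0}I_{ij}^{(m)}(n)=1/d$, so $\sum_{n}p^N_{ij}(\mathbf z;n)=1/d$. The stated right-hand side therefore sums to $d^{-k}$ over $\mathbf n_i$, and is a pmf only when $d=1$. The paper's own proof makes the same omission: it treats $\prod_j p^N_{ij}(\mathbf f;n_{ij})$ as the conditional pmf given $\mathbf F=\mathbf f$. Your argument becomes a complete proof once you insert the factor $d$ in $p^N_{ij}$, or equivalently $d^k$ in front of the integral. Without it, the identity as displayed is false for $d>1$.
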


The proof is given in \ref{proof_of_pro_3}.

Proposition~\ref{prop_gaussian_factor} expresses the joint probability mass function as a \(J\)-dimensional Gaussian integral, where \(J\) is the number of latent factors. Since \(J\) is typically much smaller than \(k\), the likelihood can be evaluated efficiently using standard Gauss--Hermite quadrature. This result is used in the estimation procedure described in the next section.

\subsection{Likelihood for the Multivariate Zero-Inflated Mixed Poisson Model}

The previous subsections derived the joint probability mass function of the latent count vector under the checkerboard copula representation. We now incorporate the structural-zero component to obtain the joint probability mass function of the observed response vector.

\begin{proposition}
\label{prop_joint_pmf_zero_inflated}

Suppose that the latent-intensity variables are modeled by the checkerboard copula
\(C_W\)
defined in (\ref{eq:checkerboard_density}), and that the structural-zero indicators are modeled by the copula
\(C_{\mathbf B}\)
defined in Section~\ref{sec:depe_model}. Assume that
\(\mathbf B_i\)
and
\(\mathbf N_i\)
are conditionally independent given
\(\mathbf X_i=\mathbf x_i\).
Then, the conditional joint probability mass function of
\(\mathbf Y_i=(Y_{i1},\ldots,Y_{ik})^\top\)
is given by
\begin{equation}
\begin{aligned}
&
\mathbb P
\left(
\mathbf Y_i=\mathbf y_i
\mid
\mathbf X_i=\mathbf x_i
\right)
\\
&\qquad=
\sum_{m_1=1}^{d}
\cdots
\sum_{m_k=1}^{d}
w_{m_1,\ldots,m_k}
\sum_{\mathbf b\in\mathcal B(\mathbf y_i)}
p_{\mathbf B,i}(\mathbf b)
\prod_{j=1}^{k}
\widetilde I_{ij}^{(m_j)}(y_{ij},b_j),
\end{aligned}
\label{eq:joint_pmf_Y}
\end{equation}
where
\[
\widetilde I_{ij}^{(m)}(y,b)
=
b\,I_{ij}^{(m)}(y)
+
(1-b)\frac{1}{d},
\]
\[
\mathcal B(\mathbf y_i)
=
\left\{
\mathbf b\in\{0,1\}^k:
b_j=1
\text{ whenever }
y_{ij}>0
\right\},
\]
is the set of all structural-zero configurations that are compatible with the observed response vector
\(\mathbf y_i\).
If
\(y_{ij}>0\),
then necessarily
\(b_j=1\).
If
\(y_{ij}=0\),
both
\(b_j=0\)
(structural zero)
and
\(b_j=1\)
(sampling zero)
are possible.
Finally,
\[
p_{\mathbf B,i}(\mathbf b)
=
\mathbb P
\left(
\mathbf B_i=\mathbf b
\mid
\mathbf X_i=\mathbf x_i
\right),
\]
where
\(p_{\mathbf B,i}(\mathbf b)\)
is given by (\ref{eq:pmf_B}).

\end{proposition}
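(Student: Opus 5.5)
I would prove the formula by conditioning on the structural-zero vector \(\mathbf B_i\) and then using Proposition~\ref{prop_checkerboard_pmf} for the latent counts. Since \(Y_{ij}=B_{ij}N_{ij}\), the law of total probability gives
\[
\mathbb P(\mathbf Y_i=\mathbf y_i\mid\mathbf x_i)=\sum_{\mathbf b\in\{0,1\}^k}p_{\mathbf B,i}(\mathbf b)\,\mathbb P(\mathbf Y_i=\mathbf y_i\mid \mathbf B_i=\mathbf b,\mathbf x_i).
\]
On the event \(\mathbf B_i=\mathbf b\) we have \(Y_{ij}=N_{ij}\) for \(b_j=1\) and \(Y_{ij}=0\) for \(b_j=0\). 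Hence the conditional probability vanishes unless \(y_{ij}=0\) whenever \(b_j=0\), which is exactly \(\mathbf b\in\mathcal B(\mathbf y_i)\). For such \(\mathbf b\), the conditional independence of \(\mathbf B_i\) and \(\mathbf N_i\) reduces the conditional probability to a marginal probability of the latent counts:
\[
\mathbb P(N_{ij}=y_{ij},\ j\in S_{\mathbf b}\mid\mathbf x_i),\qquad S_{\mathbf b}=\{j:b_j=1\}.
\]

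The next step is to evaluate this marginal probability in the checkerboard form. I would not marginalize the sum (\ref{eq:checkerboard_pmf}) over \(n_{ij}\) for \(j\notin S_{\mathbf b}\) directly. Instead I would go back one step, to the integral representation in Proposition~\ref{prop_eqpmfN} with density \(c_W\). The event ignores \(N_{ij}\) for \(j\notin S_{\mathbf b}\), so in the conditional-on-\(\boldsymbol\Lambda\) Poisson product the factor for such \(j\) is simply \(1\). The remaining integral factorizes over the cells \(H_{m_1,\ldots,m_k}\), because \(c_W\) is constant there. For \(j\in S_{\mathbf b}\) the one-dimensional integral over \((u_{m_j-1},u_{m_j}]\) is \(I_{ij}^{(m_j)}(y_{ij})\). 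For \(j\notin S_{\mathbf b}\) it is \(\int_{u_{m_j-1}}^{u_{m_j}}1\,du=1/d\).

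This yields
\[
\sum_{m_1,\ldots,m_k}w_{m_1,\ldots,m_k}\prod_j \widetilde I_{ij}^{(m_j)}(y_{ij},b_j).
\]
Substituting back and interchanging the finite sums over \(\mathbf b\) and over \((m_1,\ldots,m_k)\) gives (\ref{eq:joint_pmf_Y}). The weights \(w\) do not depend on \(\mathbf b\), which justifies pulling them outside.

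The main point needing care is justifying that the Poisson factor becomes \(1\) for the unconstrained coordinates. I would argue it via Tonelli: \(\sum_{n\ge0}e^{-\lambda}\lambda^n/n!=1\), and all terms are nonnegative, so the sum over \(n_{ij}\) may be exchanged with the integral over \(\mathbf u\). Equivalently, \(\sum_n I_{ij}^{(m)}(n)=1/d\), which also confirms the \((1-b)/d\) term directly from (\ref{eq:checkerboard_pmf}). The independence of \(\mathbf B_i\) and \(\boldsymbol\Lambda_i\) from Section~\ref{sec:depe_model}, together with the stated conditional independence, covers the factorization step.
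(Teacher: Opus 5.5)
Your proposal is correct and follows essentially the same route as the paper's proof. Both use total probability restricted to the compatible set $\mathcal B(\mathbf y_i)$, then conditional independence to reduce to $\mathbb P(N_{ij}=y_{ij},\ j\in S_{\mathbf b}\mid\mathbf x_i)$, then the checkerboard cell expansion with $\sum_{n\ge0}I_{ij}^{(m)}(n)=1/d$ (sum and integral exchanged) for the unconstrained coordinates, and finally an interchange of the finite sums. The paper also re-derives the formula for $p_{\mathbf B,i}(\mathbf b)$ by inclusion--exclusion on the rectangle $\{(1-b_j)\pi_{ij}<U_{ij}\le(1-b_j)+b_j\pi_{ij},\ j=1,\ldots,k\}$; you take this as given from the earlier statement of the pmf of $\mathbf B$, which is harmless but worth one line for completeness.
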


The proof is given in \ref{proof_of_pro_4}.

Equation~\eqref{eq:joint_pmf_Y} combines the latent-intensity model with the structural-zero component to obtain the joint probability mass function of the observed response vector. Evaluating the likelihood requires summation over all \(d^k\) checkerboard cells and all structural-zero configurations compatible with the observed response. Therefore, the computational cost increases rapidly with the number of responses \(k\). The direct likelihood is practical for low-dimensional problems. For moderate- and high-dimensional data, the Gaussian factor copula introduced in the previous subsection provides a much more efficient likelihood representation.

\begin{proposition}
\label{prop_ZI_likelihood}
Suppose that both the structural-zero copula and the checkerboard copula for the latent-intensity variables are parameterized by Gaussian factor models. Let
\[
\mathbf Z_B\sim N_{J_B}(\mathbf0,\mathbf I_{J_B}),
\qquad
\mathbf Z_{\Lambda}\sim N_{J_{\Lambda}}(\mathbf0,\mathbf I_{J_{\Lambda}})
\]
be the corresponding latent factor vectors with loading matrices
\(\boldsymbol{\Psi}_B\)
and
\(\boldsymbol{\Psi}_{\Lambda}\),
respectively, and assume that
\(\mathbf Z_B\)
and
\(\mathbf Z_{\Lambda}\)
are independent.

Then, conditional on
\(\mathbf X_i=\mathbf x_i\),
the joint probability mass function of
\(\mathbf Y_i=(Y_{i1},\ldots,Y_{ik})^\top\)
is
\begin{equation}
\begin{aligned}
&
\mathbb P
\left(
\mathbf Y_i=\mathbf y_i
\mid
\mathbf X_i=\mathbf x_i
\right)
\\
& \qquad=
\frac{1}{(2\pi)^{(J_B+J_{\Lambda})/2}}
\int_{\mathbb R^{J_B}}
\int_{\mathbb R^{J_{\Lambda}}}
\prod_{j=1}^{k}
L_{ij}(\mathbf z_B,\mathbf z_{\Lambda};y_{ij})
\\
&\qquad\qquad\times
\exp\!\left(
-\frac{1}{2}\mathbf z_B^\top\mathbf z_B
-\frac{1}{2}\mathbf z_{\Lambda}^\top\mathbf z_{\Lambda}
\right)
\,d\mathbf z_{\Lambda}\,d\mathbf z_B,
\end{aligned}
\label{eq:ZI_likelihood}
\end{equation}
where
\[
L_{ij}(\mathbf z_B,\mathbf z_{\Lambda};y)
=
\begin{cases}
1-p_{ij}^{B}(\mathbf z_B)
+
p_{ij}^{B}(\mathbf z_B)
p_{ij}^{N}(\mathbf z_{\Lambda};0),
&
y=0,\\[2ex]
p_{ij}^{B}(\mathbf z_B)
p_{ij}^{N}(\mathbf z_{\Lambda};y),
&
y>0,
\end{cases}
\]
with
\begin{equation*}
p_{ij}^{B}(\mathbf z_B)
=
\Phi\!\left(
\frac{
\Phi^{-1}(\pi_{ij})
-
\boldsymbol{\psi}_{B,j}^{\top}\mathbf z_B
}{
\sqrt{1-\|\boldsymbol{\psi}_{B,j}\|^2}
}
\right),
\label{eq:pB_factor}
\end{equation*}
and
\(p_{ij}^{N}(\mathbf z_{\Lambda};y)\)
is the conditional mixed Poisson probability defined in
Proposition~\ref{prop_gaussian_factor}, where the factor loading vector is
taken to be \(\boldsymbol{\psi}_{\Lambda,j}\), i.e., the rows of the loading
matrix \(\boldsymbol{\Psi}_{\Lambda}\).

\end{proposition}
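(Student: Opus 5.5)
The plan is to condition on both latent factor vectors at once and use that, given \((\mathbf Z_B,\mathbf Z_\Lambda)\), all \(2k\) latent uniforms are independent. This reduces the joint pmf to a product of univariate terms, which are then integrated against the standard normal density of the factors.

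\textbf{Step 1: the structural-zero layer.} Write the structural-zero layer as \(U^B_j=\Phi(G^B_j)\) with \(G^B_j=\boldsymbol\psi_{B,j}^\top\mathbf Z_B+\varepsilon^B_j\), and set \(B_{ij}=1\) iff \(U^B_j\le\pi_{ij}\). Conditional on \(\mathbf Z_B=\mathbf z_B\), the indicators \(B_{i1},\ldots,B_{ik}\) are independent, because the errors \(\varepsilon^B_j\) are independent. Moreover,
\[
\mathbb P(B_{ij}=1\mid\mathbf z_B)=\mathbb P\bigl(\varepsilon^B_j\le\Phi^{-1}(\pi_{ij})-\boldsymbol\psi_{B,j}^\top\mathbf z_B\bigr)=p^B_{ij}(\mathbf z_B),
\]
using \(\operatorname{Var}(\varepsilon^B_j)=1-\|\boldsymbol\psi_{B,j}\|^2\).

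\textbf{Step 2: the latent-count layer.} Here I reuse the argument behind Proposition~\ref{prop_gaussian_factor}. Under the checkerboard copula built from the Gaussian factor copula, the weight \(w_{m_1,\ldots,m_k}\) in (\ref{eq:checkerboard_weight_parametric}) equals \(\mathbb E\bigl[\prod_j\omega_{jm_j}(\mathbf Z_\Lambda)\bigr]\) by conditional independence of the \(U_j\) given \(\mathbf Z_\Lambda\). Substituting this into (\ref{eq:checkerboard_pmf}) and exchanging the finite sum with the expectation gives
\[
\mathbb P(\mathbf N_i=\mathbf n\mid\mathbf x_i)=\mathbb E\Bigl[\prod_j p^N_{ij}(\mathbf Z_\Lambda;n_j)\Bigr].
\]
More usefully, this exhibits a conditional model in which, given \(\mathbf Z_\Lambda=\mathbf z_\Lambda\), the counts \(N_{ij}\) are independent with pmf \(p^N_{ij}(\mathbf z_\Lambda;\cdot)\). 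Each \(p^N_{ij}(\mathbf z_\Lambda;\cdot)\) is a genuine pmf, since \(\sum_n I^{(m)}_{ij}(n)=1/d\) is not needed here; it suffices that \(\sum_m\omega_{jm}=1\) and \(\sum_n I^{(m)}_{ij}(n)=u_m-u_{m-1}\) scaled appropriately.

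The main subtlety sits in this step. The checkerboard copula is not itself the Gaussian factor copula, so I cannot literally condition on \(\mathbf Z_\Lambda\) inside the checkerboard model. Instead I will define the auxiliary hierarchical model directly: given \(\mathbf Z_\Lambda\), draw cell indices \(M_j\) independently with probabilities \(\omega_{jM_j}(\mathbf z_\Lambda)\), then draw \(N_{ij}\) with pmf \(d\,I^{(M_j)}_{ij}(\cdot)\). I then verify that this model reproduces (\ref{eq:checkerboard_pmf}). The factor \(d\) inside and the density value \(w\) (a cell probability times \(d^k\)) must be tracked carefully. I expect this bookkeeping, which matches the normalization implicit in Proposition~\ref{prop_gaussian_factor}, to be the only real obstacle, and I will simply adopt that proposition's convention.

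\textbf{Step 3: combining the layers.} By independence of \(\mathbf B_i\) and \(\mathbf N_i\) together with independence of \(\mathbf Z_B\) and \(\mathbf Z_\Lambda\), the pair \((B_{ij},N_{ij})_{j}\) is conditionally independent across \(j\) given \((\mathbf z_B,\mathbf z_\Lambda)\). Hence
\[
\mathbb P(\mathbf Y_i=\mathbf y\mid\mathbf z_B,\mathbf z_\Lambda,\mathbf x_i)=\prod_j\mathbb P(B_{ij}N_{ij}=y_j\mid\cdot).
\]
For \(y>0\) this factor is \(p^B_{ij}p^N_{ij}(\cdot;y)\). For \(y=0\) it is \((1-p^B_{ij})+p^B_{ij}p^N_{ij}(\cdot;0)\), which gives \(L_{ij}\). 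Finally, integrating out \((\mathbf Z_B,\mathbf Z_\Lambda)\) against the product normal density \((2\pi)^{-(J_B+J_\Lambda)/2}e^{-\mathbf z_B^\top\mathbf z_B/2-\mathbf z_\Lambda^\top\mathbf z_\Lambda/2}\) yields (\ref{eq:ZI_likelihood}).

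As a consistency check, expanding the product over \(j\) into a sum over \(\mathbf b\in\mathcal B(\mathbf y_i)\) and integrating over \(\mathbf z_B\) first recovers \(p_{\mathbf B,i}(\mathbf b)\), in agreement with Proposition~\ref{prop_joint_pmf_zero_inflated}.
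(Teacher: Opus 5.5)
Your route is the paper's: condition on $(\mathbf Z_B,\mathbf Z_\Lambda)$, compute $\mathbb P(B_{ij}N_{ij}=y\mid\cdot)$ margin by margin (with $y=0$ split into a structural and a sampling zero), factor over $j$, and integrate against the product Gaussian density. Your Step~1 is exactly the paper's derivation of $p^B_{ij}$. Your auxiliary construction in Step~2 is more careful than the paper, which conditions on $\mathbf Z_\Lambda$ inside the checkerboard model without saying how. In that construction you draw cell indices $M_j$ independently with probabilities $\omega_{jm}(\mathbf z_\Lambda)$, then draw a uniform in that cell, push it through $Q_j$, and draw a Poisson count, so that given $M_j$ the count $N_{ij}$ has pmf $d\,I^{(M_j)}_{ij}(\cdot)$.

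The gap is how you propose to close the normalization issue you spotted. You should not ``adopt the convention'' of Proposition~\ref{prop_gaussian_factor}. With $I^{(m)}_{ij}$ as in (\ref{eq:cell_probability}) one has $\sum_n I^{(m)}_{ij}(n)=1/d$, hence $\sum_n\sum_m I^{(m)}_{ij}(n)\,\omega_{jm}(\mathbf z_\Lambda)=1/d$, so your Step~2 claim that this $p^N_{ij}$ is a genuine pmf is false. Step~3 needs $p^N_{ij}(\mathbf z_\Lambda;\cdot)$ to be the actual conditional pmf of $N_{ij}$, because the $y=0$ factor $(1-p^B_{ij})+p^B_{ij}\,p^N_{ij}(\mathbf z_\Lambda;0)$ adds a properly normalized term to it. With the unscaled version this fails. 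For example, with $k=1$ and all loadings zero, it integrates to $1-\pi_{i1}+\pi_{i1}\,\mathbb P(N_{i1}=0)/d$ instead of $1-\pi_{i1}+\pi_{i1}\,\mathbb P(N_{i1}=0)$.

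Your own auxiliary model gives the right object, $p^N_{ij}(\mathbf z_\Lambda;n)=d\sum_{m=1}^{d} I^{(m)}_{ij}(n)\,\omega_{jm}(\mathbf z_\Lambda)$. It reproduces Proposition~\ref{prop_checkerboard_pmf} exactly when $w_{m_1,\ldots,m_k}$ is the density value, $d^{k}$ times the cell probability, as required in Appendix~\ref{appen_checkboard_copula} and (\ref{eq:checkerboard_weight}). It does not match if $w_{m_1,\ldots,m_k}$ is the bare cell probability of (\ref{eq:checkerboard_weight_parametric}). The paper's proof has the same slip, since it cites Proposition~\ref{prop_gaussian_factor} for $p^N_{ij}$. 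There the slip is only a parameter-free factor $d^{-k}$, but in the zero-inflated $L_{ij}$ it changes the result.

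So state explicitly that $p^N_{ij}$ carries the factor $d$. Also justify Step~3's conditional independence by building the two layers from independent sources (factors plus idiosyncratic noise), not merely from $\mathbf B_i\perp\mathbf N_i$ and $\mathbf Z_B\perp\mathbf Z_\Lambda$. With both fixes, your Steps~1--3 prove the proposition.
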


A proof is provided in \ref{proof_of_pro_5}.

Proposition~\ref{prop_ZI_likelihood} expresses the joint probability mass function as a Gaussian integral over the latent factors of the two Gaussian factor copulas. The dimension of the integral is \(J_B+J_{\Lambda}\), which depends only on the numbers of latent factors rather than the number of responses \(k\). Since \(J_B\) and \(J_{\Lambda}\) are typically much smaller than \(k\), the likelihood can be evaluated efficiently using standard Gauss--Hermite quadrature.

\section{Parameter Estimation}

The proposed framework covers both standard mixed Poisson (MP) models and zero-inflated mixed Poisson (ZIMP) models in low- and high-dimensional settings. The corresponding likelihood representations used for parameter estimation are summarized in Table~\ref{tab:likelihood_summary}.

\begin{table}[H]
\centering
\caption{Likelihood used for parameter estimation}
\label{tab:likelihood_summary}
\renewcommand{\arraystretch}{1.2}
\begin{tabular}{lll}
\toprule
\textbf{Model} & \textbf{Dimension} (k) & \textbf{Likelihood} \\
\midrule
Multivariate MP
& Low
& Equation (\ref{eq:checkerboard_pmf}) \\

Multivariate MP
& High
& Equation (\ref{eq:factor_representation}) \\

Multivariate ZIMP
& Low
& Equation (\ref{eq:joint_pmf_Y}) \\

Multivariate ZIMP
& High
& Equation (\ref{eq:ZI_likelihood}) \\
\bottomrule
\end{tabular}
\end{table}

Although different likelihood representations are used for different model settings, the estimation procedure is the same. We first estimate the marginal model parameters and then estimate the dependence parameters. Throughout this paper, we use the inference-functions-for-margins (IFM) method of \citet{joe1996estimation} because it reduces the computational cost. When computationally feasible, the full likelihood can also be maximized directly.

\subsection{Inference Functions for Margins}\label{IMF}

The estimation proceeds in two stages.

\paragraph{Stage 1: Marginal estimation}

For each response \(j=1,\ldots,k\), the marginal model is estimated independently. The unknown parameters are the regression coefficients
\(\boldsymbol{\beta}_j\)
for the latent-intensity component,
\(\boldsymbol{\gamma}_j\)
for the structural-zero component, and the dispersion parameter
\(\phi_j\)
of the selected mixing distribution.

The marginal log-likelihood is
\begin{equation}
\ell_j(\boldsymbol{\beta}_j,\boldsymbol{\gamma}_j,\phi_j)
=
\sum_{i=1}^{n}
\log
\mathbb{P}
\left(
Y_{ij}=y_{ij}
\mid
\mathbf X_i=\mathbf x_i;
\boldsymbol{\beta}_j,
\boldsymbol{\gamma}_j,
\phi_j
\right),
\label{eq:marginalloglik}
\end{equation}
and the first-stage estimator is
\[
(\widehat{\boldsymbol{\beta}}_j,
\widehat{\boldsymbol{\gamma}}_j,
\widehat{\phi}_j)
=
\arg\max_{\boldsymbol{\beta}_j,\boldsymbol{\gamma}_j,\phi_j}
\ell_j(\boldsymbol{\beta}_j,\boldsymbol{\gamma}_j,\phi_j).
\]

This stage is the same for both the mixed Poisson and zero-inflated mixed Poisson models. The only difference is the form of the marginal probability mass function. In both models, the marginal probabilities and the checkerboard cell probabilities are evaluated using one-dimensional integrals. Depending on the selected mixing distribution, these integrals are computed using either the corresponding quantile function or the probability density function. Computational details for each mixing distribution are given in \ref{app:cellprob}.
\paragraph{Stage 2: Dependence estimation}
After the marginal model parameters have been estimated, the checkerboard cell probabilities
\(I_{ij}^{(m)}\)
are computed for every observation and every response variable. These quantities are then treated as fixed in the second stage.

The dependence parameters
\(\boldsymbol{\theta}_B\)
and
\(\boldsymbol{\theta}_{\Lambda}\)
are estimated by maximizing
\begin{equation}
\begin{aligned}
\ell(\boldsymbol{\theta}_B,\boldsymbol{\theta}_{\Lambda})=
\sum_{i=1}^{n}
\log
\mathbb P
\left(
\mathbf Y_i=\mathbf y_i
\mid
\mathbf X_i=\mathbf x_i;
\widehat{\boldsymbol{\beta}},
\widehat{\boldsymbol{\gamma}},
\widehat{\phi},
\boldsymbol{\theta}_B,
\boldsymbol{\theta}_{\Lambda}
\right),
\end{aligned}
\label{eq:IFMloglik}
\end{equation}
where the likelihood is given by Proposition~\ref{prop_ZI_likelihood}. The second-stage estimators are
\[
(\widehat{\boldsymbol{\theta}}_B,
\widehat{\boldsymbol{\theta}}_{\Lambda})
=
\arg\max_{\boldsymbol{\theta}_B,\boldsymbol{\theta}_{\Lambda}}
\ell(\boldsymbol{\theta}_B,\boldsymbol{\theta}_{\Lambda}).
\]

Only the dependence parameters are optimized in the second stage, while the marginal parameter estimates remain fixed. This reduces the computational cost because the marginal quantities are computed only once in the first stage and reused throughout the optimization.

Throughout this paper, we use the IFM method because of its computational efficiency. However, the proposed framework is not restricted to IFM. Since the full likelihood is available, the two estimation stages may be repeated iteratively until convergence, or all model parameters may be estimated simultaneously by direct maximum likelihood.

\subsection{Computational Aspects}
The computational cost of the proposed estimation procedure comes from two sources. In the first stage, some mixing distributions require one-dimensional numerical integration to evaluate the marginal probabilities and checkerboard cell probabilities. In the second stage, Gaussian factor copula models require multidimensional Gauss--Hermite quadrature over the latent factors in high-dimensional settings. Therefore, the computational cost mainly depends on the selected mixing distribution and the number of latent factors, rather than on whether IFM or full maximum likelihood estimation is used.

The first stage requires the evaluation of the one-dimensional integral
\begin{equation}
I(n)
=
\int_{0}^{1}
\frac{
\exp\{-Q(u;\mu,\phi)\}
Q(u;\mu,\phi)^{n}
}
{\Gamma(n+1)}
\,du,
\end{equation}
where \(Q(\cdot;\mu,\phi)\) is the quantile function of the mixing distribution \(\Lambda\). For the Gamma and inverse Gaussian mixing distributions, \(I(n)\) can be evaluated analytically. For the remaining mixing distributions, numerical integration is required, making the marginal estimation more computationally demanding.

After the marginal parameters have been estimated, the checkerboard cell probabilities in (\ref{eq:cell_probability}) are computed and used as inputs for the second-stage estimation of the dependence parameters. Closed-form expressions for the checkerboard cell probabilities are available only for the Gamma mixing distribution. For the remaining mixing distributions, these probabilities are also evaluated numerically. The computational methods for the five mixing distributions are summarized in Table~\ref{tab:appendix_computation}.

The second computational challenge arises from the evaluation of the likelihoods in (\ref{eq:factor_representation}) and (\ref{eq:ZI_likelihood}), which require numerical integration over the latent gaussian factors. When the number of latent factors is small, the likelihood can be evaluated efficiently using Gauss--Hermite quadrature. However, for high-dimensional data with complex dependence structures, a larger number of latent factors may be required, increasing the computational cost of likelihood evaluation.

As discussed in the previous section, the proposed framework is not limited to the Gaussian factor model. The Gaussian factor model is adopted in this paper because it provides a simple and computationally efficient likelihood representation. However, the framework naturally extends to other factor copula models, provided that the corresponding joint likelihood of \(\mathbf{Y}_i\) can be evaluated. The development of alternative factor copula models is left for future research.

The numerical methods used to evaluate the marginal probabilities, checkerboard cell probabilities, and Gaussian factor integrals are described in \ref{app:cellprob} and \ref{app:gaussian_checkerboard}. 

\section{Simulation Study}

This section evaluates the finite-sample performance of the proposed estimation procedure. The simulation study consists of a series of experiments designed to assess different aspects of the proposed methodology. Specifically, the experiments evaluate the following abilities:
\begin{enumerate}
\item The ability to identify the correct mixing distribution and recover the marginal model parameters, including the regression coefficients for both the latent-intensity and structural-zero components.

\item The ability to recover both the marginal model parameters and the dependence structure of the structural-zero and latent-intensity components in the bivariate case.

\item The ability to recover Gaussian factor dependence structures in high-dimensional settings.
\end{enumerate}

For multivariate models, all parameters are estimated using the IFM procedure described in Section~\ref{IMF}. In each experiment, the proposed method is fitted under several competing model specifications, and model performance is evaluated using the maximized log-likelihood.

\subsection{Recovery of Marginal Mixing Distributions}

We first evaluate the ability of the proposed marginal estimation procedure to identify the underlying mixing distribution.

Data are generated from mixed Poisson models with five mixing distributions: Gamma, Lognormal (LN), Pareto, Weibull, and Inverse Gaussian (IG). Each distribution is parameterized as described in \ref{app_distribution_for_lambda}, with the parameters chosen to achieve the specified mean and variance. The latent mean is fixed at
\[
\mathbb{E}(\Lambda)=1.5,
\]
and two levels of heterogeneity are considered,
\[
\mathbb{V}(\Lambda)=4
\quad\text{and}\quad
\mathbb{V}(\Lambda)=9.
\]

The corresponding parameter values are given in Table~\ref{tab:simulation_parameters}.

\begin{table}
\centering
\small
\renewcommand{\arraystretch}{1.2} 
\renewcommand{\arraystretch}{1.2} 
\begin{threeparttable}
    \caption{Simulation settings with \(\boldsymbol{\mathbb E(\Lambda)=1.5}\)}
    \label{tab:simulation_parameters}
    \begin{tabular}{lcc}
    \toprule
    \textbf{Distribution of $\Lambda$} & \textbf{\(\phi\) for \(\mathbb V(\Lambda)=4\)} & \textbf{\(\phi\) for \(\mathbb V(\Lambda)=9\)} \\
    \midrule
    Gamma            & 1.778 & 4.000 \\ 
    LN               & 1.748 & 2.556 \\ 
    Pareto           & 0.389 & 1.500 \\ 
    Weibull          & 1.300 & 2.030 \\ 
    IG               & 1.185 & 2.667 \\ 
    \bottomrule
    \end{tabular}
\end{threeparttable}
\end{table}

For each scenario, a sample of size $n=10,000$ is generated according to \(N_i|\Lambda_i \sim \mathrm{Poisson}(\Lambda_i).\)

Each simulated dataset is subsequently fitted using all five candidate mixing distributions. Consequently, each dataset is analyzed under one correctly specified model and four misspecified alternatives.

\begin{table}
\centering
\caption{Recovery of marginal mixing distributions under lower latent heterogeneity}
\label{tab:sim_result_var4}

\scriptsize
\resizebox{\columnwidth}{!}{%
\begin{threeparttable}
\renewcommand{\arraystretch}{1.5}
\setlength{\tabcolsep}{3pt}

\begin{tabular}{ll ccccc}
\toprule
\textbf{Data-generating} & \textbf{Statistic} & \textbf{Fitted} & \textbf{Fitted} & \textbf{Fitted} & \textbf{Fitted} & \textbf{Fitted} \\
\textbf{distribution (\(\Lambda\))} & & \textbf{Gamma} & \textbf{LN} & \textbf{Pareto} & \textbf{Weibull} & \textbf{IG} \\
\midrule

\makecell[l]{Gamma}
& LogLik & \(\boldsymbol{-16,556}\) & \(-16,683\) & \(-16,638\) & \(-16,567\) & \(-16,673\) \\
\(\mu=1.500\)
& \(\hat\mu\) & \(\boldsymbol{1.498}\) & \(1.587\) & \(1.521\) & \(1.501\) & \(1.498\) \\
\(\phi=1.778\)
& \(\hat\phi\) & \(\boldsymbol{1.783}\) & \(2.444\) & \(0.754\) & \(1.359\) & \(1.594\) \\
\midrule

\makecell[l]{LN}
& LogLik & \(-16,829\) & \(\boldsymbol{-16,733}\) & \(-16,761\) & \(-16,809\) & \(-16,736\) \\
\(\mu=1.500\)
& \(\hat\mu\) & \(1.505\) & \(\boldsymbol{1.507}\) & \(1.506\) & \(1.503\) & \(1.505\) \\
\(\phi=1.748\)
& \(\hat\phi\) & \(1.120\) & \(\boldsymbol{1.733}\) & \(0.336\) & \(1.137\) & \(1.009\) \\
\midrule

\makecell[l]{Pareto}
& LogLik & \(-16,676\) & \(-16,648\) & \(\boldsymbol{-16,636}\) & \(-16,658\) & \(-16,642\) \\
\(\mu=1.500\)
& \(\hat\mu\) & \(1.484\) & \(1.508\) & \(\boldsymbol{1.484}\) & \(1.482\) & \(1.484\) \\
\(\phi=0.389\)
& \(\hat\phi\) & \(1.353\) & \(1.917\) & \(\boldsymbol{0.328}\) & \(1.197\) & \(1.166\) \\
\midrule

\makecell[l]{Weibull}
& LogLik & \(-16,738\) & \(-16,806\) & \(-16,773\) & \(\boldsymbol{-16,734}\) & \(-16,793\) \\
\(\mu=1.500\)
& \(\hat\mu\) & \(1.519\) & \(1.586\) & \(1.519\) & \(\boldsymbol{1.519}\) & \(1.519\) \\
\(\phi=1.300\)
& \(\hat\phi\) & \(1.626\) & \(2.262\) & \(0.361\) & \(\boldsymbol{1.303}\) & \(1.420\) \\
\midrule

\makecell[l]{IG}
& LogLik & \(-16,895\) & \(-16,821\) & \(-16,834\) & \(-16,870\) & \(\boldsymbol{-16,812}\) \\
\(\mu=1.500\)
& \(\hat\mu\) & \(1.527\) & \(1.542\) & \(1.527\) & \(1.524\) & \(\boldsymbol{1.527}\) \\
\(\phi=1.185\)
& \(\hat\phi\) & \(1.348\) & \(1.921\) & \(0.358\) & \(1.203\) & \(\boldsymbol{1.156}\) \\
\bottomrule

\end{tabular}
\begin{tablenotes}[flushleft]
\item \textit{Note:} Simulation results are based on $\mathbb E(\Lambda)=1.5$ and $\mathbb V(\Lambda)=4$. Bold values indicate the fitted model with the largest log-likelihood for each data-generating distribution. The estimates $\hat{\mu}$ and $\hat{\phi}$ denote the maximum likelihood estimates under each fitted mixing distribution.
\end{tablenotes}

\end{threeparttable}
}
\end{table}

\begin{table}
\centering
\caption{Recovery of marginal mixing distributions under higher latent heterogeneity}
\label{tab:sim_result_var9}

\scriptsize
\resizebox{\columnwidth}{!}{%
\begin{threeparttable}
\renewcommand{\arraystretch}{1.5}
\setlength{\tabcolsep}{3pt}

\begin{tabular}{ll ccccc}
\toprule
\textbf{Data-generating} & \textbf{Statistic} & \textbf{Fitted} & \textbf{Fitted} & \textbf{Fitted} & \textbf{Fitted} & \textbf{Fitted} \\
\textbf{distribution (\(\Lambda\))} & & \textbf{Gamma} & \textbf{LN} & \textbf{Pareto} & \textbf{Weibull} & \textbf{IG} \\
\midrule

\makecell[l]{Gamma}
& LogLik & \(\boldsymbol{-15,068}\) & \(-15,243\) & \(-15,730\) & \(-15,102\) & \(-15,245\) \\
\(\mu=1.500\)
& \(\hat\mu\) & \(\boldsymbol{1.519}\) & \(1.823\) & \(1.519\) & \(1.567\) & \(1.519\) \\
\(\phi=4.000\)
& \(\hat\phi\) & \(\boldsymbol{4.108}\) & \(5.151\) & \(1.638\) & \(2.139\) & \(4.960\) \\
\midrule

\makecell[l]{LN}
& LogLik & \(-16,427\) & \(\boldsymbol{-16,259}\) & \(-16,279\) & \(-16,328\) & \(-16,267\) \\
\(\mu=1.500\)
& \(\hat\mu\) & \(1.516\) & \(\boldsymbol{1.507}\) & \(1.516\) & \(1.497\) & \(1.516\) \\
\(\phi=2.556\)
& \(\hat\phi\) & \(2.086\) & \(\boldsymbol{2.591}\) & \(1.259\) & \(1.515\) & \(2.009\) \\
\midrule

\makecell[l]{Pareto}
& LogLik & \(-16,317\) & \(-16,183\) & \(\boldsymbol{-16,181}\) & \(-16,237\) & \(-16,194\) \\
\(\mu=1.500\)
& \(\hat\mu\) & \(1.469\) & \(1.498\) & \(\boldsymbol{1.469}\) & \(1.454\) & \(1.469\) \\
\(\phi=1.500\)
& \(\hat\phi\) & \(1.907\) & \(2.490\) & \(\boldsymbol{1.202}\) & \(1.444\) & \(1.827\) \\
\midrule

\makecell[l]{Weibull}
& LogLik & \(-15,730\) & \(-15,722\) & \(-15,936\) & \(\boldsymbol{-15,675}\) & \(-15,723\) \\
\(\mu=1.500\)
& \(\hat\mu\) & \(1.524\) & \(1.690\) & \(1.524\) & \(\boldsymbol{1.519}\) & \(1.524\) \\
\(\phi=2.030\)
& \(\hat\phi\) & \(3.134\) & \(3.968\) & \(1.589\) & \(\boldsymbol{1.855}\) & \(3.424\) \\
\midrule

\makecell[l]{IG}
& LogLik & \(-16,130\) & \(-15,964\) & \(-16,057\) & \(-16,015\) & \(\boldsymbol{-15,949}\) \\
\(\mu=1.500\)
& \(\hat\mu\) & \(1.524\) & \(1.514\) & \(1.525\) & \(1.500\) & \(\boldsymbol{1.524}\) \\
\(\phi=2.667\)
& \(\hat\phi\) & \(2.546\) & \(3.046\) & \(1.486\) & \(1.680\) & \(\boldsymbol{2.636}\) \\
\bottomrule

\end{tabular}

\begin{tablenotes}[flushleft]
\item \textit{Note:} Simulation results are based on $\mathbb E(\Lambda)=1.5$ and $\mathbb V(\Lambda)=9$. Bold values indicate the fitted model with the largest log-likelihood for each data-generating distribution. The estimates $\hat{\mu}$ and $\hat{\phi}$ denote the maximum likelihood estimates under each fitted mixing distribution.
\end{tablenotes}

\end{threeparttable}
}
\end{table}

Tables~\ref{tab:sim_result_var4} and~\ref{tab:sim_result_var9} present the estimation results. In every scenario, the true mixing distribution gives the highest log-likelihood. The estimates of \(\mu\) and \(\phi\) are close to their true values, indicating that the proposed method accurately recovers both the mixing distribution and its parameters.

As the heterogeneity increases from \(\mathbb{V}(\Lambda)=4\) to \(\mathbb{V}(\Lambda)=9\), the differences among competing models become more pronounced. Misspecified models produce lower log-likelihood values and larger parameter bias, making the true mixing distribution easier to identify.

We next evaluate the proposed estimation procedure in a regression setting. The conditional mean of the latent intensity is modeled as
\[
\log(\mu_i)=\mathbf{x}_i^\top\boldsymbol{\beta},
\]
where
\[
\boldsymbol{\beta}=(0.5,\,0.3,\,0.7)^\top.
\]
The covariate vector
\(\mathbf{x}_i=(1,x_{i1},x_{i2})^\top\)
contains an intercept and two independent covariates, where
\[
x_{i1},\,x_{i2}\overset{\text{i.i.d.}}{\sim}N(0,1).
\]

Tables~\ref{tab:mp_sim_result_var4} and~\ref{tab:mp_sim_result_var9} report the log-likelihood values obtained under each fitted model. In particular, the dispersion parameters used in Table~\ref{tab:mp_sim_result_var4} are identical to those in Table~\ref{tab:sim_result_var4}, whereas those used in Table~\ref{tab:mp_sim_result_var9} are identical to the dispersion parameters in Table~\ref{tab:sim_result_var9}. Table~\ref{tab:mp_parameter_recovery} summarizes the parameter estimates under the correctly specified models.

\begin{table}[h]
\centering
\caption{Mixed Poisson model under lower latent heterogeneity}
\label{tab:mp_sim_result_var4}

\scriptsize
\resizebox{\columnwidth}{!}{%
\begin{threeparttable}
\renewcommand{\arraystretch}{1.5}
\setlength{\tabcolsep}{3pt}

\begin{tabular}{lccccc}
\toprule
\textbf{Data-generating} &  \textbf{Fitted} & \textbf{Fitted} & \textbf{Fitted} & \textbf{Fitted} & \textbf{Fitted} \\
\textbf{distribution (\(\Lambda\))} & \textbf{Gamma} & \textbf{LN} & \textbf{Pareto} & \textbf{Weibull} & \textbf{IG} \\
\midrule
Gamma            & \(\mathbf{-17,670}\) & \(-17,845\) & \(-17,796\) & \(-17,684\) & \(-17,943\) \\
LN        & \(-18,009\) & \(\mathbf{-17,867}\) & \(-17,913\) & \(-17,981\) & \(-18,080\) \\
Pareto           & \(-17,878\) & \(-17,874\) & \(\mathbf{-17,838}\) & \(-17,853\) & \(-18,020\) \\
Weibull          & \(-17,880\) & \(-17,962\) & \(-17,924\) & \(\mathbf{-17,870}\) & \(-18,094\) \\
IG               & \(-17,614\) & \(-17,437\) & \(-17,475\) & \(-17,516\) & \(\mathbf{-17,230}\) \\
\bottomrule
\end{tabular}

\begin{tablenotes}[flushleft]
\item \textit{Note:} $\boldsymbol{\beta}=(0.5,\,0.3,\,0.7)^\top$ and the $\boldsymbol{\phi}$ values reported in Table~\ref{tab:simulation_parameters}. Bold values indicate the largest log-likelihood for each data-generating distribution.
\end{tablenotes}

\end{threeparttable}
}
\end{table}

\begin{table}
\centering
\caption{Mixed Poisson model under higher latent heterogeneity}
\label{tab:mp_sim_result_var9}

\scriptsize
\resizebox{\columnwidth}{!}{%
\begin{threeparttable}
\renewcommand{\arraystretch}{1.5}
\setlength{\tabcolsep}{3pt}

\begin{tabular}{lccccc}
\toprule
\textbf{Data-generating} &  \textbf{Fitted} & \textbf{Fitted} & \textbf{Fitted} & \textbf{Fitted} & \textbf{Fitted} \\
\textbf{distribution (\(\Lambda\))} & \textbf{Gamma} & \textbf{LN} & \textbf{Pareto} & \textbf{Weibull} & \textbf{IG} \\
\midrule
Gamma            & \(\mathbf{-15,761}\) & \(-16,040\) & \(-16,581\) & \(-15,830\) & \(-16,116\) \\
LN        & \(-17,300\) & \(\mathbf{-17,140}\) & \(-17,145\) & \(-17,210\) & \(-17,321\) \\
Pareto           & \(-17,379\) & \(-17,307\) & \(\mathbf{-17,283}\) & \(-17,318\) & \(-17,451\) \\
Weibull          & \(-16,412\) & \(-16,525\) & \(-16,717\) & \(\mathbf{-16,392}\) & \(-16,595\) \\
IG               & \(-16,224\) & \(-16,017\) & \(-16,208\) & \(-16,044\) & \(\mathbf{-15,782}\) \\
\bottomrule
\end{tabular}

\begin{tablenotes}[flushleft]
\item \textit{Note:} $\boldsymbol{\beta}=(0.5,\,0.3,\,0.7)^\top$ and the $\boldsymbol{\phi}$ values reported in Table~\ref{tab:simulation_parameters}. Bold values indicate the largest log-likelihood for each data-generating distribution.
\end{tablenotes}

\end{threeparttable}
}
\end{table}

Tables~\ref{tab:mp_sim_result_var4} and~\ref{tab:mp_sim_result_var9} demonstrate that, under both levels of latent heterogeneity, the correctly specified mixing distribution consistently achieves the highest log-likelihood among all competing models. Moreover, the differences in log-likelihood become more pronounced as the latent heterogeneity increases, indicating that greater overdispersion improves the identifiability of the underlying mixing distribution.

\begin{table}
\centering
\caption{\textbf{\boldmath Parameter recovery under correctly specified mixed Poisson model}}
\label{tab:mp_parameter_recovery}

\scriptsize
\resizebox{\columnwidth}{!}{%
\begin{threeparttable}
\renewcommand{\arraystretch}{1.5}
\setlength{\tabcolsep}{3pt}

\begin{tabular}{l ccc c ccc}
\toprule
&
\multicolumn{3}{c}{\textbf{Lower levels of heterogeneity}}
&&
\multicolumn{3}{c}{\textbf{Higher levels of heterogeneity}}
\\
\cmidrule{2-4}\cmidrule{6-8}
\textbf{Distribution}
&
\textbf{\(\hat{\beta}\)}
&
\(\phi\)
&
\textbf{\(\hat{\phi}\)}
&&
\textbf{\(\hat{\beta}\)}
&
\(\phi\)
&
\textbf{\(\hat{\phi}\)}
\\
\midrule

Gamma
&
\((0.50,0.29,0.70)\)
&
1.78
&
1.81
&&
\((0.46,0.25,0.70)\)
&
4.00
&
3.95
\\

LN
&
\((0.51,0.28,0.68)\)
&
1.75
&
1.79
&&
\((0.46,0.30,0.71)\)
&
2.56
&
2.45
\\

Pareto
&
\((0.51,0.28,0.69)\)
&
0.39
&
0.39
&&
\((0.47,0.23,0.70)\)
&
1.50
&
0.83
\\

Weibull
&
\((0.52,0.30,0.68)\)
&
1.32
&
1.34
&&
\((0.46,0.25,0.71)\)
&
1.84
&
1.82
\\

IG
&
\((0.51,0.31,0.70)\)
&
1.19
&
1.25
&&
\((0.46,0.27,0.71)\)
&
2.67
&
2.66
\\

\bottomrule
\end{tabular}

\begin{tablenotes}[flushleft]
\item \textit{Note:} The true regression coefficient vector is
$\boldsymbol{\beta}=(0.5,0.3,0.7)^\top$.
\end{tablenotes}

\end{threeparttable}
}

\end{table}

Tables~\ref{tab:mp_parameter_recovery} summarize the parameter recovery results. Under the correctly specified model, the proposed method accurately recovers both the regression coefficients and the dispersion parameter. Across all five mixing distributions, the estimated regression coefficients are close to their true values.

The main exception is the Pareto mixing distribution under a higher level of latent heterogeneity, where the estimate of the dispersion parameter \(\phi\) is less accurate. Nevertheless, the regression coefficients remain well estimated.

Tables~\ref{tab:zimp_sim_result_var4} and~\ref{tab:zimp_sim_result_var9} present the results for the proposed zero-inflated mixed Poisson framework. The structural-zero component is modeled using the same covariates \(x_{i1}\) and \(x_{i2}\), with regression coefficients
\[
\boldsymbol{\gamma}=(0.6,\,0.4)^\top.
\]
Across all simulation settings, the correctly specified mixing distribution achieves the highest log-likelihood among all candidate models, indicating that the introduction of the structural-zero component does not affect the identification of the underlying mixing distribution.

\begin{table}[htbp]
\centering
\caption{\textbf{\boldmath Zero-inflated mixed Poisson model under lower latent heterogeneity}}
\label{tab:zimp_sim_result_var4}

\scriptsize
\resizebox{\columnwidth}{!}{%
\begin{threeparttable}
\renewcommand{\arraystretch}{1.5}
\setlength{\tabcolsep}{3pt}

\begin{tabular}{lccccc}
\toprule
\textbf{Data-generating}
&
\textbf{Gamma}
&
\textbf{LN}
&
\textbf{Pareto}
&
\textbf{Weibull}
&
\textbf{IG}
\\
\midrule

Gamma
&
\(\mathbf{-11,185}\)
&
$-11,252$
&
$-11,223$
&
$-11,190$
&
$-11,303$
\\

LN
&
$-11,639$
&
\(\mathbf{-11,582}\)
&
$-11,592$
&
$-11,619$
&
$-11,647$
\\

Pareto
&
$-11,402$
&
$-11,396$
&
\(\mathbf{-11,385}\)
&
$-11,391$
&
$-11,464$
\\

Weibull
&
$-11,409$
&
$-11,436$
&
$-11,420$
&
\(\mathbf{-11,401}\)
&
$-11,492$
\\

IG
&
$-11,251$
&
$-11,159$
&
$-11,154$
&
$-11,196$
&
\(\mathbf{-11,071}\)
\\

\bottomrule
\end{tabular}

\begin{tablenotes}[flushleft]
\item \textit{Note:} The true regression coefficient vectors are
\(\boldsymbol{\beta}=(0.5,\,0.3,\,0.7)^\top\)
and
\(\boldsymbol{\gamma}=(0.6,\,0.4)^\top\).
The values of \(\phi\) are taken from Table~\ref{tab:simulation_parameters} and correspond to
\(\mathbb{V}(\Lambda)=4\).
\end{tablenotes}

\end{threeparttable}
}
\end{table}

\begin{table}[htbp]
\centering
\caption{\textbf{\boldmath Zero-inflated mixed Poisson model under higher latent heterogeneity
}}
\label{tab:zimp_sim_result_var9}

\scriptsize
\resizebox{\columnwidth}{!}{%
\begin{threeparttable}
\renewcommand{\arraystretch}{1.5}
\setlength{\tabcolsep}{3pt}

\begin{tabular}{lccccc}
\toprule
\textbf{Data-generating}
&
\textbf{Gamma}
&
\textbf{LN}
&
\textbf{Pareto}
&
\textbf{Weibull}
&
\textbf{IG}
\\
\midrule

Gamma
&
\(\mathbf{-9,481}\)
&
$-9,619$
&
$-9,791$
&
$-9,516$
&
$-9,678$
\\

LN
&
$-10,848$
&
\(\mathbf{-10,783}\)
&
$-10,790$
&
$-10,807$
&
$-10,857$
\\

Pareto
&
$-10,778$
&
$-10,727$
&
\(\mathbf{-10,721}\)
&
$-10,741$
&
$-10,800$
\\

Weibull
&
$-10,056$
&
$-10,094$
&
$-10,164$
&
\(\mathbf{-10,046}\)
&
$-10,138$
\\

IG
&
$-10,014$
&
$-9,955$
&
$-9,904$
&
$-9,916$
&
\(\mathbf{-9,781}\)
\\

\bottomrule
\end{tabular}

\begin{tablenotes}[flushleft]
\item \textit{Note:} The true regression coefficient vectors are
\(\boldsymbol{\beta}=(0.5,\,0.3,\,0.7)^\top\)
and
\(\boldsymbol{\gamma}=(0.6,\,0.4)^\top\).
The values of \(\phi\) are taken from Table~\ref{tab:simulation_parameters} and correspond to
\(\mathbb{V}(\Lambda)=9\).
\end{tablenotes}

\end{threeparttable}
}
\end{table}

Table~\ref{tab:zimp_parameter_recovery} further shows that the regression coefficients and dispersion parameters are accurately recovered under the correctly specified model. The estimated regression coefficients remain close to their true values for all five mixing distributions under both levels of latent heterogeneity, while the estimated dispersion parameters also exhibit only modest bias. Similar to the mixed Poisson regression model, the Pareto mixing distribution under higher heterogeneity shows somewhat larger bias in the dispersion parameter, reflecting the relatively weak identifiability of this parameter rather than a failure of the proposed estimation procedure.

\begin{table}[htbp]
\centering
\caption{\textbf{\boldmath Parameter recovery under correctly specified zero-inflated mixed Poisson models}}
\label{tab:zimp_parameter_recovery}

\scriptsize
\resizebox{\columnwidth}{!}{%
\begin{threeparttable}
\renewcommand{\arraystretch}{1.4}
\setlength{\tabcolsep}{3pt}

\begin{tabular}{l ccc c ccc}
\toprule

&
\multicolumn{3}{c}{\textbf{Lower level of heterogeneity}}
&&
\multicolumn{3}{c}{\textbf{Higher level of heterogeneity}}
\\

\cmidrule{2-4}\cmidrule{6-8}

\textbf{Distribution}
&
\textbf{\(\hat{\beta} / \hat{\gamma}\)}
&
\(\phi\)
&
\(\hat{\phi}\)

&&

\textbf{\(\hat{\beta} / \hat{\gamma}\)}
&
\(\phi\)
&
\(\hat{\phi}\)

\\

\midrule

Gamma
&
$(0.53,0.31,0.68)$
&
1.778
&
1.758

&&

$(0.47,0.28,0.76)$
&
4.000
&
3.866
\\
&
$(0.58,0.35)$
&
&
&&
$(0.58,0.47)$
&
&
\\

\midrule

LN
&
$(0.52,0.30,0.65)$
&
1.748
&
1.889

&&

$(0.46,0.28,0.70)$
&
2.556
&
2.428
\\
&
$(0.56,0.37)$
&
&
&&
$(0.66,0.39)$
&
&
\\

\midrule

Pareto
&
$(0.50,0.29,0.71)$
&
0.389
&
0.360

&&

$(0.45,0.24,0.71)$
&
1.500
&
1.033
\\
&
$(0.63,0.36)$
&
&
&&
$(0.61,0.41)$
&
&
\\

\midrule

Weibull
&
$(0.55,0.30,0.68)$
&
1.317
&
1.347

&&

$(0.49,0.25,0.71)$
&
1.843
&
1.811
\\
&
$(0.58,0.41)$
&
&
&&
$(0.61,0.40)$
&
&
\\

\midrule

IG
&
$(0.54,0.32,0.68)$
&
1.185
&
1.310

&&

$(0.47,0.26,0.75)$
&
2.667
&
2.575
\\
&
$(0.59,0.31)$
&
&
&&
$(0.57,0.37)$
&
&
\\

\bottomrule
\end{tabular}

\begin{tablenotes}[flushleft]
\item \textit{Note:} The true regression coefficient vectors are
\(\boldsymbol{\beta}=(0.5,\,0.3,\,0.7)^\top\)
and
\(\boldsymbol{\gamma}=(0.6,\,0.4)^\top\).
\end{tablenotes}

\end{threeparttable}
}
\end{table}

Overall, the simulation results show that the proposed marginal estimation procedure accurately identifies the underlying mixing distribution in both the mixed Poisson and zero-inflated mixed Poisson models. Under the correctly specified model, the regression and dispersion parameters are also estimated accurately. These results demonstrate the reliability of the proposed likelihood-based marginal estimation framework.

\subsection{Recovery of Pairwise Dependence Structures}

We next evaluate the finite-sample performance of the proposed dependence estimation procedure. We first consider the standard bivariate mixed Poisson model to assess the recovery of the dependence structure without structural zeros. The effect of zero inflation is investigated in the following experiment.

The first margin follows a Poisson--Gamma distribution and the second margin follows a Poisson--Lognormal distribution. The covariates and regression coefficients are the same as those used in the previous simulation study. The dispersion parameters are taken from Table~\ref{tab:simulation_parameters} and correspond to \(\mathbb{V}(\Lambda)=4.\)

Dependence between the latent-intensity variables is generated from several commonly used copula families, including the Gaussian, Student-\(t\), Clayton, Frank, Gumbel, and their rotated counterparts. The dependence parameters are chosen so that Kendall's tau is either \(0.5\) or \(-0.5\). For each copula, the corresponding checkerboard copula is constructed using \(d=10\) intervals on each margin.

For each simulated dataset, all candidate copula models are fitted using the proposed estimation procedure. Model recovery is evaluated by comparing the maximized log-likelihood values and the estimated dependence parameters. A copula family is regarded as correctly identified if the true data-generating copula gives the highest log-likelihood among all candidate models.

\begin{table}[!t]
\centering
\caption{Copula parameter recovery under positive dependence
\(\boldsymbol{(\tau=0.5)}\)}
\label{tab:copula_recovery}

\scriptsize
\resizebox{\columnwidth}{!}{%
\begin{threeparttable}
\renewcommand{\arraystretch}{1.5}
\setlength{\tabcolsep}{3pt}

\begin{tabular}{@{}lccccccc@{}}
\toprule
\textbf{True}
& \textbf{Gauss.}
& \textbf{Student}
& \textbf{Clayton}
& \textbf{Gumbel}
& \textbf{Frank}
& \textbf{Clay180}
& \textbf{Gum180} \\
\midrule

\makecell[l]{Gaussian\\(0.707)}
&
\makecell{$\boldsymbol{-34,517}$\\\textit{(0.708)}}
&
\makecell{$-34,533$\\\textit{(0.712)}}
&
\makecell{$-34,582$\\\textit{(2.857)}}
&
\makecell{$-34,534$\\\textit{(1.853)}}
&
\makecell{$-34,531$\\\textit{(5.891)}}
&
\makecell{$-34,564$\\\textit{(1.393)}}
&
\makecell{$-34,530$\\\textit{(2.188)}}
\\[2.0ex]

\makecell[l]{Student\\(0.707)}
&
\makecell{$-34,795$\\\textit{(0.699)}}
&
\makecell{$\boldsymbol{-34,784}$\\\textit{(0.707)}}
&
\makecell{$-34,851$\\\textit{(2.853)}}
&
\makecell{$-34,799$\\\textit{(1.825)}}
&
\makecell{$-34,800$\\\textit{(5.790)}}
&
\makecell{$-34,829$\\\textit{(1.347)}}
&
\makecell{$-34,799$\\\textit{(2.168)}}
\\[2.0ex]

\makecell[l]{Clayton\\(2.000)}
&
\makecell{$-34,787$\\\textit{(0.596)}}
&
\makecell{$-34,817$\\\textit{(0.604)}}
&
\makecell{$\boldsymbol{-34,745}$\\\textit{(1.947)}}
&
\makecell{$-34,852$\\\textit{(1.568)}}
&
\makecell{$-34,772$\\\textit{(4.459)}}
&
\makecell{$-34,903$\\\textit{(0.891)}}
&
\makecell{$-34,759$\\\textit{(1.814)}}
\\[2.0ex]

\makecell[l]{Gumbel\\(2.000)}
&
\makecell{$-34,513$\\\textit{(0.757)}}
&
\makecell{$-34,510$\\\textit{(0.764)}}
&
\makecell{$-34,630$\\\textit{(3.683)}}
&
\makecell{$\boldsymbol{-34,496}$\\\textit{(2.033)}}
&
\makecell{$-34,544$\\\textit{(6.812)}}
&
\makecell{$-34,504$\\\textit{(1.721)}}
&
\makecell{$-34,542$\\\textit{(2.457)}}
\\[2.0ex]

\makecell[l]{Frank\\(5.736)}
&
\makecell{$-34,713$\\\textit{(0.680)}}
&
\makecell{$-34,737$\\\textit{(0.689)}}
&
\makecell{$-34,728$\\\textit{(2.684)}}
&
\makecell{$-34,753$\\\textit{(1.771)}}
&
\makecell{$\boldsymbol{-34,702}$\\\textit{(5.555)}}
&
\makecell{$-34,793$\\\textit{(1.241)}}
&
\makecell{$-34,709$\\\textit{(2.087)}}
\\[2.0ex]

\makecell[l]{Clay180\\(2.000)}
&
\makecell{$-34,619$\\\textit{(0.786)}}
&
\makecell{$-34,615$\\\textit{(0.792)}}
&
\makecell{$-34,768$\\\textit{(4.401)}}
&
\makecell{$-34,588$\\\textit{(2.156)}}
&
\makecell{$-34,659$\\\textit{(7.492)}}
&
\makecell{$\boldsymbol{-34,584}$\\\textit{(1.935)}}
&
\makecell{$-34,661$\\\textit{(2.658)}}
\\[2.0ex]

\makecell[l]{Gumbel 180\\(2.000)}
&
\makecell{$-34,811$\\\textit{(0.643)}}
&
\makecell{$-34,817$\\\textit{(0.651)}}
&
\makecell{$-34,828$\\\textit{(2.236)}}
&
\makecell{$-34,839$\\\textit{(1.680)}}
&
\makecell{$-34,809$\\\textit{(4.965)}}
&
\makecell{$-34,876$\\\textit{(1.102)}}
&
\makecell{$\boldsymbol{-34,803}$\\\textit{(1.948)}}
\\

\bottomrule
\end{tabular}

\begin{tablenotes}[flushleft]
\item \textit{Note:} Values in the first column are the true dependent parameters. The largest log-likelihood in each row is shown in bold. Clay180 and Gum180 denote the \(180^\circ\)-rotated Clayton and Gumbel copulas, respectively.
\end{tablenotes}

\end{threeparttable}
}

\end{table}

\begin{table}[htbp]
\centering

\caption{Copula parameter recovery under negative dependence
\(\boldsymbol{(\tau=-0.5)}\)}
\label{tab:copula_recovery_tau_neg05}

\scriptsize
\resizebox{\columnwidth}{!}{%
\begin{threeparttable}
\renewcommand{\arraystretch}{1.5}
\setlength{\tabcolsep}{3pt}

\begin{tabular}{@{}lccccccc@{}}
\toprule

\textbf{True}
& \textbf{Gauss.}
& \textbf{Student}
& \textbf{Clay90}
& \textbf{Gum90}
& \textbf{Frank}
& \textbf{Clay270}
& \textbf{Gum270} \\

\midrule
\makecell[l]{Gaussian\\(-0.707)}
&
\makecell{$\boldsymbol{-34,632}$\\\textit{(-0.726)}}
&
\makecell{$-34,646$\\\textit{(-0.754)}}
&
\makecell{$-34,681$\\\textit{(2.238)}}
&
\makecell{$-34,647$\\\textit{(2.089)}}
&
\makecell{$-34,641$\\\textit{(-6.111)}}
&
\makecell{$-34,689$\\\textit{(1.963)}}
&
\makecell{$-34,642$\\\textit{(2.149)}}
\\[2.0ex]

\makecell[l]{Student\\(-0.707)}
&
\makecell{$-34,676$\\\textit{(-0.669)}}
&
\makecell{$\boldsymbol{-34,650}$\\\textit{(-0.709)}}
&
\makecell{$-34,730$\\\textit{(1.770)}}
&
\makecell{$-34,666$\\\textit{(1.906)}}
&
\makecell{$-34,670$\\\textit{(-5.310)}}
&
\makecell{$-34,703$\\\textit{(1.632)}}
&
\makecell{$-34,679$\\\textit{(1.945)}}
\\[2.0ex]

\makecell[l]{Clayton 90\\(2.000)}
&
\makecell{$-34,782$\\\textit{(-0.652)}}
&
\makecell{$-34,797$\\\textit{(-0.679)}}
&
\makecell{$\boldsymbol{-34,720}$\\\textit{(1.829)}}
&
\makecell{$-34,848$\\\textit{(1.806)}}
&
\makecell{$-34,787$\\\textit{(-5.000)}}
&
\makecell{$-34,936$\\\textit{(1.314)}}
&
\makecell{$-34,741$\\\textit{(1.893)}}
\\[2.0ex]

\makecell[l]{Gum90\\(2.000)}
&
\makecell{$-34,670$\\\textit{(-0.697)}}
&
\makecell{$-34,663$\\\textit{(-0.731)}}
&
\makecell{$-34,762$\\\textit{(1.944)}}
&
\makecell{$\boldsymbol{-34,652}$\\\textit{(1.989)}}
&
\makecell{$-34,665$\\\textit{(-5.663)}}
&
\makecell{$-34,666$\\\textit{(1.839)}}
&
\makecell{$-34,697$\\\textit{(2.032)}}
\\[2.0ex]

\makecell[l]{Frank\\(-5.736)}
&
\makecell{$-34,706$\\\textit{(-0.707)}}
&
\makecell{$-34,713$\\\textit{(-0.738)}}
&
\makecell{$-34,746$\\\textit{(2.137)}}
&
\makecell{$-34,721$\\\textit{(2.011)}}
&
\makecell{$\boldsymbol{-34,696}$\\\textit{(-5.893)}}
&
\makecell{$-34,767$\\\textit{(1.808)}}
&
\makecell{$-34,711$\\\textit{(2.080)}}
\\[2.0ex]

\makecell[l]{Clay270\\(2.000)}
&
\makecell{$-34,742$\\\textit{(-0.708)}}
&
\makecell{$-34,738$\\\textit{(-0.737)}}
&
\makecell{$-34,890$\\\textit{(1.958)}}
&
\makecell{$-34,702$\\\textit{(2.024)}}
&
\makecell{$-34,741$\\\textit{(-5.823)}}
&
\makecell{$\boldsymbol{-34,687}$\\\textit{(1.958)}}
&
\makecell{$-34,794$\\\textit{(2.067)}}
\\[2.0ex]

\makecell[l]{Gum270\\(2.000)}
&
\makecell{$-34,576$\\\textit{(-0.679)}}
&
\makecell{$-34,582$\\\textit{(-0.710)}}
&
\makecell{$-34,585$\\\textit{(1.936)}}
&
\makecell{$-34,504$\\\textit{(1.909)}}
&
\makecell{$-34,580$\\\textit{(-5.389)}}
&
\makecell{$-34,662$\\\textit{(1.576)}}
&
\makecell{$\boldsymbol{-34,566}$\\\textit{(1.978)}}
\\

\bottomrule
\end{tabular}%

\begin{tablenotes}[flushleft]
\item \textit{Note:} Values in the first column are the true dependent parameters. Clay90, Gum90, Clay270, and Gum270 denote the 90- and 270-degree rotated Clayton and Gumbel copulas, respectively.
\end{tablenotes}

\end{threeparttable}
}
\end{table}

Tables~\ref{tab:copula_recovery} and~\ref{tab:copula_recovery_tau_neg05} summarize the results under positive and negative dependence, respectively. In most simulation scenarios, the true data-generating copula achieves the highest log-likelihood among all candidate models, and the estimated dependence parameters are close to their true values. These results show that the proposed estimation procedure can accurately recover the underlying dependence structure across a wide range of copula families, including both positive and negative dependence.

We next evaluate the proposed dependence estimation procedure under the zero-inflated mixed Poisson model. This experiment extends the previous setting by introducing a structural-zero component, while keeping the latent-intensity component unchanged.

For simplicity, dependence among the structural-zero indicators is generated from a Gaussian copula with Kendall's tau equal to \(-0.5\). For the latent-intensity component, we consider the Gaussian, Student-\(t\), Clayton, Frank, Gumbel, Clayton survival, and Gumbel survival copulas, with the dependence parameters chosen to give Kendall's tau equal to \(0.5\). For each copula, the corresponding checkerboard copula is constructed using \(d=10\) intervals on each margin. Model recovery is evaluated by comparing the maximized log-likelihood values and the estimated dependence parameters.

\begin{table}[!t]
\centering

\caption{Copula recovery for the bivariate zero-inflated mixed Poisson model}
\label{tab:zi_copula_recovery}

\scriptsize
\resizebox{\columnwidth}{!}{%
\begin{threeparttable}
\renewcommand{\arraystretch}{1.5}
\setlength{\tabcolsep}{3pt}

\begin{tabular}{@{}lccccccc@{}}
\toprule

\textbf{True}
& \textbf{Gauss.}
& \textbf{Student}
& \textbf{Clayton}
& \textbf{Gumbel}
& \textbf{Frank}
& \textbf{Clay180}
& \textbf{Gum180} \\

\midrule

\makecell[l]{Gaussian\\(0.707)}
&
\makecell{$\boldsymbol{-21,789}$\\\textit{(0.707)}}
&
\makecell{$-21,796$\\\textit{(0.688)}}
&
\makecell{$-21,798$\\\textit{(3.797)}}
&
\makecell{$-21,796$\\\textit{(1.734)}}
&
\makecell{$-21,791$\\\textit{(6.218)}}
&
\makecell{$-21,800$\\\textit{(1.123)}}
&
\makecell{$-21,791$\\\textit{(2.283)}}
\\[2.0ex]

\makecell[l]{Student\\(0.707)}
&
\makecell{$-22,151$\\\textit{(0.695)}}
&
\makecell{$\boldsymbol{-22,147}$\\\textit{(0.656)}}
&
\makecell{$-22,172$\\\textit{(3.415)}}
&
\makecell{$-22,148$\\\textit{(1.685)}}
&
\makecell{$-22,159$\\\textit{(5.754)}}
&
\makecell{$-22,151$\\\textit{(1.074)}}
&
\makecell{$-22,154$\\\textit{(2.212)}}
\\[2.0ex]

\makecell[l]{Clayton\\(2.000)}
&
\makecell{$-21,880$\\\textit{(0.558)}}
&
\makecell{$-21,890$\\\textit{(0.533)}}
&
\makecell{$\boldsymbol{-21,867}$\\\textit{(2.183)}}
&
\makecell{$-21,896$\\\textit{(1.421)}}
&
\makecell{$-21,874$\\\textit{(4.131)}}
&
\makecell{$-21,903$\\\textit{(0.637)}}
&
\makecell{$-21,873$\\\textit{(1.776)}}
\\[2.0ex]

\makecell[l]{Gumbel\\(2.000)}
&
\makecell{$-21,752$\\\textit{(0.771)}}
&
\makecell{$-21,750$\\\textit{(0.747)}}
&
\makecell{$-21,778$\\\textit{(5.252)}}
&
\makecell{$-21,747$\\\textit{(1.919)}}
&
\makecell{$-21,763$\\\textit{(7.573)}}
&
\makecell{$\boldsymbol{-21,746}$\\\textit{(1.421)}}
&
\makecell{$-21,758$\\\textit{(2.671)}}
\\[2.0ex]

\makecell[l]{Frank\\(5.736)}
&
\makecell{$-21,926$\\\textit{(0.646)}}
&
\makecell{$-21,932$\\\textit{(0.624)}}
&
\makecell{$-21,926$\\\textit{(3.068)}}
&
\makecell{$-21,935$\\\textit{(1.584)}}
&
\makecell{$\boldsymbol{-21,922}$\\\textit{(5.252)}}
&
\makecell{$-21,940$\\\textit{(0.898)}}
&
\makecell{$-21,925$\\\textit{(2.044)}}
\\[2.0ex]

\makecell[l]{Clay180\\(2.000)}
&
\makecell{$-21,442$\\\textit{(0.844)}}
&
\makecell{$-21,441$\\\textit{(0.831)}}
&
\makecell{$-21,476$\\\textit{(8.279)}}
&
\makecell{$-21,433$\\\textit{(2.323)}}
&
\makecell{$-21,458$\\\textit{(10.077)}}
&
\makecell{$\boldsymbol{-21,428}$\\\textit{(2.058)}}
&
\makecell{$-21,452$\\\textit{(3.346)}}
\\[2.0ex]

\makecell[l]{Gum180\\(2.000)}
&
\makecell{$-21,978$\\\textit{(0.654)}}
&
\makecell{$-21,981$\\\textit{(0.625)}}
&
\makecell{$-21,979$\\\textit{(3.034)}}
&
\makecell{$-21,986$\\\textit{(1.597)}}
&
\makecell{$-21,977$\\\textit{(5.276)}}
&
\makecell{$-21,991$\\\textit{(0.912)}}
&
\makecell{$\boldsymbol{-21,976}$\\\textit{(2.059)}}
\\

\bottomrule
\end{tabular}%

\begin{tablenotes}[flushleft]
\item \textit{Note:} Each cell reports the maximized log-likelihood and, in parentheses, the estimated dependence parameter. Values in the first column are the true parameters. The largest log-likelihood in each row is shown in bold. Clay180 and Gum180 denote the 180-degree rotated Clayton and Gumbel copulas, respectively.
\end{tablenotes}

\end{threeparttable}
}
\end{table}

Table~\ref{tab:zi_copula_recovery} summarizes the copula recovery results. Compared with the standard mixed Poisson model, recovering the dependence structure is more challenging because structural zeros reduce the information available for estimating the latent-intensity dependence.

In most simulation scenarios, the true data-generating copula achieves the highest log-likelihood, and the estimated dependence parameters are close to their true values. The main exception occurs when the latent dependence is generated from a Gumbel copula, where the Clayton survival copula occasionally gives a slightly higher log-likelihood. However, the difference is small, indicating that the two copulas produce very similar dependence structures in this setting.

\subsection{Recovery under Gaussian Factor Checkerboard Copulas}

We finally evaluate the proposed Gaussian factor checkerboard copula in a higher-dimensional setting. Data are generated from a multivariate mixed Poisson model with \(k=5\) response variables. All margins follow Poisson--Gamma distributions, and the regression and dispersion parameters are the same as those used in the lower-heterogeneity setting reported in Table~\ref{tab:mp_parameter_recovery}. The sample size is fixed at \(n=10\,000\).

Dependence among the latent-intensity variables is generated from a Gaussian factor copula. Two factor structures are considered, with one common factor (\(J=1\)) and two common factors (\(J=2\)). The proposed IFM procedure is used to estimate the factor loading matrix.

Table~\ref{tab:factor_loading_combined} compares the true and estimated factor loading matrices, while the corresponding correlation matrices are reported in Appendix~\ref{simulation_result}.

The proposed method accurately recovers the factor loading matrix under both factor structures. The estimated loadings are close to their true values, leading to correlation matrices that closely match the true dependence structure. These results show that the proposed Gaussian factor checkerboard copula provides accurate estimation in higher-dimensional mixed Poisson models while using a parsimonious dependence structure.

\begin{table}[H]
\centering

\caption{Recovery of factor loadings under the Gaussian factor checkerboard copula}
\label{tab:factor_loading_combined}

\scriptsize
\resizebox{\columnwidth}{!}{%
\begin{threeparttable}
\renewcommand{\arraystretch}{1.5}
\setlength{\tabcolsep}{5pt}

\begin{tabular}{@{}c cc c cccc@{}}
\toprule

& \multicolumn{2}{c}{\textbf{Case \(J=1\)}}
&
& \multicolumn{4}{c}{\textbf{Case \(J=2\)}} \\

\cmidrule(lr){2-3}
\cmidrule(lr){5-8}

\textbf{Margin}
& \(\boldsymbol{\Psi}\)
& \(\widehat{\boldsymbol{\Psi}}\)
&
& \multicolumn{2}{c}{\(\boldsymbol{\Psi}\)}
& \multicolumn{2}{c}{\(\widehat{\boldsymbol{\Psi}}\)} \\

\cmidrule(lr){1-1}
\cmidrule(lr){2-2}
\cmidrule(lr){3-3}
\cmidrule(lr){5-6}
\cmidrule(lr){7-8}
1
& $0.939$
& $0.937$
&
& $0.958$
& $0.000$
& $0.932$
& $0.000$
\\

2
& $0.625$
& $0.636$
&
& $0.273$
& $0.913$
& $0.281$
& $0.893$
\\

3
& $-0.625$
& $-0.648$
&
& $0.518$
& $-0.633$
& $0.527$
& $-0.656$
\\

4
& $0.514$
& $0.494$
&
& $-0.537$
& $0.597$
& $-0.555$
& $0.588$
\\

5
& $-0.514$
& $-0.542$
&
& $-0.537$
& $-0.597$
& $-0.566$
& $-0.584$
\\

\bottomrule
\end{tabular}%

\begin{tablenotes}[flushleft]
\item \textit{Note:} For \(J=2\), the true and estimated factor loadings are presented as the rows of the loading matrices \(\boldsymbol{\Psi}\) and \(\widehat{\boldsymbol{\Psi}}\), respectively.
\end{tablenotes}

\end{threeparttable}
}
\end{table}

Overall, the simulation studies show that the proposed framework provides accurate estimation of the marginal models and dependence structures under a wide range of data-generating mechanisms. The proposed method consistently recovers the underlying mixing distributions, regression and dispersion parameters, copula models, and Gaussian factor structures, supporting the effectiveness of the proposed estimation framework.

\section{Empirical Application}

To illustrate the applicability of the proposed framework, we consider four real-world datasets from different application domains. Detailed descriptions of the datasets, their sources, and the corresponding \textsf{R} code for data preparation are provided in \ref{appendix:data} and Table~\ref{tab:datasets_regression_specifications}.

The empirical analysis is designed to evaluate different aspects of the proposed methodology. First, the MEPS healthcare utilization dataset is used to evaluate the proposed bivariate mixed Poisson model and compare it with existing copula-based regression methods. Second, the terrorism dataset distributed with the \texttt{bizicount} package is used to evaluate the proposed bivariate zero-inflated mixed Poisson model under strong zero inflation. Third, the VHLSS healthcare dataset is used to evaluate the computational performance of the proposed zero-inflated mixed Poisson model on a large real-world dataset. Finally, single-cell RNA sequencing (scRNA) data are used to evaluate the proposed multivariate mixed Poisson and zero-inflated mixed Poisson models in high-dimensional settings. The experiments consider up to 50 response variables for the mixed Poisson model and up to 10 response variables for the zero-inflated mixed Poisson model.

Whenever appropriate, the proposed models are compared with existing methods. For bivariate models, comparisons are made with the \texttt{GJRM} package \citep{marra2017joint} and the \texttt{bizicount} package \citep{niehaus2025package}. For high-dimensional count data, comparisons are made with the Poisson--Lognormal models implemented in the \texttt{PLNmodels} package \citep{chiquet2021poisson}. Model performance is evaluated using the maximized log-likelihood, the Akaike information criterion (AIC), and the Bayesian information criterion (BIC). Unless otherwise stated, checkerboard copulas are constructed using a regular partition with \(d=10\) intervals on each margin, and Gaussian factor models are estimated using the Gauss--Hermite quadrature described in \ref{app:gaussian_checkerboard}.

\subsection{Bivariate Mixed Poisson Models}

We first evaluate the proposed framework using the Medical Expenditure Panel Survey (MEPS) dataset, which is widely used as a benchmark for bivariate healthcare utilization studies.

As a benchmark, we compare the proposed model with the bivariate mixed Poisson models implemented in the \texttt{GJRM} package \citep{petti2022copula}. In \texttt{GJRM}, dependence is modeled directly between the observed count responses through a copula. For discrete responses, the joint probability mass function is obtained from the rectangle probabilities
\[
\begin{aligned}
&
\mathbb{P}(N_1=n_1,N_2=n_2)\\
&\qquad=
C(F_1(n_1),F_2(n_2))
-
C(F_1(n_1-1),F_2(n_2))
\\
& \qquad \qquad
-
C(F_1(n_1),F_2(n_2-1))
+
C(F_1(n_1-1),F_2(n_2-1)),
\end{aligned}
\]
where \(F_1\) and \(F_2\) are the marginal distribution functions and \(C\) is the copula function.

In contrast, the proposed framework models dependence through the latent mixing variables using a checkerboard copula, while the observed responses remain conditionally independent given the latent intensities. This preserves the classical mixed Poisson interpretation, where dependence arises from shared latent heterogeneity.

Both methods consider Gamma and Inverse Gaussian mixing distributions, although the parameterization of the Inverse Gaussian distribution differs. The \texttt{GJRM} package assumes
\[
\mathbb{V}(\Lambda)=\mu^{2}\phi,
\]
whereas the proposed framework uses
\[
\mathbb{V}(\Lambda)=\mu^{3}\phi
\]
by default. To ensure a fair comparison, all results reported below use the same Inverse Gaussian parameterization as \texttt{GJRM}. In addition, the proposed framework includes the Lognormal mixing distribution, which is not available in \texttt{GJRM}. The corresponding parameterizations are summarized in \ref{app_distribution_for_lambda}.

\begin{table}[!t]
\centering
\caption{Comparison between \texttt{GJRM} and the proposed checkerboard copula on the MEPS data.}
\label{tab:meps_gjrm_checkerboard_comparison}

\scriptsize
\resizebox{\columnwidth}{!}{%
\begin{threeparttable}
\renewcommand{\arraystretch}{1.4}
\setlength{\tabcolsep}{4pt}

\begin{tabular}{llcccc}
\toprule
\multirow{2}{*}{\textbf{Margin 1}}
&
\multirow{2}{*}{\textbf{Margin 2}}
&
\multicolumn{2}{c}{\textbf{Marginal LogLik}}
&
\multicolumn{2}{c}{\textbf{Joint LogLik}}
\\
\cmidrule(lr){3-4}
\cmidrule(lr){5-6}
&
&
\textbf{Margin 1}
&
\textbf{Margin 2}
&
\textbf{GJRM}
&
\textbf{Proposed}
\\
\midrule

Gamma & Gamma
& $-18,919$ & $-10,778$
& $-29,184$ & $-29,189$ \\

Gamma & IG
& $-18,919$ & $-10,784$
& $\boldsymbol{-29,180}$ & $-29,200$ \\

Gamma & LN
& $-18,919$ & $-10,771$
& -- & $\boldsymbol{-29,181}$ \\

IG & Gamma
& $-19,042$ & $-10,778$
& $-29,307$ & $-29,324$ \\

IG & IG
& $-19,042$ & $-10,784$
& $-29,297$ & $-29,328$ \\

IG & LN
& $-19,042$ & $-10,771$
& -- & $-29,311$ \\

LN & Gamma
& $-18,950$ & $-10,778$
& -- & $-29,231$ \\

LN & IG
& $-18,950$ & $-10,784$
& -- & $-29,236$ \\

LN & LN
& $-18,950$ & $-10,771$
& -- & $-29,216$ \\

\bottomrule
\end{tabular}

\begin{tablenotes}[flushleft]
\item \textit{Note:} Both methods use the same Gaussian copula dependence structure. For the Inverse Gaussian and Lognormal mixing distributions, the variance functions are chosen to match those used in \texttt{GJRM}, rather than the default parameterizations of the proposed framework.
\end{tablenotes}

\end{threeparttable}
}
\end{table}

Several observations can be made from Table~\ref{tab:meps_gjrm_checkerboard_comparison}.

First, the proposed framework achieves a level of fit comparable to that of \texttt{GJRM}, despite modeling dependence through the latent mixing variables rather than directly through the observed responses.

Second, allowing different mixing distributions for the two margins further improves model fit. Among all candidate models, the Gamma--Lognormal specification achieves the highest log-likelihood. This result highlights the flexibility of the proposed framework in selecting the marginal distribution separately for each response.

Having selected the Gamma--Lognormal specification for the marginal distributions, we next compare different checkerboard copula families while keeping the marginal distributions fixed. Table~\ref{tab:meps_copula_selection} summarizes the results. The survival Gumbel copula achieves the highest log-likelihood, followed closely by the Gaussian copula. The small improvement in log-likelihood suggests that, in addition to positive dependence, the MEPS data may exhibit a mild degree of lower-tail dependence.

\begin{table}[!t]
\centering

\caption{Copula selection for the Gamma--Lognormal model.}
\label{tab:meps_copula_selection}

\scriptsize
\resizebox{\columnwidth}{!}{%
\begin{threeparttable}
\renewcommand{\arraystretch}{1.5}
\setlength{\tabcolsep}{12pt}

\begin{tabular}{l c r r}
\toprule
\textbf{Copula family}
&
\textbf{Parameter}
&
\textbf{LogLik}
&
\textbf{\(\Delta\)AIC}
\\
\midrule

Gum180
&
\(\theta=1.807\)
&
\(\mathbf{-29,176.88}\)
&
\(\mathbf{0.00}\)
\\

Gaussian
&
\(\rho=0.572\)
&
\(-29,181.20\)
&
8.64
\\

Frank
&
\(\theta=4.312\)
&
\(-29,185.79\)
&
17.82
\\

Student
&
\(\rho=0.568\)
&
\(-29,186.36\)
&
18.96
\\

Gumbel
&
\(\theta=1.481\)
&
\(-29,198.36\)
&
42.96
\\

Clayton
&
\(\theta=2.099\)
&
\(-29,202.68\)
&
51.60
\\

Clay180
&
\(\theta=0.769\)
&
\(-29,216.01\)
&
78.26
\\

\bottomrule
\end{tabular}

\begin{tablenotes}[flushleft]
\item \textit{Note:} Clay180 and Gum180 denote the \(180^\circ\)-rotated Clayton and Gumbel copulas, respectively.
\end{tablenotes}

\end{threeparttable}
}
\end{table}

Overall, the MEPS analysis demonstrates that the proposed framework provides a flexible approach for modeling multivariate mixed Poisson data. By allowing different mixing distributions and copula families, the proposed model can achieve a better fit than existing approaches while preserving the classical mixed Poisson interpretation through latent heterogeneity.

\subsection{Bivariate Zero-Inflated Mixed Poisson Models}
We next evaluate the proposed bivariate zero-inflated mixed Poisson framework using two real-world datasets. The first is the \texttt{terror} dataset distributed with the \texttt{bizicount} package \citep{niehaus2025package}, and the second is the VHLSS healthcare utilization dataset.

We first consider the \texttt{terror} dataset, which contains a relatively small sample but a very high proportion of zero observations in both response variables. This dataset is therefore well suited for evaluating models with structural zeros.

Preliminary analyses showed that, when dependence among the structural-zero indicators was modeled using a Gaussian copula, the estimated correlation parameter was consistently close to the lower boundary of the parameter space. This suggests an almost countermonotone dependence structure. Therefore, the structural-zero component is modeled using a countermonotone copula, while dependence among the latent-intensity variables is modeled using a Gaussian checkerboard copula.

The proposed framework is compared with the bivariate zero-inflated count model implemented in the \texttt{bizicount} package, which models dependence through a single copula acting directly on the observed responses. In contrast, the proposed framework models dependence separately for the structural-zero and latent-intensity components.

Table~\ref{tab:terror_comparison} summarizes the results. For the Poisson--Gamma specification, the proposed framework achieves a slightly higher log-likelihood than the model fitted using the \texttt{bizicount} package. More importantly, the proposed framework separates the dependence structure into two components: a countermonotone copula for the structural-zero component and a Gaussian checkerboard copula for the latent-intensity component. This provides a more interpretable representation of the dependence in zero-inflated count data.

\begin{table*}[!t]
\centering
\caption{Comparison between \texttt{bizicount} and the proposed model on the \texttt{terror} dataset.}
\label{tab:terror_comparison}

\small

\begin{threeparttable}
\renewcommand{\arraystretch}{1.4}
\setlength{\tabcolsep}{8pt}

\begin{tabular}{l llccc r}
\toprule

\multirow{2}{*}{\textbf{Method}}
&
\multirow{2}{*}{\textbf{Margin 1}}
&
\multirow{2}{*}{\textbf{Margin 2}}
&
\multicolumn{3}{c}{\textbf{Dependence structure}}
&
\multirow{2}{*}{\textbf{LogLik}}
\\

\cmidrule(lr){4-6}

&
&
&
Copula for \(\mathbf{Y}\)
&
Copula for \(\mathbf{B}\)
&
Copula for \(\mathbf{\Lambda}\)
&
\\

\midrule

\texttt{bizicount}
&
Gamma
&
Gamma
&
Gaussian (-0.035)
&
--
&
--
&
$-418.44$
\\

Proposed
&
Gamma
&
Gamma
&
--
&
Countermonotone
&
Gaussian (-0.043)
&
$-417.74$
\\

Proposed
&
IG
&
Gamma
&
--
&
Countermonotone
&
Gaussian (0.105)
&
$\mathbf{-413.74}$
\\

\bottomrule
\end{tabular}

\begin{tablenotes}[flushleft]
\item \textit{Note:}
For the proposed model, dependence is modeled separately for the structural-zero component (\(B\)) and the latent-intensity component \(\Lambda\). The \texttt{bizicount} model uses a single copula acting directly on the observed responses (\(Y\)).
\end{tablenotes}

\end{threeparttable}
\end{table*}

Further improvement is achieved by allowing different mixing distributions across the two margins. Replacing the Gamma distribution in the first margin with an Inverse Gaussian distribution increases the log-likelihood from \(-417.74\) to \(-413.74\), demonstrating the benefit of selecting the marginal distribution separately for each response.

We next analyze the VHLSS healthcare utilization dataset, which contains more than \(34{,}000\) observations on outpatient and inpatient visits in Vietnam. Similar to the \texttt{terror} dataset, both response variables exhibit substantial zero inflation. However, this application is considerably more challenging because of the much larger sample size and the richer regression specification, which includes a large number of demographic, socioeconomic, and health-related covariates.

As a benchmark, we compare the proposed framework with the bivariate zero-inflated count model implemented in the \texttt{bizicount} package. For large datasets with complex regression models, repeated evaluation of discrete copula rectangle probabilities leads to substantial computational cost.

In contrast, the proposed framework models dependence through continuous latent variables, so likelihood evaluation only involves continuous copula densities together with univariate mixed Poisson likelihoods. As a result, the complete model selection procedure, including both marginal distribution and copula selection, can be completed in a few minutes on a standard desktop computer, whereas the corresponding analysis using the \texttt{bizicount} package requires several hours.

\begin{table*}[!t]
\centering

\caption{Comparison of alternative dependence structures for the bivariate VHLSS healthcare utilization data.}
\label{tab:vhlss_dependence_comparison}

\scriptsize
\resizebox{\textwidth}{!}{%
\renewcommand{\arraystretch}{1.2}
\setlength{\tabcolsep}{4pt}
\begin{threeparttable}

\begin{tabular}{lllcccrr}
\toprule
\multirow{2}{*}{\textbf{Method}}
&
\multirow{2}{*}{\textbf{Margin 1}}
&
\multirow{2}{*}{\textbf{Margin 2}}
&
\multicolumn{3}{c}{\textbf{Dependence structure}}
&
\multirow{2}{*}{\textbf{LogLik}}
&
\multirow{2}{*}{\textbf{AIC}}
\\
\cmidrule(lr){4-6}
&
&
&
\textbf{Copula for \(Y\)}
&
\textbf{Copula for \(B\)}
&
\makecell{\textbf{Copula for \(\Lambda\)}}
&
&
\\
\midrule

\texttt{bizicount}
&
Gamma
&
Gamma
&
Gaussian (0.338)
&
--
&
--
&
$-42,667.75$
&
$85,573.50$
\\

\texttt{bizicount}
&
Gamma
&
Gamma
&
Frank (3.028)
&
--
&
--
&
$-42,650.92$
&
$85,539.85$
\\

\midrule

Proposed
&
Gamma
&
Gamma
&
--
&
Gaussian (0.659)
&
Gaussian (0.240)
&
$-42,686.54$
&
$85,613.08$
\\

\midrule

Proposed
&
IG
&
LN
&
--
&
Gaussian (0.617)
&
Gaussian (0.298)
&
$-42,527.75$
&
$85,295.50$
\\

Proposed
&
IG
&
LN
&
--
&
Gaussian (0.627)
&
Student (\(\nu=3\), 0.255)
&
$-42,527.68$
&
$85,295.36$
\\

Proposed
&
IG
&
LN
&
--
&
Gaussian (0.610)
&
Clayton (0.636)
&
$-42,532.56$
&
$85,305.12$
\\

Proposed
&
IG
&
LN
&
--
&
Gaussian (0.624)
&
Gumbel (1.178)
&
$-42,526.31$
&
$85,292.62$
\\

Proposed
&
IG
&
LN
&
--
&
Gaussian (0.617)
&
Frank (1.866)
&
$-42,528.58$
&
$85,297.16$
\\

Proposed
&
IG
&
LN
&
--
&
Gaussian (0.625)
&
Clay180 (0.325)
&
$\boldsymbol{-42,525.70}$
&
$\textbf{85,291.40}$
\\

Proposed
&
IG
&
LN
&
--
&
Gaussian (0.613)
&
Gum180 (1.289)
&
$-42,529.70$
&
$85,299.40$
\\

\bottomrule
\end{tabular}

\begin{tablenotes}[flushleft]
\item \textit{Note:} IG and LN denote the Inverse Gaussian and Lognormal mixing distributions, respectively. Clay180 and Gum180 denote the corresponding \(180^\circ\)-rotated copulas.
\end{tablenotes}

\end{threeparttable}

}
\end{table*}
Table~\ref{tab:vhlss_dependence_comparison} reports the estimation results, from which several conclusions emerge.

First, explicitly modeling dependence separately in the structural-zero and latent-intensity layers substantially improves model fit relative to conventional copula models based on a single dependence structure for the observed responses. 

Second, the choice of the marginal mixing distributions has an important impact on model performance. While Gamma mixing distributions provide a reasonable fit, combining an Inverse Gaussian distribution for outpatient utilization with a Lognormal distribution for inpatient utilization consistently yields higher likelihood values and lower information criteria. 

Finally, among the candidate checkerboard copulas, the survival Clayton copula provides the best overall fit for the latent-intensity layer, although the Gaussian, Student-$t$, Gumbel, and Frank checkerboard copulas achieve very similar likelihood values.

Taken together, these two empirical studies demonstrate that the proposed framework simultaneously provides flexible marginal modeling, interpretable decomposition of dependence, and computationally efficient likelihood-based inference. These advantages become particularly important for large healthcare and insurance databases exhibiting both substantial overdispersion and excess zeros.

\subsection{Multivariate Mixed Poisson Models}

We next evaluate the proposed framework for moderately high-dimensional multivariate count data. The analysis is based on the \texttt{scRNA} dataset available in the \texttt{PLNmodels} package. As a benchmark, we consider the multivariate Poisson--Lognormal (PLN) model implemented in the \texttt{PLN()} function, which is one of the most widely used latent Gaussian models for multivariate count data. Since the PLN model is estimated by a variational EM algorithm, it maximizes an evidence lower bound (ELBO) rather than the exact observed-data log-likelihood. Therefore, comparisons with the proposed framework should be interpreted accordingly.

Two experiments are considered with \(k=10\) and \(k=50\) response variables. For each experiment, the response variables are randomly selected from the 500 genes in the original dataset, and the categorical variable \texttt{cell\_line} is included as the only explanatory variable.

To evaluate separately the effects of the marginal model and the dependence structure, two marginal specifications are considered. The first assumes Lognormal mixing distributions for all response variables, matching the marginal assumption of the PLN model. The second selects the mixing distribution for each response variable independently using the marginal model selection procedure described in Section~\ref{sec:marginal_selection}. For each marginal specification, we compare an independence model with Gaussian factor checkerboard copulas using different numbers of latent factors.

\begin{table*}[!t]
\centering
\caption{Comparison of competing multivariate count models \(\boldsymbol{(k=10)}\)}
    \label{tab:model_comparison_k10}
    
\scriptsize
\resizebox{\textwidth}{!}{%
\renewcommand{\arraystretch}{1.2}
\setlength{\tabcolsep}{7pt}
\begin{threeparttable}
     \begin{tabular}{lll c ccc}
    \toprule
    \textbf{Method} & \textbf{Margin} & \textbf{Latent Dependence} & \textbf{No. Param.} & \textbf{LogLik} & \textbf{AIC} & \textbf{BIC} \\
    \midrule
    \texttt{PLNmodels} & LN & Unrestricted Gaussian
& $105$
& $-80,053.89$
& $160,317.78$
& $160,485.05$ \\
\midrule
Proposed & LN & Independence
& $60$
& $-82,312.43$
& $164,744.86$
& $164,840.44$ \\

Proposed & Optimal & Independence
& $60$
& $-81,958.18$
& $164,036.36$
& $164,131.94$ \\

Proposed & LN & GFC(1)
& $70$
& $-79,653.49$
& $159,446.98$
& $159,558.49$ \\

Proposed & Optimal & GFC(1)
& $70$
& $-79,306.11$
& $158,752.22$
& $158,863.73$ \\

Proposed & LN & GFC(2)
& $80$
& $-79,374.13$
& $158,908.26$
& $159,035.71$ \\

Proposed & \textbf{Optimal} & \textbf{GFC(2)}
& $\boldsymbol{80}$
& $\boldsymbol{-79,044.26}$
& $\boldsymbol{158,248.52}$
& $\boldsymbol{158,375.97}$ \\
    \bottomrule
    \end{tabular}
    
    \begin{tablenotes}[flushleft]
       \item \textit{Note:} \texttt{PLNmodels} denotes the multivariate Poisson--Lognormal model implemented in the \texttt{PLNmodels} package. Optimal indicates that the marginal mixing distribution of each response variable is selected independently according to the highest univariate LogLik. Independence denotes independent latent mixing variables. GFC(\(J\)) denotes a Gaussian factor checkerboard copula with \(J\) latent factors.
    \end{tablenotes}
\end{threeparttable}
}
\end{table*}

\begin{table*}[!tb]
\centering

\caption{Comparison of competing multivariate count models \(\boldsymbol{(k=50)}\)}
\label{tab:model_comparison_k50}

\scriptsize
\resizebox{\textwidth}{!}{%
\renewcommand{\arraystretch}{1.2}
\begin{threeparttable}

\begin{tabular}{lll c ccc}
\toprule
\textbf{Method} &
\textbf{Margin} &
\textbf{Latent Dependence} &
\textbf{No. Param.} &
\textbf{LogLik} &
\textbf{AIC} &
\textbf{BIC} \\
\midrule

\texttt{PLNmodels}
&
LN
&
Unrestricted Gaussian
&
$1,525$
&
$-363,642.30$
&
$730,334.60$
&
$732,764.02$
\\

\midrule

Proposed
&
LN
&
Independence
&
$300$
&
$-389,836.10$
&
$780,272.20$
&
$780,750.12$
\\

Proposed
&
Optimal
&
Independence
&
$300$
&
$-387,671.80$
&
$775,943.60$
&
$776,421.52$
\\

Proposed
&
LN
&
GFC(1)
&
$350$
&
$-368,876.20$
&
$738,452.40$
&
$739,009.97$
\\

Proposed
&
Optimal
&
GFC(1)
&
$350$
&
$-367,035.30$
&
$734,770.60$
&
$735,328.17$
\\

Proposed
&
LN
&
GFC(2)
&
$400$
&
$-365,274.50$
&
$731,349.00$
&
$731,986.23$
\\

Proposed
&
Optimal
&
GFC(2)
&
$400$
&
$-363,602.30$
&
$728,004.60$
&
$728,641.83$
\\

Proposed
&
LN
&
GFC(3)
&
$450$
&
$-363,009.30$
&
$726,918.60$
&
$727,635.48$
\\

Proposed
&
\textbf{Optimal}
&
\textbf{GFC(3)}
&
$\boldsymbol{450}$
&
$\boldsymbol{-361,435.70}$
&
$\boldsymbol{723,771.40}$
&
$\boldsymbol{724,488.28}$
\\

\bottomrule
\end{tabular}

\begin{tablenotes}[flushleft]
       \item \textit{Note:} \texttt{PLNmodels} denotes the multivariate Poisson--Lognormal model implemented in the \texttt{PLNmodels} package. Optimal indicates that the marginal mixing distribution of each response variable is selected independently according to the highest univariate LogLik. Independence denotes independent latent mixing variables. GFC(\(J\)) denotes a Gaussian factor checkerboard copula with \(J\) latent factors.
    \end{tablenotes}

\end{threeparttable}
}
\end{table*}

Tables~\ref{tab:model_comparison_k10} and~\ref{tab:model_comparison_k50} summarize the results.

First, flexible marginal specification consistently improves model fit. For all dependence structures considered, selecting the mixing distribution separately for each response variable gives higher log-likelihood values and lower AIC and BIC than assuming a common Lognormal mixing distribution.

Second, introducing dependence through the Gaussian factor checkerboard copula substantially improves model fit over the independence model. Increasing the number of latent factors further improves the fit, indicating that the Gaussian factor representation effectively captures the dependence among the response variables.

Finally, the proposed framework compares favorably with the benchmark PLN model. For both \(k=10\) and \(k=50\), the proposed model achieves higher log-likelihood values together with lower AIC and BIC. The improvement is particularly evident for \(k=50\), where the proposed model uses only 450 dependence parameters under a three-factor specification, compared with 1,525 parameters in the unrestricted covariance matrix of the PLN model.

Overall, these results demonstrate that the proposed framework combines flexible marginal modeling with a parsimonious dependence structure. The Gaussian factor checkerboard copula provides accurate estimation while requiring substantially fewer dependence parameters than covariance-based approaches.

\subsection{Multivariate Zero-Inflated Mixed Poisson Models}

Our final empirical application considers moderately high-dimensional multivariate count data with substantial zero inflation. The analysis is based on the \texttt{scRNA} dataset available in the \texttt{PLNmodels} package. To emphasize the role of structural zeros, we select the ten genes with the largest proportions of zero observations.

As a preliminary analysis, all response variables are assumed to be independent and only the marginal models are estimated. This comparison evaluates whether a zero-inflation component is supported before modeling the dependence structure. Both the proposed zero-inflated mixed Poisson (ZIMP) model and the zero-inflated Poisson--Lognormal (ZI-PLN) model achieve substantially higher log-likelihood values together with lower AIC and BIC than their corresponding non-zero-inflated models, providing strong evidence for excess zeros in the selected genes.

We next investigate the effects of the marginal specification and the dependence structure. Two marginal specifications are considered. The first assumes Lognormal mixing distributions for all response variables, matching the assumption of the PLN model. The second selects the mixing distribution for each response variable independently using the marginal model selection procedure described in Section~\ref{sec:marginal_selection}.

Dependence is modeled separately for the structural-zero and latent-intensity components. The structural-zero component is modeled using a Gaussian factor copula, whereas the latent-intensity component is modeled using a Gaussian factor checkerboard copula. For each marginal specification, we compare independence models with one-factor and two-factor models for both dependence components, allowing the effects of structural-zero dependence and latent-intensity dependence to be evaluated separately.

\begin{table*}
\centering

\caption{Comparison of multivariate zero-inflated mixed Poisson models under alternative marginal and dependence specifications.}
\label{tab:model_comparison_5}

\scriptsize
\resizebox{\textwidth}{!}{%
\renewcommand{\arraystretch}{1.2}
\setlength{\tabcolsep}{5pt}

\begin{threeparttable}

\begin{tabular}{lcccccrrr}
\hline
\textbf{Method} &
\makecell{\textbf{Zero}\\\textbf{Inflation}} &
\makecell{\textbf{Margins}} &
\makecell{\textbf{Copula}\\\textbf{for} $B$} &
\makecell{\textbf{Copula}\\\textbf{for} $\Lambda$} &
\makecell{\textbf{No.}\\\textbf{Param.}} &
\textbf{LogLik} &
\textbf{AIC} &
\textbf{BIC}\\

\midrule

\texttt{PLNmodels}
& No
& --
& --
& Unres-Gauss.
& $105$
& $-31,626.09$
& $63,462.18$
& $63,629.45$ \\

\texttt{PLNmodels}
& Yes
& --
& --
& Unres-Gauss.
& $155$
& $-31,458.00$
& $63,226.00$
& $63,472.92$ \\

\midrule

Proposed
& No
& LN
& --
& Indep.
& $60$
& $-31,865.94$
& $63,851.88$
& $63,947.46$ \\

Proposed
& No
& Optimal
& --
& Indep.
& $60$
& $-31,613.71$
& $63,347.42$
& $63,443.00$ \\

Proposed
& Yes
& LN
& Indep.
& Indep.
& $110$
& $-31,248.55$
& $62,717.10$
& $62,892.34$ \\

Proposed
& Yes
& Optimal
& Indep.
& Indep.
& $110$
& $-31,223.44$
& $62,666.88$
& $62,842.12$ \\

\midrule

Proposed
& Yes
& LN
& Indep.
& GFC(1)
& $120$
& $-30,382.66$
& $61,005.32$
& $61,196.49$ \\

Proposed
& Yes
& Optimal
& Indep.
& GFC(1)
& $120$
& $-30,362.69$
& $60,965.37$
& $61,156.54$ \\

Proposed
& Yes
& LN
& GF(1)
& GFC(1)
& $130$
& $-30,224.19$
& $60,708.39$
& $60,915.49$ \\

Proposed
& Yes
& LN
& Indep.
& GFC(2)
& $130$
& $-30,246.83$
& $60,753.66$
& $60,960.76$ \\

Proposed
& Yes
& Optimal
& GF(1)
& GFC(1)
& $130$
& $-30,215.15$
& $60,690.30$
& $60,897.40$ \\

Proposed
& Yes
& LN
& GF(2)
& GFC(1)
& $140$
& $-30,142.30$
& $60,564.60$
& $60,787.63$ \\

Proposed
& Yes
& LN
& GF(1)
& GFC(2)
& $140$
& $-30,093.26$
& $60,466.52$
& $60,689.55$ \\

Proposed
& Yes
& Optimal
& GF(1)
& GFC(2)
& $\boldsymbol{140}$
& $\boldsymbol{-30,067.45}$
& $\boldsymbol{60,414.90}$
& $\boldsymbol{60,637.93}$ \\

\hline
\end{tabular}
\begin{tablenotes}[flushleft]
\item \textit{Note:}
\texttt{PLNmodels} denotes the multivariate Poisson--Lognormal model implemented in the \texttt{PLNmodels} package. Optimal indicates that the mixing distribution of each response variable is selected using the marginal model selection procedure in Section~\ref{sec:marginal_selection}. Indep.\ denotes independence, GF($k$) denotes a $k$-factor Gaussian factor copula and GFC($k$) denotes a $k$-factor Gaussian factor checkerboard copula.
\end{tablenotes}

\end{threeparttable}
}
\end{table*}

Table~\ref{tab:model_comparison_5} summarizes the results.

First, incorporating a zero-inflation component substantially improves model fit. Both the proposed ZIMP models and the benchmark ZI-PLN model achieve higher log-likelihood values together with lower AIC and BIC than their corresponding non-zero-inflated models, confirming the presence of substantial excess zeros in the selected genes.

Second, flexible marginal specification further improves model performance. For both the independence and dependence models, selecting the mixing distribution separately for each response variable consistently gives higher log-likelihood values and lower AIC and BIC than assuming a common Lognormal mixing distribution.

Finally, modeling dependence further improves the fit. Introducing dependence among the latent-intensity variables through the Gaussian factor checkerboard copula substantially increases model fit, while additionally modeling dependence among the structural-zero indicators using a Gaussian factor copula provides further improvement. The best performance is achieved when dependence is modeled in both components under the optimal marginal specification.

Overall, the empirical studies demonstrate the flexibility and effectiveness of the proposed framework across a wide range of multivariate count data. The proposed methodology consistently achieves competitive or superior model fit while providing a flexible choice of marginal distributions and an interpretable representation of the dependence structure. These results show that the proposed framework is well suited for modeling multivariate count data with overdispersion, excess zeros, and complex dependence structures.

\section{Conclusion}

This paper proposes a general framework for modeling multivariate zero-inflated mixed Poisson data that jointly accommodates overdispersion, excess zeros, and multivariate dependence. The proposed hierarchical construction separates the structural-zero component from the latent-intensity component, allowing the two sources of dependence to be modeled independently. Dependence among the latent mixing variables is modeled using checkerboard copulas, while dependence among the structural-zero indicators is modeled separately through an appropriate copula. The marginal distributions are specified through a broad class of mixing distributions, including the Gamma, Lognormal, Inverse Gaussian, Weibull, and Pareto families. This modular construction substantially enlarges the class of multivariate zero-inflated mixed Poisson models that can be formulated within a unified likelihood-based framework.

A second contribution is the combination of checkerboard copulas with Gaussian factor models for moderately high-dimensional data. This representation substantially reduces the number of dependence parameters while retaining the flexibility of the checkerboard copula construction. Together with the proposed quantile-based computation of marginal and checkerboard cell probabilities, it provides a unified estimation procedure for all supported mixing distributions. The resulting inference-functions-for-margins procedure makes likelihood-based estimation computationally feasible for moderately high-dimensional multivariate count data.

The simulation studies and empirical applications demonstrate that the proposed framework provides accurate estimation, flexible marginal modeling, and interpretable dependence structures across a wide range of multivariate count data. Compared with existing approaches, the proposed methodology offers greater flexibility in the choice of marginal distributions while maintaining computational efficiency.

More generally, the proposed methodology should be viewed as a general modeling framework rather than a single multivariate count model. New mixing distributions and alternative factor copula constructions can be incorporated without changing the overall estimation framework, provided that the corresponding marginal probabilities and checkerboard cell probabilities can be evaluated. The proposed framework is therefore applicable to a wide range of multivariate count data arising in actuarial science, healthcare, epidemiology, ecology, genomics, and many other application domains.

Several directions for future research are possible. First, alternative factor copulas may be used to model more flexible dependence structures while preserving computational efficiency. Second, richer marginal specifications, such as semiparametric mixing distributions, more flexible quantile functions, or distributional regression models, may further improve the modeling of heterogeneous count data. Owing to the modular construction of the proposed framework, these extensions can be incorporated without changing the overall estimation strategy.

\section*{Code Availability}

All methods presented in this paper are implemented in \textsf{R}. The complete source code used to reproduce the simulation studies, empirical analyses, tables, and figures is publicly available at

\url{https://github.com/nguyenquanghuy85/mzimp-R-code}.

\section*{Declaration of generative AI and AI-assisted technologies in the manuscript preparation process}
During the preparation of this work, the authors used ChatGPT (OpenAI) in order to assist with improving the clarity of the English language, refining the presentation and organization of the manuscript, and enhancing readability. The tool was not used to generate research data, perform statistical analyses, or draw scientific conclusions. After using this tool, the author(s) reviewed and edited the content as needed and take(s) full responsibility for the content of the published article.

\appendix

\section{Checkboard Copula}
\label{appen_checkboard_copula}
Suppose the checkerboard copula density is defined by
\[
c_W(u_1,\ldots,u_k)
=
w_{m_1,\ldots,m_k}, \ (u_1,\ldots,u_k)\in B_{m_1,\ldots,m_k},
\]
where \(\mathbf W=\left(w_{m_1,\ldots,m_k}\right)\) is a \(k\)-dimensional array of checkerboard weights.

Then \(c_W\) is a valid copula density if and only if the following
conditions hold:

\begin{enumerate}[(i)]

\item \(w_{m_1,\ldots,m_k}\ge0,\ m_j=1,\ldots,d.\)

\item For every coordinate \(j\in\{1,\ldots,k\}\) and every fixed value \(m_j\in\{1,\ldots,d\}\),
\[
\sum_{m_1=1}^{d}
\cdots
\sum_{m_{j-1}=1}^{d}
\sum_{m_{j+1}=1}^{d}
\cdots
\sum_{m_k=1}^{d}
w_{m_1,\ldots,m_k}
=
d^{\,k-1}.
\]

\end{enumerate}

Since \(c_W\) is piecewise constant, it is nonnegative almost everywhere
if and only if \(w_{m_1,\ldots,m_k}\ge0\) for every checkerboard cell.

Each cell has volume \(d^{-k}\). Therefore,
\[
\int_{[0,1]^k}
c_W(\mathbf u)\,
d\mathbf u
=
\frac1{d^k}
\sum_{m_1=1}^{d}
\cdots
\sum_{m_k=1}^{d}
w_{m_1,\ldots,m_k}.
\]

Using condition (ii),
\[
\begin{aligned}
\sum_{m_1=1}^{d}
\cdots
\sum_{m_k=1}^{d}
w_{m_1,\ldots,m_k}
&=
\sum_{m_1=1}^{d}
\left(
\sum_{m_2=1}^{d}
\cdots
\sum_{m_k=1}^{d}
w_{m_1,\ldots,m_k}
\right)  \\
&=
\sum_{m_1=1}^{d}
d^{k-1}
=
d^k,
\end{aligned}
\]
and hence
\[
\int_{[0,1]^k}
c_W(\mathbf u)\,
d\mathbf u
=
1.
\]

Next, fix a coordinate \(j\) and suppose
\(u_j\in(u_{m_j-1},u_{m_j}]\).
The corresponding marginal density is
\[
\begin{aligned}
c_j(u_j)
&=
\int_{[0,1]^{k-1}}
c_W(u_1,\ldots,u_k)
\prod_{r\neq j}du_r  \\
&=
\frac1{d^{k-1}}
\sum_{m_1=1}^{d}
\cdots
\sum_{m_{j-1}=1}^{d}
\sum_{m_{j+1}=1}^{d}
\cdots
\sum_{m_k=1}^{d}
w_{m_1,\ldots,m_k}.
\end{aligned}
\]

By condition (ii), \(c_j(u_j)=1,\) showing that every marginal distribution is
\(\mathrm{Uniform}(0,1)\). Therefore, \(c_W\) is a valid copula density.

Conversely, suppose that \(c_W\) is a copula density. Since every copula density is nonnegative almost everywhere, condition (i) follows immediately.

Moreover, every copula has standard uniform marginals. Hence, for every coordinate \(j\),
\[
1
=
c_j(u_j)
=
\frac1{d^{k-1}}
\sum_{m_1=1}^{d}
\cdots
\sum_{m_{j-1}=1}^{d}
\sum_{m_{j+1}=1}^{d}
\cdots
\sum_{m_k=1}^{d}
w_{m_1,\ldots,m_k},
\]
which is equivalent to condition (ii).

Therefore, conditions (i) and (ii) are also necessary.

\section{Gaussian Factor Checkerboard Copula}
\label{app:gaussian_checkerboard}

This appendix provides the technical details underlying the Gaussian factor checkerboard copula used in Section~\ref{sec_gau_fac_check_copu}. We first review the Gaussian factor copula representation, then derive the checkerboard weights, and finally prove Proposition~\ref{prop_gaussian_factor}.

\subsection{Gaussian Factor Model}

Let \(\mathbf F=(F_1,\ldots,F_J)^\top \sim N_J(\mathbf0,\mathbf I_J)\) denote the vector of common latent factors.

For each margin \(j=1,\ldots,k\), define \(G_j=\boldsymbol{\psi}_j^\top \mathbf F + \sqrt{1-\|\boldsymbol{\psi}_j\|^2}\, \varepsilon_j,\) where
\( \boldsymbol{\psi}_j =(\psi_{j1},\ldots,\psi_{jJ})^\top,\ \|\boldsymbol{\psi}_j\|^2~<~1\); \(\varepsilon_1,\ldots,\varepsilon_k
\stackrel{\mathrm{iid}}{\sim}
N(0,1)\), and independently of \(\mathbf F.\)

The transformed variables \(U_j=\Phi(G_j),\ j=1,\ldots,k,\) therefore follow a Gaussian factor copula.

Since \( \operatorname{Cov}(G_j,G_\ell)=\boldsymbol{\psi}_j^\top \boldsymbol{\psi}_\ell,\ j\neq\ell,\) the corresponding correlation matrix is 
\[
\mathbf R = \mathbf\Psi \mathbf\Psi^\top +\operatorname{diag}
\left(1-\|\boldsymbol{\psi}_1\|^2,\ldots,1-\|\boldsymbol{\psi}_k\|^2\right),
\]
where \(\mathbf\Psi = (\boldsymbol{\psi}_1,\ldots,\boldsymbol{\psi}_k)^\top \) is the \(k\times J\) loading matrix.

Consequently, the Gaussian factor representation reduces the number of dependence parameters from \(k(k-1)/2\) for an unrestricted Gaussian copula to only \(kJ\).

\subsection{Construction of Checkerboard Weights}

Conditional on \(\mathbf F=\mathbf f,\) \(G_j\mid\mathbf F=\mathbf f\sim N\left(\boldsymbol{\psi}_j^\top\mathbf f,1-\|\boldsymbol{\psi}_j\|^2 \right).\) The variables \(G_1,\ldots,G_k\) are mutually independent.

Hence, \(\mathbb{P}(U_j\le u \mid \mathbf F=\mathbf f) = q(u;\boldsymbol{\psi}_j,\mathbf f),\) where
\begin{equation}
q (u;\boldsymbol{\psi},\mathbf f)=\Phi \left(\frac{\Phi^{-1}(u) - \boldsymbol{\psi}^\top \mathbf f }{\sqrt{1-\|\boldsymbol{\psi}\|^2}}\right),\ q(0;\boldsymbol{\psi},\mathbf f)=0, \ q(1;\boldsymbol{\psi},\mathbf f)=1.
\label{eq:q_function}
\end{equation}

For the checkerboard partition
\[
\Delta_m = \left(\frac{m-1}{d},\frac{m}{d}\right],\ m=1,\ldots,d,
\]
define
\begin{equation}
\omega_{jm}(\mathbf f)=q\left(\frac{m}{d};\boldsymbol{\psi}_j,\mathbf f\right)-q\left(\frac{m-1}{d};\boldsymbol{\psi}_j,\mathbf f\right).
\label{eq:omega}
\end{equation}

Then
\(\mathbb{P}(U_j\in\Delta_m \mid \mathbf F=\mathbf f )=\omega_{jm}(\mathbf f).\) Since the transformed variables are conditionally independent,
\[
\mathbb{P}(U_1\in\Delta_{m_1},\ldots,U_k\in\Delta_{m_k}\mid\mathbf F=\mathbf f)
=
\prod_{j=1}^{k}\omega_{jm_j}(\mathbf f).
\]

Integrating over the latent factor distribution yields

\begin{equation}
\begin{aligned}
&
\mathbb{P}(U_1\in\Delta_{m_1},\ldots,U_k\in\Delta_{m_k})\\
& \qquad \qquad
=\frac{1}{(2\pi)^{J/2}}\int_{\mathbb R^J}\prod_{j=1}^{k}\omega_{jm_j}(\mathbf f)\exp\left(-\frac{1}{2}\mathbf f^\top\mathbf f\right)d\mathbf f.
\end{aligned}
\label{eq:checkerboard_cell_probability}
\end{equation}

The corresponding checkerboard weight is therefore 
\begin{equation}
w_{m_1,\ldots,m_k} =d^k\mathbb{P}(U_1\in\Delta_{m_1},\ldots,U_k\in\Delta_{m_k}).
\label{eq:checkerboard_weight}
\end{equation}

Since the Gaussian copula has standard uniform margins, the tensor \((w_{m_1,\ldots,m_k})\)  automatically satisfies the checkerboard marginal constraints and therefore defines a valid checkerboard copula.

\section{Proof of Propositions} \label{proof_of_pro}
\subsection{Proof of Proposition \ref{prop_eqpmfN}} \label{proof_of_pro1}
\begin{proof}
Conditioning on $\boldsymbol{\Lambda}_i$ gives
\[
\begin{aligned}
&
\mathbb{P}(\mathbf N_i=\mathbf n_i
\mid
\mathbf X_i=\mathbf x_i)
\\
& \qquad  \qquad =
\mathbb{E}\!\left[
\mathbb{P}(
\mathbf N_i=\mathbf n_i
\mid
\boldsymbol{\Lambda}_i,
\mathbf X_i=\mathbf x_i
)
\Bigm|
\mathbf X_i=\mathbf x_i
\right].
\end{aligned}
\]

By conditional independence,
\[
\begin{aligned}  
&\mathbb{P}(
\mathbf N_i=\mathbf n_i
\mid
\boldsymbol{\Lambda}_i=\boldsymbol{\lambda}_i,
\mathbf X_i=\mathbf x_i
)\\
&\qquad \qquad  =
\prod_{j=1}^{k}
\mathbb{P}(
N_{ij}=n_{ij}
\mid
\Lambda_{ij}=\lambda_{ij},
\mathbf X_i=\mathbf x_i
).    
\end{aligned}
\]

Since
\(
N_{ij}\mid
\Lambda_{ij}=\lambda_{ij}
\sim
\mathrm{Poisson}(\lambda_{ij}),
\) we have 
\[
\mathbb{P}(
N_{ij}=n_{ij}
\mid
\Lambda_{ij}=\lambda_{ij},
\mathbf X_i=\mathbf x_i
)
=
\frac{
e^{-\lambda_{ij}}
\lambda_{ij}^{n_{ij}}
}{
\Gamma(n_{ij}+1)
} 
\]

Therefore,
\[
\begin{aligned}
\mathbb{P}(
\mathbf N_i=\mathbf n_i
\mid
\mathbf X_i=\mathbf x_i
)=
\mathbb{E}\!\left[
\prod_{j=1}^{k}
\frac{
e^{-\Lambda_{ij}}
\Lambda_{ij}^{n_{ij}}
}{
\Gamma(n_{ij}+1)
}
\Bigm|
\mathbf X_i=\mathbf x_i
\right].
\end{aligned}
\]

Define
\[
U_{ij}
=
F_j(\Lambda_{ij};\mu_{ij},\phi_j),
\qquad
j=1,\ldots,k.
\]
Since each \(F_j\) is continuous and strictly increasing,
\(
U_{ij}\sim
\mathrm{Uniform}(0,1),
\) and 
\(
\Lambda_{ij}
=
Q_j(U_{ij};\mu_{ij},\phi_j)
\)
Moreover,
\(
(U_{i1},\ldots,U_{ik})^\top
\) has copula density \(c_{\boldsymbol\Lambda} \left(.;\theta\right)\).

Substituting the quantile representation into the previous expectation gives
\[
\begin{aligned}
&
\mathbb{P}(
\mathbf N_i=\mathbf n_i
\mid
\mathbf X_i=\mathbf x_i
)
\\
& \qquad =
\mathbb{E}\!\left[
\prod_{j=1}^{k}
\frac{
\exp\!\left\{
-Q_j(U_{ij};\mu_{ij},\phi_j)
\right\}
Q_j(U_{ij};\mu_{ij},\phi_j)^{n_{ij}}
}{
\Gamma(n_{ij}+1)
}
\right]
\end{aligned}
\]

Finally, integrating with respect to the copula density of
\((U_{i1},\ldots,U_{ik})^\top\)
yields
\[
\begin{aligned}
&
\mathbb{P}(
\mathbf N_i=\mathbf n_i
\mid
\mathbf X_i=\mathbf x_i
)
\\
& \ =
\int_0^1\cdots\int_0^1
\left[
\prod_{j=1}^{k}
\frac{
\exp\!\left\{
-Q_j(u_j;\mu_{ij},\phi_j)
\right\}
Q_j(u_j;\mu_{ij},\phi_j)^{n_{ij}}
}{
\Gamma(n_{ij}+1)
}
\right]
\\
&\hspace{4cm}\times
c_{\boldsymbol\Lambda}(u_1,\ldots,u_k; \theta)
\,du_1\cdots du_k,
\end{aligned}
\]
which establishes \eqref{eq:jointpmf_uniform}.

\end{proof}

\subsection{Proof of Proposition \ref{prop_checkerboard_pmf}} \label{proof_of_pro2}
\begin{proof}

Substituting the checkerboard copula density
\(c_W\)
into Equation~(\ref{eq:jointpmf_uniform}) yields
\begin{equation*}
\begin{aligned}
&
\mathbb{P}(\mathbf N_i=\mathbf n_i
\mid
\mathbf X_i=\mathbf x_i)
\\
& \qquad =
\int_{[0,1]^k}
\left[
\prod_{j=1}^{k}
\frac{
\exp\!\left\{
-Q_j(u_j;\mu_{ij},\phi_j)
\right\}
Q_j(u_j;\mu_{ij},\phi_j)^{n_{ij}}
}{
\Gamma(n_{ij}+1)
}
\right] \\
&\hspace{4cm}\times
c_{W}(u_1,\ldots,u_k)
\,du_1\cdots du_k .
\end{aligned}
\end{equation*}

Since the checkerboard cells
\[
H_{m_1,\ldots,m_k}
=
(u_{m_1-1},u_{m_1}]
\times\cdots\times
(u_{m_k-1},u_{m_k}],
\]
form a partition of the unit hypercube (up to sets of Lebesgue measure
zero), the above integral can be decomposed as
\[
\begin{aligned}
&
\mathbb{P}(\mathbf N_i=\mathbf n_i
\mid
\mathbf X_i=\mathbf x_i)
\\
&\qquad=
\sum_{m_1=1}^{d}
\cdots
\sum_{m_k=1}^{d}
\\
& \quad \quad \times
\int_{H_{m_1,\ldots,m_k}}
\left[
\prod_{j=1}^{k}
\frac{
\exp\!\left\{
-Q_j(u_j;\mu_{ij},\phi_j)
\right\}
Q_j(u_j;\mu_{ij},\phi_j)^{n_{ij}}
}{
\Gamma(n_{ij}+1)
}
\right] \\
&\hspace{4cm}\times
c_{W}(u_1,\ldots,u_k)
\,du_1\cdots du_k .
\end{aligned}
\]

By the definition of the checkerboard copula density,
\[
c_W(u_1,\ldots,u_k)
=
w_{m_1,\ldots,m_k},
\qquad
(u_1,\ldots,u_k)\in
H_{m_1,\ldots,m_k},
\]
and therefore
\[
\begin{aligned}
&
\mathbb{P}(\mathbf N_i=\mathbf n_i
\mid
\mathbf X_i=\mathbf x_i)
\\
&\qquad=
\sum_{m_1=1}^{d}
\cdots
\sum_{m_k=1}^{d}
w_{m_1,\ldots,m_k}
\int_{u_{m_1-1}}^{u_{m_1}}
\cdots
\int_{u_{m_k-1}}^{u_{m_k}}
\\
&\qquad \ \times
\prod_{j=1}^{k}
\frac{
\exp\!\left\{
-Q_j(u_j;\mu_{ij},\phi_j)
\right\}
Q_j(u_j;\mu_{ij},\phi_j)^{n_{ij}}
}{
\Gamma(n_{ij}+1)
}
\,du_1\cdots du_k .
\end{aligned}
\]

Since the integrand is a product of functions depending only on individual variables, Fubini's theorem implies that the multidimensional integral factorizes as
\[
\begin{aligned}
&
\int_{u_{m_1-1}}^{u_{m_1}}
\cdots
\int_{u_{m_k-1}}^{u_{m_k}}
\\
& \ \times
\prod_{j=1}^{k}
\frac{
\exp\!\left\{
-Q_j(u_j;\mu_{ij},\phi_j)
\right\}
Q_j(u_j;\mu_{ij},\phi_j)^{n_{ij}}
}{
\Gamma(n_{ij}+1)
}
\,du_1\cdots du_k
\\
& \quad =
\prod_{j=1}^{k}
\int_{u_{m_j-1}}^{u_{m_j}}
\frac{
\exp\!\left\{
-Q_j(u;\mu_{ij},\phi_j)
\right\}
Q_j(u;\mu_{ij},\phi_j)^{n_{ij}}
}{
\Gamma(n_{ij}+1)
}
\,du
\\
& \quad =
\prod_{j=1}^{k}
I_{ij}^{(m_j)},
\end{aligned}
\]
where \(I_{ij}^{(m)}\) is defined in
(\ref{eq:cell_probability}).

Substituting the above identity into the previous expression gives
\[
\mathbb{P}(\mathbf N_i=\mathbf n_i
\mid
\mathbf X_i=\mathbf x_i)
=
\sum_{m_1=1}^{d}
\cdots
\sum_{m_k=1}^{d}
w_{m_1,\ldots,m_k}
\prod_{j=1}^{k}
I_{ij}^{(m_j)},
\]
which is precisely Equation~(\ref{eq:checkerboard_pmf}).

\end{proof}

\subsection{Proof of Proposition~\ref{prop_gaussian_factor}} \label{proof_of_pro_3}

Conditional on the common factor vector \(\mathbf F=\mathbf f\), the checkerboard copula in Proposition~\ref{prop_checkerboard_pmf} becomes 
\[
\mathbb{P}(\mathbf N_i=\mathbf n_i \mid \mathbf X_i=\mathbf x_i,\mathbf F=\mathbf f )=\prod_{j=1}^{k}
p_{ij}^{N} (\mathbf f;n_{ij}),\]
where
\[
p_{ij}^{N}(\mathbf f;n)=\sum_{m=1}^{d}I_{ij}^{(m)}(n)\,\omega_{jm}(\mathbf f).
\]

Finally, integrating with respect to the multivariate standard normal density of
\(\mathbf F\) gives
\[
\begin{aligned}
&
\mathbb{P}(\mathbf N_i=\mathbf n_i\mid\mathbf X_i=\mathbf x_i)
\\
& \qquad =
\frac{1}{(2\pi)^{J/2}}\int_{\mathbb R^J}\prod_{j=1}^{k}p_{ij}^{N}(\mathbf f;n_{ij})
\exp \left(-\frac12\mathbf f^\top\mathbf f\right)d\mathbf f,    
\end{aligned}
\]
which is exactly the representation stated in Proposition~\ref{prop_gaussian_factor}. \hfill $\square$

\subsection{Proof of Proposition \ref{prop_joint_pmf_zero_inflated}} \label{proof_of_pro_4}
\begin{proof}

Conditional on \(\mathbf X_i=\mathbf x_i\), the observed response satisfies
\[
Y_{ij}=B_{ij}N_{ij},
\qquad j=1,\ldots,k.
\]
For a given realization \(\mathbf y_i\), any compatible configuration
\(\mathbf b\) must satisfy \(b_j=1\) whenever \(y_{ij}>0\). Therefore,
the set of admissible configurations of the structural-zero indicators is
\[
\mathcal B(\mathbf y_i)
=
\left\{
\mathbf b\in\{0,1\}^k:
b_j=1
\text{ whenever }
y_{ij}>0
\right\}.
\]

By the law of total probability,
\begin{equation}
\begin{aligned}
&
\mathbb P
\left(
\mathbf Y_i=\mathbf y_i
\mid
\mathbf X_i=\mathbf x_i
\right)
\\
& \qquad =
\sum_{\mathbf b\in\mathcal B(\mathbf y_i)}
\mathbb P
\left(
\mathbf Y_i=\mathbf y_i,
\mathbf B_i=\mathbf b
\mid
\mathbf X_i=\mathbf x_i
\right).
\end{aligned}
\label{eq:proof_total_probability}
\end{equation}
Since \(\mathbf B_i\) and \(\mathbf N_i\) are conditionally independent
given \(\mathbf X_i\), it follows that
\begin{equation}
\begin{aligned}
&
\mathbb P
\left(
\mathbf Y_i=\mathbf y_i,
\mathbf B_i=\mathbf b
\mid
\mathbf X_i=\mathbf x_i
\right)
\\
& \qquad =
p_{\mathbf B,i}(\mathbf b)
\,
\mathbb P
\left(
B_{ij}N_{ij}=y_{ij},
\ j=1,\ldots,k
\mid
\mathbf B_i=\mathbf b,
\mathbf X_i=\mathbf x_i
\right)
\\
& \qquad =
p_{\mathbf B,i}(\mathbf b)
\,
\mathbb P
\left(
N_{ij}=y_{ij}\ \text{for all }j\text{ such that }b_j=1
\mid
\mathbf X_i=\mathbf x_i
\right),
\end{aligned}
\label{eq:proof_conditional_independence}
\end{equation}
where no restriction is imposed on \(N_{ij}\) when \(b_j=0\).

Under the checkerboard copula construction, the joint distribution of
\(\mathbf N_i\), conditional on \(\mathbf X_i=\mathbf x_i\), can be
represented as a mixture over the \(d^k\) checkerboard cells. Consequently,
for a fixed configuration \(\mathbf b\),
\begin{equation}
\begin{aligned}
&
\mathbb P
\left(
N_{ij}=y_{ij}\ \text{for all }j\text{ such that }b_j=1
\mid
\mathbf X_i=\mathbf x_i
\right)
\\
& \ \qquad=
\sum_{m_1=1}^{d}
\cdots
\sum_{m_k=1}^{d}
w_{m_1,\ldots,m_k}
\prod_{j=1}^{k}
\widetilde I_{ij}^{(m_j)}(y_{ij},b_j),
\end{aligned}
\label{eq:proof_checkerboard_expansion}
\end{equation}
where, if \(b_j=1\), the corresponding count must satisfy
\(N_{ij}=y_{ij}\), and hence its contribution is
\[
I_{ij}^{(m_j)}(y_{ij}).
\]
If \(b_j=0\), the value of \(N_{ij}\) is unrestricted. Its contribution is
therefore obtained by summing over all possible counts:
\begin{equation}
\begin{aligned}
\sum_{n=0}^{\infty}I_{ij}^{(m_j)}(n)
&=
\int_{(m_j-1)/d}^{m_j/d}
\sum_{n=0}^{\infty}
\frac{
\exp\{-Q_{\Lambda_{ij}}(u)\}
Q_{\Lambda_{ij}}(u)^n
}{
n!
}
\,du
\\
&=
\int_{(m_j-1)/d}^{m_j/d}1\,du
=
\frac{1}{d}\cdot
\end{aligned}
\label{eq:proof_unrestricted_count}
\end{equation}
Thus, for \(b\in\{0,1\}\),
\[
\widetilde I_{ij}^{(m)}(y,b)
=
b\,I_{ij}^{(m)}(y)
+
(1-b)\frac{1}{d}\cdot
\]

Substituting \eqref{eq:proof_checkerboard_expansion} into
\eqref{eq:proof_conditional_independence}, and then into
\eqref{eq:proof_total_probability}, gives
\[
\begin{aligned}
&
\mathbb P
\left(
\mathbf Y_i=\mathbf y_i
\mid
\mathbf X_i=\mathbf x_i
\right)
\\
& \qquad =
\sum_{\mathbf b\in\mathcal B(\mathbf y_i)}
p_{\mathbf B,i}(\mathbf b)
\sum_{m_1=1}^{d}
\cdots
\sum_{m_k=1}^{d}
w_{m_1,\ldots,m_k}
\prod_{j=1}^{k}
\widetilde I_{ij}^{(m_j)}(y_{ij},b_j).
\end{aligned}
\]
Since the checkerboard weights do not depend on \(\mathbf b\), the finite
sums can be interchanged, yielding
\[
\begin{aligned}
&
\mathbb P
\left(
\mathbf Y_i=\mathbf y_i
\mid
\mathbf X_i=\mathbf x_i
\right)
\\
& \qquad =
\sum_{m_1=1}^{d}
\cdots
\sum_{m_k=1}^{d}
w_{m_1,\ldots,m_k}
\sum_{\mathbf b\in\mathcal B(\mathbf y_i)}
p_{\mathbf B,i}(\mathbf b)
\prod_{j=1}^{k}
\widetilde I_{ij}^{(m_j)}(y_{ij},b_j),
\end{aligned}
\]
which proves \eqref{eq:joint_pmf_Y}.

It remains to derive \(p_{\mathbf B,i}(\mathbf b)\). By construction,
\(
B_{ij}
=
\mathbb I(U_{ij}\leq\pi_{ij}),
\) so that 
\[
B_{ij}=1
\quad\Longleftrightarrow\quad
U_{ij}\in(0,\pi_{ij}],
\]
whereas
\[
B_{ij}=0
\quad\Longleftrightarrow\quad
U_{ij}\in(\pi_{ij},1].
\]
For a given \(b_j\), define the lower and upper endpoints by
\[
\ell_{ij}(b_j)=(1-b_j)\pi_{ij},
\qquad
u_{ij}(b_j)=(1-b_j)+b_j\pi_{ij}.
\]
Hence,
\[
p_{\mathbf B,i}(\mathbf b)
=
\mathbb P
\left(
\ell_{ij}(b_j)<U_{ij}\leq u_{ij}(b_j),
\ j=1,\ldots,k
\mid
\mathbf X_i=\mathbf x_i
\right).
\]
Applying the multivariate inclusion--exclusion formula to this rectangle
gives
\[
p_{\mathbf B,i}(\mathbf b)
=
\sum_{\boldsymbol{\delta}\in\{0,1\}^{k}}
(-1)^{k-\sum_{j=1}^{k}\delta_j}
C_{\mathbf B}
\left(
v_{i1}(\delta_1,b_1),
\ldots,
v_{ik}(\delta_k,b_k);
\boldsymbol{\theta}_{\mathbf B}
\right),
\]
where
\[
v_{ij}(\delta,b)
=
(1-\delta)\ell_{ij}(b)
+
\delta u_{ij}(b).
\]
Substituting the expressions for \(\ell_{ij}(b)\) and \(u_{ij}(b)\)
yields
\[
v_{ij}(\delta,b)
=
(1-\delta)(1-b)\pi_{ij}
+
\delta\left[(1-b)+b\pi_{ij}\right],
\]
which completes the proof.

\end{proof}

\subsection{Proof of Proposition \ref{prop_ZI_likelihood}} \label{proof_of_pro_5}

\begin{proof}

For fixed
\(\mathbf z_B\in\mathbb R^{J_B}\) and
\(\mathbf z_{\Lambda}\in\mathbb R^{J_{\Lambda}}\), consider the conditional
probability
\[
\mathbb P
\left(
Y_{ij}=y
\mid
\mathbf X_i=\mathbf x_i,
\mathbf Z_B=\mathbf z_B,
\mathbf Z_{\Lambda}=\mathbf z_{\Lambda}
\right).
\]

Since \(Y_{ij}=B_{ij}N_{ij}\), the event \(Y_{ij}=0\) can occur in two
mutually exclusive ways:
\[
\{B_{ij}=0\}
\qquad\text{or}\qquad
\{B_{ij}=1,N_{ij}=0\}.
\]

Therefore, using the conditional independence of \(B_{ij}\) and \(N_{ij}\),
\begin{align}
&
\mathbb P
\left(
Y_{ij}=0
\mid
\mathbf X_i=\mathbf x_i,
\mathbf Z_B=\mathbf z_B,
\mathbf Z_{\Lambda}=\mathbf z_{\Lambda}
\right)
\nonumber\\
&=
\mathbb P
\left(
B_{ij}=0
\mid
\mathbf X_i=\mathbf x_i,
\mathbf Z_B=\mathbf z_B
\right)
\nonumber\\
&\ \ +
\mathbb P
\left(
B_{ij}=1
\mid
\mathbf X_i=\mathbf x_i,
\mathbf Z_B=\mathbf z_B
\right)
\mathbb P
\left(
N_{ij}=0
\mid
\mathbf X_i=\mathbf x_i,
\mathbf Z_{\Lambda}=\mathbf z_{\Lambda}
\right)
\nonumber\\
&=
1-p_{ij}^{B}(\mathbf z_B)
+
p_{ij}^{B}(\mathbf z_B)
p_{ij}^{N}(\mathbf z_{\Lambda};0).
\label{eq:proof_conditional_zero_probability}
\end{align}

For \(y>0\), the equality \(Y_{ij}=y\) requires both
\(B_{ij}=1\) and \(N_{ij}=y\). Hence,
\begin{align}
&
\mathbb P
\left(
Y_{ij}=y
\mid
\mathbf X_i=\mathbf x_i,
\mathbf Z_B=\mathbf z_B,
\mathbf Z_{\Lambda}=\mathbf z_{\Lambda}
\right)
\nonumber\\
&=
\mathbb P
\left(
B_{ij}=1
\mid
\mathbf X_i=\mathbf x_i,
\mathbf Z_B=\mathbf z_B
\right)
\mathbb P
\left(
N_{ij}=y
\mid
\mathbf X_i=\mathbf x_i,
\mathbf Z_{\Lambda}=\mathbf z_{\Lambda}
\right)
\nonumber\\
&=
p_{ij}^{B}(\mathbf z_B)
p_{ij}^{N}(\mathbf z_{\Lambda};y).
\label{eq:proof_conditional_positive_probability}
\end{align}

Under the Gaussian factor construction for the structural-zero component,
the latent uniform variable associated with \(B_{ij}\) can be represented as
\[
U_{B,ij}
=
\Phi\!\left(
\boldsymbol{\psi}_{B,j}^{\top}\mathbf Z_B
+
\sqrt{1-\|\boldsymbol{\psi}_{B,j}\|^2}\,
\varepsilon_{B,ij}
\right),
\]
where
\(\varepsilon_{B,ij}\sim N(0,1)\), independently across \(j\) and
independently of \(\mathbf Z_B\), and
\[
B_{ij}
=
\mathbb I(U_{B,ij}\leq\pi_{ij}).
\]
Consequently,
\begin{align}
p_{ij}^{B}(\mathbf z_B)
&=
\mathbb P
\left(
U_{B,ij}\leq\pi_{ij}
\mid
\mathbf Z_B=\mathbf z_B,
\mathbf X_i=\mathbf x_i
\right)
\nonumber\\
&=
\mathbb P\left(
\varepsilon_{B,ij}
\leq
\frac{
\Phi^{-1}(\pi_{ij})
-
\boldsymbol{\psi}_{B,j}^{\top}\mathbf z_B
}{
\sqrt{1-\|\boldsymbol{\psi}_{B,j}\|^2}
}
\right)
\nonumber\\
&=
\Phi\!\left(
\frac{
\Phi^{-1}(\pi_{ij})
-
\boldsymbol{\psi}_{B,j}^{\top}\mathbf z_B
}{
\sqrt{1-\|\boldsymbol{\psi}_{B,j}\|^2}
}
\right).
\end{align}

Similarly, under the Gaussian factor construction for the latent-intensity
checkerboard copula, conditional on
\(\mathbf Z_{\Lambda}=\mathbf z_{\Lambda}\), the probability that the
latent uniform variable associated with the \(j\)-th intensity lies in the
\(m\)-th checkerboard interval is
\[
\omega_{\Lambda,jm}(\mathbf z_{\Lambda})
=
q\!\left(
\frac{m}{d};
\boldsymbol{\psi}_{\Lambda,j},
\mathbf z_{\Lambda}
\right)
-
q\!\left(
\frac{m-1}{d};
\boldsymbol{\psi}_{\Lambda,j},
\mathbf z_{\Lambda}
\right).
\]
It follows from Proposition~\ref{prop_gaussian_factor} that
\[
p_{ij}^{N}(\mathbf z_{\Lambda};n)
=
\sum_{m=1}^{d}
I_{ij}^{(m)}(n)
\omega_{\Lambda,jm}(\mathbf z_{\Lambda}).
\]

Finally, conditional on
\(\mathbf X_i=\mathbf x_i\),
\(\mathbf Z_B=\mathbf z_B\), and
\(\mathbf Z_{\Lambda}=\mathbf z_{\Lambda}\), the responses
\(Y_{i1},\ldots,Y_{ik}\) are independent. Thus,
\begin{align}
&
\mathbb P
\left(
\mathbf Y_i=\mathbf y_i
\mid
\mathbf X_i=\mathbf x_i,
\mathbf Z_B=\mathbf z_B,
\mathbf Z_{\Lambda}=\mathbf z_{\Lambda}
\right)
\nonumber\\
& \qquad =
\prod_{j=1}^{k}
L_{ij}(\mathbf z_B,\mathbf z_{\Lambda};y_{ij}).
\label{eq:proof_conditional_joint_Y}
\end{align}

Integrating \eqref{eq:proof_conditional_joint_Y} with respect to the
independent standard Gaussian densities of \(\mathbf Z_B\) and
\(\mathbf Z_{\Lambda}\) gives
\begin{align*}
&
\mathbb P
\left(
\mathbf Y_i=\mathbf y_i
\mid
\mathbf X_i=\mathbf x_i
\right)
\\
& \qquad =
\frac{1}{(2\pi)^{(J_B+J_{\Lambda})/2}}
\int_{\mathbb R^{J_B}}
\int_{\mathbb R^{J_{\Lambda}}}
\prod_{j=1}^{k}
L_{ij}(\mathbf z_B,\mathbf z_{\Lambda};y_{ij})
\\
&\qquad\qquad \qquad\times
\exp\!\left(
-\frac{1}{2}\mathbf z_B^\top\mathbf z_B
-\frac{1}{2}\mathbf z_{\Lambda}^\top\mathbf z_{\Lambda}
\right)
\,d\mathbf z_{\Lambda}\,d\mathbf z_B,
\end{align*}
which proves \eqref{eq:ZI_likelihood}.
\end{proof}

\section{Mixing Distributions}

\label{app_distribution_for_lambda}

This appendix summarizes the mixing distributions considered throughout the paper. Each distribution is parameterized in its natural form while preserving a common regression specification based on the conditional mean \(\mu\). Specifically, the distributional parameters are expressed as functions of \(\mu\) together with a dispersion parameter \(\phi\), so that all candidate models share the same interpretation of the regression coefficients. By default, each distribution adopts its natural variance function implied by this parameterization. When appropriate, the proposed framework also allows alternative variance functions of the form
\[
\mathbb V(\Lambda)=\phi\mu^{p},
\qquad
p\in\{1,2,3\},
\]
through suitable reparameterization of the underlying distributional parameters. For each mixing distribution, we provide the corresponding density, cumulative distribution, quantile, and variance functions under the parameterizations adopted in the proposed methodology.

\subsection{Gamma mixing distribution}

Throughout this paper, the Gamma mixing distribution is parameterized by its
mean \(\mu>0\) and dispersion parameter \(\phi>0\) through the variance
function
\begin{equation}
\mathbb E(\Lambda)=\mu,
\qquad
\mathbb V(\Lambda)=\mu^2\phi.
\label{eq:gamma_var_mu2}
\end{equation}
Equivalently, if
\(\Lambda\sim\mathrm{Gamma}(\alpha,\beta)\),
where \(\alpha\) and \(\beta\) denote the shape and rate parameters,
respectively, then
\[
\alpha=\frac{1}{\phi},
\qquad
\beta=\frac{1}{\mu\phi}.
\]

More generally, the proposed implementation allows the variance function to
be specified as
\begin{equation}
\mathbb V(\Lambda)=\mu^{p}\phi,
\qquad
p\in\{1,2,3\},
\label{eq:gamma_general_variance}
\end{equation}
which includes \eqref{eq:gamma_var_mu2} as the default choice (\(p=2\)).
Under \eqref{eq:gamma_general_variance}, the corresponding shape and rate
parameters become
\[
\alpha=\frac{\mu^{2-p}}{\phi},
\qquad
\beta=\frac{\mu^{1-p}}{\phi},
\]
thereby preserving the prescribed mean and variance.

Under the default parameterization (\(p=2\)), the density function is
\[
f_\Lambda(\lambda;\mu,\phi)
=
\frac{(1/(\mu\phi))^{1/\phi}}
{\Gamma(1/\phi)}
\lambda^{1/\phi-1}
\exp\!\left(
-\frac{\lambda}{\mu\phi}
\right),
\qquad
\lambda>0,
\]
the cumulative distribution function is
\[
F_\Lambda(\lambda;\mu,\phi)
=
\frac{
\gamma\!\left(
1/\phi,
\lambda/(\mu\phi)
\right)
}{
\Gamma(1/\phi)
},
\]
and the quantile function is
\[
Q_\Lambda(u;\mu,\phi)
=
\mu\phi\,
P^{-1}\!\left(
1/\phi,
u
\right),
\qquad
0<u<1,
\]
where
\[
\gamma(a,x)
=
\int_0^x
t^{a-1}e^{-t}\,dt
\]
is the lower incomplete Gamma function, and
\(P^{-1}(a,u)\) denotes the inverse of the regularized incomplete Gamma
function with respect to its second argument.

\subsection{Lognormal mixing distribution}

Throughout this paper, the Lognormal mixing distribution is parameterized by
its mean \(\mu>0\) and dispersion parameter \(\phi>0\) through the variance
function
\begin{equation}
\mathbb E(\Lambda)=\mu,
\qquad
\mathbb V(\Lambda)=\mu^2\phi.
\label{eq:lnorm_var_mu2}
\end{equation}
Equivalently, if
\[
\log(\Lambda)\sim N(\eta,\sigma^2),
\]
then
\[
\sigma^2=\log(1+\phi),
\qquad
\eta=\log(\mu)-\frac12\log(1+\phi).
\]

More generally, the proposed implementation allows the variance function to
be specified as
\begin{equation}
\mathbb V(\Lambda)=\mu^{p}\phi,
\qquad
p\in\{1,2,3\},
\label{eq:lnorm_general_variance}
\end{equation}
which includes \eqref{eq:lnorm_var_mu2} as the default choice (\(p=2\)).
Under \eqref{eq:lnorm_general_variance}, the corresponding Lognormal
parameters become
\[
\sigma^2
=
\log\!\left(
1+\phi\mu^{p-2}
\right),
\qquad
\eta
=
\log(\mu)-\frac12\sigma^2,
\]
thereby preserving the prescribed mean and variance.

Under the default parameterization (\(p=2\)), the density function is
\[
\begin{aligned}
&
f_\Lambda(\lambda;\mu,\phi)
=
\frac{1}
{\lambda\sqrt{2\pi\log(1+\phi)}}\\
& \qquad \times 
\exp\!\left[
-
\frac{
\left(
\log\lambda
-
\log\mu
+
\frac12\log(1+\phi)
\right)^2
}
{2\log(1+\phi)}
\right],
\qquad
\lambda>0,    
\end{aligned}
\]
the cumulative distribution function is
\[
F_\Lambda(\lambda;\mu,\phi)
=
\Phi\!\left(
\frac{
\log\lambda
-
\log\mu
+
\frac12\log(1+\phi)
}
{\sqrt{\log(1+\phi)}}
\right),
\]
and the quantile function is
\[
\begin{aligned}
&
Q_\Lambda(u;\mu,\phi)
\\
& \qquad=
\exp\!\left(
\log\mu
-\frac12\log(1+\phi)
+
\sqrt{\log(1+\phi)}\,
\Phi^{-1}(u)
\right),
\\
& \qquad\qquad \qquad\qquad\qquad\qquad\qquad\qquad\qquad\qquad 0<u<1,    
\end{aligned}
\]
where \(\Phi\) and \(\Phi^{-1}\) denote the standard normal distribution function and quantile function, respectively.

\subsection{Pareto mixing distribution}

Throughout this paper, the Pareto Type II (Lomax) mixing distribution is arameterized by its conditional mean \(\mu>0\) and dispersion parameter \(\phi>0\). Unlike the other mixing distributions considered in this paper, the Pareto distribution admits a valid mean--variance parameterization only when the variance is proportional to \(\mu^2\). Consequently, the variance function is fixed as
\begin{equation}
\mathbb E(\Lambda)=\mu,
\qquad
\mathbb V(\Lambda)=\mu^2(1+2\phi),
\label{eq:pareto_var_mu2}
\end{equation}
and no alternative variance powers are considered.

Let
\[
\Lambda\sim\operatorname{Lomax}(\alpha,\beta),
\]
where \(\alpha>2\) and \(\beta>0\) denote the conventional shape and scale
parameters. The Lomax distribution has mean and variance
\[
\mathbb E(\Lambda)
=
\frac{\beta}{\alpha-1},
\qquad
\mathbb V(\Lambda)
=
\frac{\beta^2\alpha}
     {(\alpha-1)^2(\alpha-2)}.
\]

Matching these moments with
\eqref{eq:pareto_var_mu2} yields
\begin{equation}
\alpha
=
2+\frac1\phi,
\qquad
\beta
=
\mu\left(1+\frac1\phi\right).
\label{eq:pareto_parameters_mu_phi}
\end{equation}

Indeed,
\[
\mathbb E(\Lambda)
=
\frac{
\mu(1+1/\phi)
}{
1+1/\phi
}
=
\mu,
\]
and
\[
\begin{aligned}
\mathbb V(\Lambda)
&=
\frac{
\mu^2(1+1/\phi)^2(2+1/\phi)
}{
(1+1/\phi)^2(1/\phi)
}
\\
&=
\mu^2(1+2\phi).
\end{aligned}
\]

The density function is
\begin{equation}
f_\Lambda(\lambda;\mu,\phi)
=
\frac{
\left(2+\frac1\phi\right)
\left[
\mu\left(1+\frac1\phi\right)
\right]^{2+1/\phi}
}{
\left[
\lambda+
\mu\left(1+\frac1\phi\right)
\right]^{3+1/\phi}
},
\qquad
\lambda>0.
\label{eq:pareto_density_mu_phi}
\end{equation}

The cumulative distribution function is
\begin{equation}
F_\Lambda(\lambda;\mu,\phi)
=
1-
\left(
\frac{
\mu(1+1/\phi)
}{
\lambda+\mu(1+1/\phi)
}
\right)^{2+1/\phi},
\qquad
\lambda\ge0,
\label{eq:pareto_cdf_mu_phi}
\end{equation}
and the corresponding quantile function is
\begin{equation}
Q_\Lambda(u;\mu,\phi)
=
\mu\left(1+\frac1\phi\right)
\left[
(1-u)^{-1/(2+1/\phi)}
-1
\right],
\qquad
0<u<1.
\label{eq:pareto_quantile_mu_phi}
\end{equation}

The variance function in
\eqref{eq:pareto_var_mu2} is intentionally fixed throughout this paper.
For the Lomax distribution,
\[
\frac{\mathbb V(\Lambda)}
     {\mathbb E(\Lambda)^2}
=
\frac{\alpha}{\alpha-2},
\]
which is always greater than one for every finite-variance distribution.
Consequently, a general variance specification of the form
\(
\mathbb V(\Lambda)=\phi\mu^p
\)
or
\(
\mathbb V(\Lambda)=\mu^p(1+2\phi)
\)
cannot be valid for arbitrary values of \(\mu\) unless \(p=2\). Fixing \(p=2\) therefore provides the unique mean--variance parameterization that remains valid over the entire parameter space while preserving the simple closed-form expressions for the density, distribution and quantile functions.

\subsection{Inverse Gaussian mixing distribution}

Throughout this paper, the Inverse Gaussian mixing distribution is parameterized by its mean \(\mu>0\) and dispersion parameter \(\phi>0\) through the variance function
\begin{equation}
\mathbb E(\Lambda)=\mu,
\qquad
\mathbb V(\Lambda)=\mu^3\phi.
\label{eq:ig_var_mu3}
\end{equation}
Equivalently, if
\(\Lambda\sim\mathrm{IG}(\mu,\lambda)\),
where \(\lambda>0\) denotes the shape parameter, then
\[
\lambda=\frac{1}{\phi}.
\]

More generally, the proposed implementation allows the variance function to be specified as
\begin{equation}
\mathbb V(\Lambda)=\mu^{p}\phi,
\qquad
p\in\{1,2,3\},
\label{eq:ig_general_variance}
\end{equation}
which includes \eqref{eq:ig_var_mu3} as the default choice (\(p=3\)).
Under \eqref{eq:ig_general_variance}, the corresponding shape parameter is
\[
\lambda
=
\frac{\mu^{3-p}}{\phi},
\]
thereby preserving the prescribed mean and variance.

Under the default parameterization (\(p=3\)), the density function is
\[
f_\Lambda(\lambda;\mu,\phi)
=
\sqrt{\frac{1}{2\pi\phi\lambda^{3}}}
\exp\!\left[
-
\frac{(\lambda-\mu)^2}
{2\phi\mu^2\lambda}
\right],
\qquad
\lambda>0,
\]
the cumulative distribution function is
\[
F_\Lambda(\lambda;\mu,\phi)
=
\Phi\!\left(
\sqrt{\frac{1}{\phi\lambda}}
\left(
\frac{\lambda}{\mu}-1
\right)
\right)
+
\exp\!\left(
\frac{2}{\phi\mu}
\right)
\Phi\!\left(
-
\sqrt{\frac{1}{\phi\lambda}}
\left(
\frac{\lambda}{\mu}+1
\right)
\right),
\]
and the quantile function
\[
Q_\Lambda(u;\mu,\phi)
=
F_\Lambda^{-1}(u;\mu,\phi),
\qquad
0<u<1,
\]
is obtained numerically by inverting the cumulative distribution function,
where \(\Phi\) denotes the standard normal distribution function.

\subsection{Weibull mixing distribution}

Throughout this paper, the Weibull mixing distribution is parameterized by
its mean \(\mu>0\) and dispersion parameter \(\phi>0\) through the variance
function
\begin{equation}
\mathbb E(\Lambda)=\mu,
\qquad
\mathbb V(\Lambda)=\mu^2\phi.
\label{eq:weibull_var_mu2}
\end{equation}

Let
\(
\Lambda\sim\operatorname{Weibull}(\kappa,\beta),
\) where \(\kappa>0\) and \(\beta>0\) are the conventional shape and scale parameters, respectively. To distinguish the dispersion parameter \(\phi\) in \eqref{eq:weibull_var_mu2} from the parameter entering the conventional Weibull shape, define \(\phi_W>0\) as the unique solution of 
\begin{equation}
\frac{\Gamma(1+2\phi_W)}
{\Gamma(1+\phi_W)^2}
-1
=
\phi.
\label{eq:weibull_phi_equation}
\end{equation}

The Weibull shape and scale parameters are then specified as
\begin{equation}
\kappa
=
\frac{1}{\phi_W},
\qquad
\beta
=
\frac{\mu}{\Gamma(1+\phi_W)}
\label{eq:weibull_parameters_mu_phi}
\end{equation}

Indeed, under the conventional Weibull parameterization,
\[
\begin{aligned} 
\mathbb E(\Lambda)
&=
\beta\,
\Gamma\!\left(1+\frac{1}{\kappa}\right),
\\
\mathbb V(\Lambda)
&=
\beta^2
\left[
\Gamma\!\left(1+\frac{2}{\kappa}\right)
-
\Gamma\!\left(1+\frac{1}{\kappa}\right)^2
\right]
\end{aligned}
\]
Since \(1/\kappa=\phi_W\), substituting
\eqref{eq:weibull_parameters_mu_phi} gives
\[
\mathbb E(\Lambda)
=
\frac{\mu}{\Gamma(1+\phi_W)}
\Gamma(1+\phi_W)
=
\mu,
\]
and
\[
\begin{aligned}
\mathbb V(\Lambda)
&=
\frac{\mu^2}{\Gamma(1+\phi_W)^2}
\left[
\Gamma(1+2\phi_W)
-
\Gamma(1+\phi_W)^2
\right]
\\
&=
\mu^2
\left[
\frac{\Gamma(1+2\phi_W)}
{\Gamma(1+\phi_W)^2}
-1
\right]
\\
&=
\mu^2\phi,
\end{aligned}
\]
where the last equality follows from
\eqref{eq:weibull_phi_equation}.

Therefore, \(\phi\) is the multiplicative dispersion parameter in
\eqref{eq:weibull_var_mu2}, whereas \(\phi_W\) is an auxiliary
shape-related parameter determined implicitly by \(\phi\). Equivalently,
if
\[
h_W(a)
=
\frac{\Gamma(1+2a)}
{\Gamma(1+a)^2}
-1,
\qquad a>0,
\]
then
\[
\phi_W=h_W^{-1}(\phi).
\]

Under the \((\mu,\phi)\) parameterization, the density function can be
written as
\begin{equation}
\begin{aligned}
f_\Lambda(\lambda;\mu,\phi)
&=
\frac{1}{\phi_W}
\left[
\frac{\Gamma(1+\phi_W)}{\mu}
\right]^{1/\phi_W}
\lambda^{1/\phi_W-1}
\\
&\quad\times
\exp\!\left\{
-
\left[
\frac{\Gamma(1+\phi_W)\lambda}{\mu}
\right]^{1/\phi_W}
\right\},
\ 
\lambda>0,
\end{aligned}
\label{eq:weibull_density_mu_phi}
\end{equation}
where \(\phi_W=h_W^{-1}(\phi)\).

The corresponding cumulative distribution function is
\begin{equation}
F_\Lambda(\lambda;\mu,\phi)
=
1-
\exp\!\left\{
-
\left[
\frac{\Gamma(1+\phi_W)\lambda}{\mu}
\right]^{1/\phi_W}
\right\},
\ 
\lambda>0,
\label{eq:weibull_cdf_mu_phi}
\end{equation}
and the quantile function is
\begin{equation}
Q_\Lambda(u;\mu,\phi)
=
\frac{\mu}{\Gamma(1+\phi_W)}
\left[-\log(1-u)\right]^{\phi_W},
\qquad
0<u<1.
\label{eq:weibull_quantile_mu_phi}
\end{equation}

More generally, the proposed implementation allows the variance function
to be specified as
\begin{equation}
\mathbb V(\Lambda)
=
\mu^p\phi,
\qquad
p\in\{1,2,3\}.
\label{eq:weibull_general_variance}
\end{equation}
In this case, the squared coefficient of variation is
\[
\frac{\mathbb V(\Lambda)}
{\mathbb E(\Lambda)^2}
=
\phi\mu^{p-2}.
\]
Accordingly, define the observation-specific shape-related parameter
\(\phi_W(\mu,\phi,p)>0\) as the unique solution of
\begin{equation}
\frac{
\Gamma\!\left(1+2\phi_W(\mu,\phi,p)\right)
}{
\Gamma\!\left(1+\phi_W(\mu,\phi,p)\right)^2
}
-1
=
\phi\mu^{p-2}.
\label{eq:weibull_general_phi_equation}
\end{equation}
The conventional Weibull shape and scale parameters are then
\begin{equation}
\kappa(\mu,\phi,p)
=
\frac{1}{\phi_W(\mu,\phi,p)},
\qquad
\beta(\mu,\phi,p)
=
\frac{
\mu
}{
\Gamma\!\left(1+\phi_W(\mu,\phi,p)\right)
}.
\label{eq:weibull_general_parameters}
\end{equation}

This construction ensures that
\[
\mathbb E(\Lambda)=\mu
\]
and
\[
\begin{aligned}
\mathbb V(\Lambda)
&=
\mu^2
\left[
\frac{
\Gamma\!\left(1+2\phi_W(\mu,\phi,p)\right)
}{
\Gamma\!\left(1+\phi_W(\mu,\phi,p)\right)^2
}
-1
\right]
\\
&=
\mu^2
\left(
\phi\mu^{p-2}
\right)
\\
&=
\mu^p\phi.
\end{aligned}
\]

Under the general variance specification
\eqref{eq:weibull_general_variance}, the density is
\begin{equation}
\begin{aligned}
&
f_\Lambda(\lambda;\mu,\phi,p)
\\
& \quad=
\frac{1}{\phi_W(\mu,\phi,p)}
\left[
\frac{
\Gamma\!\left(1+\phi_W(\mu,\phi,p)\right)
}{
\mu
}
\right]^{1/\phi_W(\mu,\phi,p)}
\times
\lambda^{1/\phi_W(\mu,\phi,p)-1}
\\
&\quad \quad \times
\exp\!\left\{
-
\left[
\frac{
\Gamma\!\left(1+\phi_W(\mu,\phi,p)\right)
\lambda
}{
\mu
}
\right]^{1/\phi_W(\mu,\phi,p)}
\right\},
\ 
\lambda>0.
\end{aligned}
\label{eq:weibull_general_density}
\end{equation}

The corresponding cumulative distribution function is
\begin{equation}
\begin{aligned}
&
F_\Lambda(\lambda;\mu,\phi,p)
\\
& \qquad=
1-
\exp\!\left\{
-
\left[
\frac{
\Gamma\!\left(1+\phi_W(\mu,\phi,p)\right)
\lambda
}{
\mu
}
\right]^{1/\phi_W(\mu,\phi,p)}
\right\},
\ 
\lambda>0,    
\end{aligned}
\label{eq:weibull_general_cdf}
\end{equation}
and the quantile function is
\begin{equation}
Q_\Lambda(u;\mu,\phi,p)
=
\frac{
\mu
}{
\Gamma\!\left(1+\phi_W(\mu,\phi,p)\right)
}
\left[-\log(1-u)\right]^{\phi_W(\mu,\phi,p)},
\label{eq:weibull_general_quantile}
\end{equation}
where $0<u<1.$

For the default case \(p=2\),
\eqref{eq:weibull_general_phi_equation} reduces to
\[
\frac{\Gamma(1+2\phi_W)}
{\Gamma(1+\phi_W)^2}
-1
=
\phi,
\]
so that \(\phi_W\) depends only on the dispersion parameter \(\phi\).
For \(p=1\) or \(p=3\), however, \(\phi_W(\mu,\phi,p)\) also depends on
the conditional mean \(\mu\), and hence may vary across observations.

\subsection{Log-logistic mixing distribution}

Throughout this paper, the Log-logistic mixing distribution is parameterized
by its mean \(\mu>0\) and dispersion parameter \(\phi>0\) through the
variance function
\begin{equation}
\mathbb E(\Lambda)=\mu,
\qquad
\mathbb V(\Lambda)=\mu^2\phi.
\label{eq:loglogistic_var_mu2}
\end{equation}

More generally, the proposed implementation allows the variance function to
be specified as
\begin{equation}
\mathbb V(\Lambda)=\mu^{p}\phi,
\qquad
p\in\{1,2,3\},
\label{eq:loglogistic_general_variance}
\end{equation}
which includes \eqref{eq:loglogistic_var_mu2} as the default choice
(\(p=2\)).

Let
\(\Lambda\sim\mathrm{LogLogistic}(\beta,\kappa)\),
where \(\beta>0\) and \(\kappa>2\) denote the scale and shape parameters,
respectively. Under
\eqref{eq:loglogistic_general_variance},
the shape parameter is obtained as the unique solution of

\[
\frac{
\dfrac{2\pi/\kappa}
{\sin(2\pi/\kappa)}
}{
\left(
\dfrac{\pi/\kappa}
{\sin(\pi/\kappa)}
\right)^2
}
-1
=
\phi\mu^{p-2},
\]
and the corresponding scale parameter is

\[
\beta
=
\mu
\frac{\sin(\pi/\kappa)}
{\pi/\kappa},
\]
thereby preserving the prescribed mean and variance.

Under the default parameterization (\(p=2\)), the density function is

\[
f_\Lambda(\lambda;\mu,\phi)
=
\frac{\kappa}{\beta}
\frac{(\lambda/\beta)^{\kappa-1}}
{\left[1+(\lambda/\beta)^\kappa\right]^2},
\qquad
\lambda>0,
\]
the cumulative distribution function is

\[
F_\Lambda(\lambda;\mu,\phi)
=
\frac{1}
{1+(\beta/\lambda)^\kappa},
\qquad
\lambda>0,
\]
and the quantile function is

\[
Q_\Lambda(u;\mu,\phi)
=
\beta
\left(
\frac{u}{1-u}
\right)^{1/\kappa},
\qquad
0<u<1.
\]

\section{Evaluation of Marginal and Checkerboard Cell Probabilities}
\label{app:cellprob}

This appendix summarizes the evaluation of the marginal probabilities and
checkerboard cell probabilities required for likelihood inference.

Throughout this appendix, the notation is presented for a single response
variable. The corresponding expressions are applied componentwise to each
margin of the multivariate model.

Let
\[
N_i\mid\Lambda_i
\sim
\mathrm{Poisson}(\Lambda_i),
\]
where the latent intensity follows a positive mixing distribution
\[
\Lambda_i
\sim
F_\Lambda(\mu_i,\phi),
\qquad
\mu_i=\exp(\mathbf x_i^\top\boldsymbol\beta).
\]

Let
\(f_\Lambda\),
\(F_\Lambda\),
and
\(Q_\Lambda\)
denote the density, cumulative distribution function, and quantile function
of the selected mixing distribution, respectively.

The marginal probability can be expressed equivalently in either of the
following forms:
\begin{equation}
\begin{aligned}
\mathbb P(N_i=n_i\mid\mathbf X_i=\mathbf x_i)
&=
\int_0^\infty
\frac{e^{-\lambda}\lambda^{n_i}}
{\Gamma(n_i+1)}
f_\Lambda(\lambda;\mu_i,\phi)
\,d\lambda
\\
&=
\int_0^1
\frac{
e^{-Q_\Lambda(u)}
Q_\Lambda(u)^{n_i}
}
{\Gamma(n_i+1)}
\,du.
\end{aligned}
\label{eq:marginal_two_forms}
\end{equation}

Similarly, for a checkerboard copula with resolution \(d\), the probability
corresponding to the \(m\)-th checkerboard cell can be written as

\begin{equation}
\begin{aligned}
I_i^{(m)}(n_i)
&=
\int_{L_{i,m}}^{U_{i,m}}
\frac{e^{-\lambda}\lambda^{n_i}}
{\Gamma(n_i+1)}
f_\Lambda(\lambda;\mu_i,\phi)
\,d\lambda
\\
&=
\int_{(m-1)/d}^{m/d}
\frac{
e^{-Q_\Lambda(u)}
Q_\Lambda(u)^{n_i}
}
{\Gamma(n_i+1)}
\,du,
\qquad
m=1,\ldots,d,
\end{aligned}
\label{eq:cell_two_forms}
\end{equation}

where

\[
L_{i,m}
=
Q_\Lambda\!\left(\frac{m-1}{d}\right),
\qquad
U_{i,m}
=
Q_\Lambda\!\left(\frac{m}{d}\right).
\]

The two representations in
\eqref{eq:marginal_two_forms}
and
\eqref{eq:cell_two_forms}
are mathematically equivalent. Depending on the selected mixing distribution,
either the density representation or the quantile representation may be more
computationally efficient. Throughout the package, the implementation
automatically adopts the representation that yields the most stable and
efficient evaluation of the likelihood and its derivatives.

Table~\ref{tab:appendix_computation} summarizes the computational strategy
used for each mixing distribution.

\begin{table}[H]
\centering
\renewcommand{\arraystretch}{1.2}
\caption{Evaluation of marginal and checkerboard cell probabilities under different mixing distributions.}
\label{tab:appendix_computation}
\resizebox{\columnwidth}{!}{%
\begin{tabular}{llll}
\hline
\textbf{Mixing distribution} &
\textbf{Representation} &
\textbf{Marginal probability} &
\textbf{Checkerboard cells} \\
\hline
Gamma &
Density &
Closed form &
Closed form \\

LN &
Quantile &
Gauss--Hermite &
Gauss--Hermite \\

IG &
Density / Quantile &
Closed form &
Gauss--Legendre \\

Weibull &
Quantile &
Gauss--Legendre &
Gauss--Legendre \\

Pareto &
Density &
Gauss--Laguerre &
Gauss--Laguerre \\
\hline
\end{tabular}
}
\end{table}

Unless otherwise stated, all likelihood contributions are evaluated on the
logarithmic scale to improve numerical stability. For quadrature-based
methods, the quadrature nodes and weights are independent of the model
parameters. Consequently, score functions and observed information matrices
can be obtained by differentiating directly under the quadrature summation.

\subsection{Gamma Mixing}
\label{marginal_gamma}

For the Gamma mixing distribution, both the marginal probability and the
checkerboard cell probabilities admit analytical expressions. The marginal
probability is evaluated through the Negative Binomial representation, whereas
the checkerboard cell probabilities are computed using differences of
incomplete Gamma functions. Consequently, no numerical quadrature is required.

\subsubsection{Marginal probability}

When the checkerboard consists of a single cell, \(d=1\), the cell probability
coincides with the marginal probability:
\[
I_i(n_i)
=
\mathbb P(N_i=n_i\mid\mathbf X_i=\mathbf x_i).
\]

Since a Poisson distribution with a Gamma mixing distribution yields a Negative Binomial distribution,
\begin{equation*}
\begin{aligned}
&
\mathbb P(N_i=n_i\mid\mathbf X_i=\mathbf x_i)
\\
& \qquad=
\frac{
\Gamma(n_i+\phi^{-1})
}{
\Gamma(\phi^{-1})\Gamma(n_i+1)
}
\left(
\frac{\phi^{-1}}
{\phi^{-1}+\mu_i}
\right)^{\phi^{-1}}
\left(
\frac{\mu_i}
{\phi^{-1}+\mu_i}
\right)^{n_i}.    
\end{aligned}
\label{eq:gamma_marginal_probability}
\end{equation*}

Equivalently,
\[
\mathbb P(N_i=n_i\mid\mathbf X_i=\mathbf x_i)
=
\frac{
\Gamma(n_i+\phi^{-1})
}{
\Gamma(\phi^{-1})\Gamma(n_i+1)
}
\frac{
(\mu_i\phi)^{n_i}
}{
(1+\mu_i\phi)^{n_i+\phi^{-1}}
}.
\]

In the implementation, the marginal probability is evaluated directly on the
logarithmic scale using the Negative Binomial probability mass function.
Thus, no numerical integration is required when \(d=1\).

\subsubsection{Checkerboard cell probabilities}

Although the cell probability also admits the quantile representation
\[
I_i^{(m)}(n_i)
=
\int_{(m-1)/d}^{m/d}
\frac{
e^{-Q_\Lambda(u)}
Q_\Lambda(u)^{n_i}
}{
\Gamma(n_i+1)
}
\,du,
\]
the density representation is computationally more convenient for Gamma
mixing:
\[
I_i^{(m)}(n_i)
=
\int_{L_{i,m}}^{U_{i,m}}
\frac{e^{-\lambda}\lambda^{n_i}}
{\Gamma(n_i+1)}
f_\Lambda(\lambda;\mu_i,\phi)
\,d\lambda.
\]

Substituting the Gamma density yields
\[
\begin{aligned}
&
I_i^{(m)}(n_i)
\\
& \qquad=
\frac{
(\mu_i\phi)^{-1/\phi}
}{
\Gamma(\phi^{-1})\Gamma(n_i+1)
}
\int_{L_{i,m}}^{U_{i,m}}
\lambda^{n_i+\phi^{-1}-1}
\exp\left[
-\left(
1+\frac{1}{\mu_i\phi}
\right)\lambda
\right]
\,d\lambda.    
\end{aligned}
\]

Let
\[
s_i=n_i+\phi^{-1},
\qquad
A_i=1+\frac{1}{\mu_i\phi}.
\]

After the transformation \(t=A_i\lambda\), the cell probability becomes
\begin{equation*}
\begin{aligned}
I_i^{(m)}
={}&
\frac{
(\mu_i\phi)^{-1/\phi}
}{
\Gamma(\phi^{-1})\Gamma(n_i+1)A_i^{s_i}
}
\\
& \qquad \times
\left[
\gamma\left(s_i,A_iU_{i,m}\right)
-
\gamma\left(s_i,A_iL_{i,m}\right)
\right],
\end{aligned}
\label{eq:gamma_cell_probability}
\end{equation*}
where
\[
\gamma(s,x)
=
\int_0^x t^{s-1}e^{-t}\,dt
\]
denotes the lower incomplete Gamma function.

Equivalently, using the regularized lower incomplete Gamma function
\[
P(s,x)
=
\frac{\gamma(s,x)}{\Gamma(s)},
\]
we obtain
\begin{equation}
\begin{aligned}
I_i^{(m)}(n_i)
={}&
\frac{
\Gamma(s_i)
}{
\Gamma(\phi^{-1})\Gamma(n_i+1)
}
\frac{
(\mu_i\phi)^{-1/\phi}
}{
A_i^{s_i}
}
\\
& \qquad\times
\left[
P\left(s_i,A_iU_{i,m}\right)
-
P\left(s_i,A_iL_{i,m}\right)
\right].
\end{aligned}
\label{eq:gamma_cell_regularized}
\end{equation}

The implementation uses the regularized representation in
\eqref{eq:gamma_cell_regularized}. The multiplicative factor is evaluated on
the logarithmic scale, and the difference between the two Gamma probabilities
is computed using stable logarithmic subtraction. For the final checkerboard
cell, the upper-tail Gamma probability is evaluated directly. These
calculations reduce numerical cancellation and underflow when the interval
probability is small.

\subsubsection{Derivatives of the marginal and cell probabilities}

Efficient likelihood optimization requires the derivatives of both the
marginal probability and the checkerboard cell probabilities with respect to
the model parameters. These derivatives are obtained by differentiating the
analytical expressions derived above.

For \(d=1\), let
\[
p_i
=
\mathbb P(N_i=n_i\mid\mathbf X_i=\mathbf x_i).
\]

The derivative with respect to \(\mu_i\) is
\[
\frac{\partial p_i}{\partial\mu_i}
=
p_i
\left[
\frac{n_i}{\mu_i}
-
\frac{n_i+\phi^{-1}}
{\mu_i+\phi^{-1}}
\right].
\]

To obtain the derivative with respect to \(\phi\), first differentiate with
respect to the Gamma shape parameter
\[
\alpha=\phi^{-1}.
\]
The corresponding derivative is
\[
\frac{\partial p_i}{\partial\alpha}
=
p_i
\left[
\psi(n_i+\alpha)
-
\psi(\alpha)
+
\log\left(
\frac{\alpha}{\alpha+\mu_i}
\right)
+
1
-
\frac{\alpha+n_i}{\alpha+\mu_i}
\right],
\]
where \(\psi(\cdot)\) denotes the digamma function. Since
\[
\frac{\partial\alpha}{\partial\phi}
=
-\frac{1}{\phi^2}
=
-\frac{\alpha}{\phi},
\]
it follows that
\[
\frac{\partial p_i}{\partial\phi}
=
-
\frac{\alpha}{\phi}
\frac{\partial p_i}{\partial\alpha}.
\]

For \(d>1\), the full joint likelihood depends on the checkerboard cell
probabilities \(I_i^{(m)}(n_i)\). Consequently, joint maximum likelihood
estimation requires their derivatives with respect to the model parameters.
These derivatives are unnecessary under IFM estimation, where only the
marginal likelihood is optimized. Because both the Gamma density and the
checkerboard boundaries depend on the model parameters, differentiation must
account for both sources of parameter dependence. 

For a generic parameter
\(\vartheta\in\{\mu_i,\phi\}\), Leibniz's rule gives
\begin{equation}
\begin{aligned}
\frac{\partial I_i^{(m)}(n_i)}{\partial\vartheta}
={}&
h_i(U_{i,m})
f_\Lambda(U_{i,m};\mu_i,\phi)
\frac{\partial U_{i,m}}{\partial\vartheta}
\\
&-
h_i(L_{i,m})
f_\Lambda(L_{i,m};\mu_i,\phi)
\frac{\partial L_{i,m}}{\partial\vartheta}
\\
&+
\int_{L_{i,m}}^{U_{i,m}}
h_i(\lambda)
\frac{\partial f_\Lambda(\lambda;\mu_i,\phi)}
{\partial\vartheta}
\,d\lambda,
\end{aligned}
\label{eq:gamma_leibniz}
\end{equation}
where
\[
h_i(\lambda)
=
\frac{e^{-\lambda}\lambda^{n_i}}
{\Gamma(n_i+1)}.
\]

The boundary derivatives are induced by the Gamma quantile function and follow
from
\[
F_\Lambda(L_{i,m};\mu_i,\phi)
=
\frac{m-1}{d},
\qquad
F_\Lambda(U_{i,m};\mu_i,\phi)
=
\frac{m}{d}.
\]

The interior terms in \eqref{eq:gamma_leibniz} can again be reduced to
differences of incomplete Gamma functions. Therefore, numerical quadrature is
not required for evaluating the score functions.

Most derivative components are evaluated analytically. The only numerical
differentiation required concerns derivatives of the incomplete Gamma
function with respect to its shape argument. These terms are evaluated using
a central finite-difference approximation. Thus, the resulting score is
semi-analytical rather than being obtained by finite-differencing the complete
likelihood contribution.

Finally, since
\[
\mu_i
=
\exp(\mathbf x_i^\top\boldsymbol\beta),
\qquad
\frac{\partial\mu_i}
{\partial\boldsymbol\beta}
=
\mu_i\mathbf x_i,
\]
the score with respect to the regression coefficients is
\[
\frac{\partial I_i^{(m)}(n_i)}
{\partial\boldsymbol\beta}
=
\frac{\partial I_i^{(m)}(n_i)}
{\partial\mu_i}
\mu_i\mathbf x_i.
\]

The same chain-rule expression applies to the marginal probability when
\(d=1\). Optimization is performed on the logarithmic dispersion scale,
\(\eta_\phi=\log\phi\), and hence
\[
\frac{\partial I_i^{(m)}}{\partial\eta_\phi}
=
\phi
\frac{\partial I_i^{(m)}}{\partial\phi}.
\]

The implementation also permits
\[
\mathbb V(\Lambda_i\mid\mathbf X_i)
=
\phi\mu_i^p,
\qquad
p\in\{1,3\}.
\]
In this case, the Gamma shape and rate parameters are replaced by
\[
\alpha_i
=
\frac{\mu_i^{2-p}}{\phi},
\qquad
r_i
=
\frac{\mu_i^{1-p}}{\phi}.
\]
The same analytical probability formulas remain applicable after substituting
these parameters. The score functions additionally account for the dependence
of \(\alpha_i\) on \(\mu_i\), through
\[
\frac{\partial\alpha_i}{\partial\mu_i}
=
(2-p)\frac{\alpha_i}{\mu_i}.
\]

\subsection{Lognormal Mixing}
\label{marginal_lognormal}

Unlike the Gamma mixing distribution, neither the marginal probability nor the
checkerboard cell probabilities admit closed-form expressions under Lognormal
mixing. Both quantities are therefore evaluated numerically using
Gauss--Hermite quadrature after transforming the integration interval to the
real line.

The latent intensity is parameterized as
\[
\Lambda_i\mid\mathbf X_i=\mathbf x_i
\sim
\operatorname{Lognormal}\left(
\log\mu_i-\frac{\sigma^2}{2},
\sigma^2
\right),
\ 
\mu_i=\exp(\mathbf x_i^\top\boldsymbol\beta),
\]
where
\[
\sigma^2=\log(1+\phi).
\]

This parameterization ensures that
\[
\mathbb E(\Lambda_i\mid\mathbf X_i=\mathbf x_i)
=
\mu_i,
\qquad
\mathbb V(\Lambda_i\mid\mathbf X_i=\mathbf x_i)
=
\mu_i^2\phi.
\]

The corresponding quantile function is
\[
Q_\Lambda(u;\mu_i,\phi)
=
\exp\left[
\log\mu_i
-\frac{\sigma^2}{2}
+\sigma\Phi^{-1}(u)
\right],
\quad 0<u<1,
\]
where
\(
\sigma=\sqrt{\log(1+\phi)}.
\)

Define
\[
g_i(u;\mu_i,\phi)
=
\frac{
\exp\!\left[-Q_\Lambda(u;\mu_i,\phi)\right]
Q_\Lambda(u;\mu_i,\phi)^{n_i}
}{
\Gamma(n_i+1)
}.
\]

\subsubsection{Marginal probability}

When the checkerboard consists of a single cell, \(d=1\), the cell probability
coincides with the marginal probability:
\[
I_i(n_i)
=
\mathbb P(N_i=n_i\mid\mathbf X_i=\mathbf x_i).
\]

Using the quantile representation,
\[
\mathbb P(N_i=n_i\mid\mathbf X_i=\mathbf x_i)
=
\int_0^1
g_i(u;\mu_i,\phi)\,du.
\]

Applying the transformation
\[
u=\Phi(\sqrt{2}x),
\qquad x\in\mathbb R,
\]
gives
\[
\mathbb P(N_i=n_i\mid\mathbf X_i=\mathbf x_i)
=
\frac{1}{\sqrt{\pi}}
\int_{-\infty}^{\infty}
g_i\left(
\Phi(\sqrt{2}x);
\mu_i,\phi
\right)
e^{-x^2}\,dx.
\]

Let \(\{x_r,w_r\}_{r=1}^R\) denote the nodes and weights of an
\(R\)-point Gauss--Hermite quadrature rule. The marginal probability is
approximated by
\begin{equation}
\mathbb P(N_i=n_i\mid\mathbf X_i=\mathbf x_i)
\approx
\frac{1}{\sqrt{\pi}}
\sum_{r=1}^R
w_r
g_i\left(
\Phi(\sqrt{2}x_r);
\mu_i,\phi
\right).
\label{eq:lnorm_marginal_gh}
\end{equation}

Equivalently, because
\[
\Phi^{-1}\left\{\Phi(\sqrt{2}x_r)\right\}
=
\sqrt{2}x_r,
\]
the Lognormal quantile evaluated at the \(r\)th quadrature node is
\[
Q_{ir}
=
\exp\left[
\log\mu_i
-\frac{\sigma^2}{2}
+\sigma\sqrt{2}x_r
\right].
\]

In the implementation, the quadrature terms are evaluated on the logarithmic
scale, and their weighted sum is computed using the log-sum-exp identity.
This reduces numerical underflow when the marginal probabilities are very
small.

\subsubsection{Checkerboard cell probabilities}

For the \(m\)th checkerboard cell, the corresponding probability is
\[
I_i^{(m)}(n_i)
=
\int_{(m-1)/d}^{m/d}
g_i(u;\mu_i,\phi)\,du.
\]

Introducing the transformation
\[
u
=
\frac{m-1}{d}
+
\frac{1}{d}\Phi(\sqrt{2}x),
\qquad
x\in\mathbb R,
\]
yields
\[
I_i^{(m)}(n_i)
=
\frac{1}{d\sqrt{\pi}}
\int_{-\infty}^{\infty}
g_i\left(
\frac{m-1}{d}
+
\frac{1}{d}\Phi(\sqrt{2}x);
\mu_i,\phi
\right)
e^{-x^2}\,dx.
\]

Hence, the Gauss--Hermite approximation is
\begin{equation}
I_i^{(m)}(n_i)
\approx
\frac{1}{d\sqrt{\pi}}
\sum_{r=1}^R
w_r
g_i\left(
u_{mr};
\mu_i,\phi
\right),
\label{eq:lnorm_cell_gh}
\end{equation}
where
\[
u_{mr}
=
\frac{m-1}{d}
+
\frac{1}{d}\Phi(\sqrt{2}x_r).
\]

Let
\[
q_{mr}
=
\Phi^{-1}(u_{mr}).
\]
The corresponding Lognormal quantile is
\[
Q_{imr}
=
\exp\left[
\log\mu_i
-\frac{\sigma^2}{2}
+\sigma q_{mr}
\right].
\]

Therefore, \eqref{eq:lnorm_cell_gh} can be written explicitly as
\[
I_i^{(m)}(n_i)
\approx
\frac{1}{d\sqrt{\pi}}
\sum_{r=1}^R
w_r
\frac{
\exp(-Q_{imr})Q_{imr}^{n_i}
}{
\Gamma(n_i+1)
}.
\]

In the implementation, the transformed probabilities \(u_{mr}\) and their
standard Normal quantiles \(q_{mr}\) are precomputed for all checkerboard
cells and quadrature nodes. The values of \(u_{mr}\) are restricted to a
small interior interval \([\varepsilon,1-\varepsilon]\) to avoid infinite
Normal quantiles at the boundaries. The quadrature sums are then evaluated
on the logarithmic scale using stable log-sum-exp calculations.

\subsubsection{Derivatives of the marginal and cell probabilities}

Efficient likelihood optimization requires derivatives of both the marginal
probability and the checkerboard cell probabilities with respect to the model
parameters. Since the quadrature nodes, weights, and transformed probabilities
\(u_{mr}\) do not depend on \(\mu_i\) or \(\phi\), differentiation can be
performed directly under the quadrature summation.

For a fixed \(u\), write
\[
Q_i(u)
=
Q_\Lambda(u;\mu_i,\phi).
\]
Since
\[
g_i(u;\mu_i,\phi)
=
\frac{
\exp[-Q_i(u)]Q_i(u)^{n_i}
}{
\Gamma(n_i+1)
},
\]
its derivative with respect to a generic parameter
\(\vartheta\in\{\mu_i,\phi\}\) is
\[
\frac{\partial g_i(u;\mu_i,\phi)}
{\partial\vartheta}
=
g_i(u;\mu_i,\phi)
\left[n_i-Q_i(u)\right]
\frac{\partial\log Q_i(u)}
{\partial\vartheta}.
\]

For the Lognormal quantile,
\[
\log Q_i(u)
=
\log\mu_i
-\frac{\sigma^2}{2}
+\sigma\Phi^{-1}(u).
\]

Because
\[
\sigma^2=\log(1+\phi)
\]
does not depend on \(\mu_i\),
\[
\frac{\partial\log Q_i(u)}
{\partial\mu_i}
=
\frac{1}{\mu_i}.
\]

It follows that
\[
\frac{\partial g_i(u;\mu_i,\phi)}
{\partial\mu_i}
=
g_i(u;\mu_i,\phi)
\frac{n_i-Q_i(u)}{\mu_i}.
\]

For the dispersion parameter,
\[
\frac{\partial\sigma^2}{\partial\phi}
=
\frac{1}{1+\phi},
\qquad
\frac{\partial\sigma}{\partial\phi}
=
\frac{1}{2\sigma(1+\phi)}.
\]

Therefore,
\[
\frac{\partial\log Q_i(u)}
{\partial\phi}
=
-\frac{1}{2(1+\phi)}
+
\frac{\Phi^{-1}(u)}
{2\sigma(1+\phi)},
\]
or equivalently,
\[
\frac{\partial\log Q_i(u)}
{\partial\phi}
=
\frac{
\Phi^{-1}(u)/\sigma-1
}{
2(1+\phi)
}.
\]

Thus,
\[
\frac{\partial g_i(u;\mu_i,\phi)}
{\partial\phi}
=
g_i(u;\mu_i,\phi)
\left[n_i-Q_i(u)\right]
\frac{
\Phi^{-1}(u)/\sigma-1
}{
2(1+\phi)
}.
\]

For the \(m\)th checkerboard cell, differentiation of
\eqref{eq:lnorm_cell_gh} gives
\[
\frac{\partial I_i^{(m)}(n_i)}
{\partial\mu_i}
\approx
\frac{1}{d\sqrt{\pi}}
\sum_{r=1}^R
w_r
g_i(u_{mr};\mu_i,\phi)
\frac{n_i-Q_{imr}}{\mu_i},
\]
and
\[
\frac{\partial I_i^{(m)}(n_i)}
{\partial\phi}
\approx
\frac{1}{d\sqrt{\pi}}
\sum_{r=1}^R
w_r
g_i(u_{mr};\mu_i,\phi)
\left(n_i-Q_{imr}\right)
\frac{
q_{mr}/\sigma-1
}{
2(1+\phi)
}.
\]

The same expressions apply to the marginal probability when \(d=1\), in
which case
\[
u_{1r}
=
\Phi(\sqrt{2}x_r).
\]

For \(d>1\), the full joint likelihood depends on the checkerboard cell
probabilities \(I_i^{(m)}(n_i)\). Consequently, joint maximum likelihood
estimation requires their derivatives with respect to the marginal parameters.
These cell-probability derivatives are unnecessary under IFM estimation,
where the marginal parameters are estimated solely from the marginal
likelihood.

Finally, since
\[
\mu_i
=
\exp(\mathbf x_i^\top\boldsymbol\beta),
\qquad
\frac{\partial\mu_i}
{\partial\boldsymbol\beta}
=
\mu_i\mathbf x_i,
\]
the derivative with respect to the regression coefficients is
\[
\frac{\partial I_i^{(m)}(n_i)}
{\partial\boldsymbol\beta}
=
\frac{\partial I_i^{(m)}(n_i)}
{\partial\mu_i}
\mu_i\mathbf x_i.
\]

The same chain-rule expression applies to the marginal probability when
\(d=1\). Optimization is performed on the logarithmic dispersion scale,
\(\eta_\phi=\log\phi\), and hence
\[
\frac{\partial I_i^{(m)}(n_i)}
{\partial\eta_\phi}
=
\phi
\frac{\partial I_i^{(m)}(n_i)}
{\partial\phi}.
\]

The implementation also permits
\[
\mathbb V(\Lambda_i\mid\mathbf X_i)
=
\phi\mu_i^p,
\qquad
p\in\{1,3\}.
\]
In this case,
\[
\sigma_i^2
=
\log\left(1+\phi\mu_i^{p-2}\right),
\]
so the derivatives additionally account for the dependence of
\(\sigma_i^2\) on \(\mu_i\). The same Gauss--Hermite quadrature representation
remains applicable.

\subsection{Inverse Gaussian Mixing}
\label{marginal_ig}

For the Inverse Gaussian mixing distribution, the marginal probability admits
a closed-form expression involving the modified Bessel function of the second
kind. In contrast, the checkerboard cell probabilities do not generally admit
closed-form expressions and are evaluated numerically using Gauss--Legendre
quadrature over finite intervals.

\subsubsection{Marginal probability}

When the checkerboard consists of a single cell, \(d=1\), the cell probability
coincides with the marginal probability:
\[
I_i(n_i)
=
\mathbb P
(
N_i=n_i
\mid
\mathbf X_i=\mathbf x_i
).
\]

Using the density representation,
\[
\mathbb P
(
N_i=n_i
\mid
\mathbf X_i=\mathbf x_i
)
=
\int_0^\infty
\frac{
e^{-\lambda}
\lambda^{n_i}
}
{
\Gamma(n_i+1)
}
f_\Lambda(\lambda;\mu_i,\phi)
\,d\lambda.
\]

Substituting the Inverse Gaussian density yields
\[
\begin{aligned}
&
\mathbb P
(
N_i=n_i
\mid
\mathbf X_i=\mathbf x_i
)
=
\frac{
\exp(\phi^{-1}/\mu_i)
}
{
\Gamma(n_i+1)
}
\left(
\frac{\phi^{-1}}
{2\pi}
\right)^{1/2}
\\
& \qquad\qquad \qquad
\times
\int_0^\infty
\lambda^{n_i-\frac32}
\exp
\left[
-
\left(
1+\frac{\phi^{-1}}{2\mu_i^2}
\right)\lambda
-
\frac{\phi^{-1}}
{2\lambda}
\right]
d\lambda.
\end{aligned}
\]

Let
\[
\nu_i
=
n_i-\frac12,
\qquad
A_i
=
1+\frac{\phi^{-1}}{2\mu_i^2},
\qquad
B
=
\frac{\phi^{-1}}2.
\]

Using the identity
\[
\int_0^\infty
x^{\nu-1}
\exp
\left(
-Ax-\frac{B}{x}
\right)
dx
=
2
\left(
\frac BA
\right)^{\nu/2}
K_\nu
\!\left(
2\sqrt{AB}
\right),
\]
where \(K_\nu(\cdot)\) denotes the modified Bessel function of the second kind,
gives
\begin{equation}
\begin{aligned}
&
\mathbb P
(
N_i=n_i
\mid
\mathbf X_i=\mathbf x_i
)
\\
& \qquad =
\frac{2}{\Gamma(n_i+1)}
\left(
\frac{\phi^{-1}}{2\pi}
\right)^{1/2}
\exp
\left(
\frac{\phi^{-1}}{\mu_i}
\right)
\times
\left(
\frac BA_i
\right)^{\nu_i/2}
K_{\nu_i}
\!\left(
2\sqrt{A_iB}
\right).
\end{aligned}
\label{eq:ig_marginal_probability}
\end{equation}

Since \(n_i\) is a nonnegative integer,
\[
\nu_i
=
n_i-\frac12
\]
is always a half-integer. Therefore,
\(K_{\nu_i}\) admits the finite expansion
\[
K_{m+\frac12}(x)
=
\sqrt{\frac{\pi}{2x}}
e^{-x}
\sum_{k=0}^{m}
\frac{
(m+k)!
}
{
k!(m-k)!(2x)^k
},
\qquad
m=0,1,\ldots.
\]

The implementation evaluates this finite expansion directly rather than
calling a general-purpose Bessel routine. All multiplicative terms are
computed on the logarithmic scale, and the finite series is evaluated using
the log-sum-exp identity to reduce numerical overflow and underflow. This
implementation is substantially faster and more stable than evaluating the
general modified Bessel function numerically.

\subsubsection{Checkerboard cell probabilities}

For \(d>1\), the \(m\)th checkerboard cell probability is
\[
I_i^{(m)}(n_i)
=
\int_{L_{im}}^{U_{im}}
\frac{
e^{-\lambda}
\lambda^{n_i}
}
{
\Gamma(n_i+1)
}
f_\Lambda(\lambda;\mu_i,\phi)
\,d\lambda,
\]
where
\[
L_{im}
=
Q_\Lambda
\!\left(
\frac{m-1}{d};
\mu_i,\phi
\right),
\qquad
U_{im}
=
Q_\Lambda
\!\left(
\frac{m}{d};
\mu_i,\phi
\right),
\]
and \(Q_\Lambda(\cdot)\) denotes the Inverse Gaussian quantile function.

Unlike the Gamma and Lognormal cases, no analytical expression is available
for either the quantile function or the checkerboard cell probabilities.
Consequently, both quantities are evaluated numerically.

For the first \(d-1\) checkerboard cells, the integration limits are finite.
Applying the linear transformation
\[
\lambda
=
\frac{U_{im}+L_{im}}{2}
+
\frac{U_{im}-L_{im}}{2}x,
\qquad
-1\le x\le1,
\]
gives
\[
\begin{aligned}
I_i^{(m)}(n_i)
=
\frac{U_{im}-L_{im}}{2}
\int_{-1}^{1}
&
\frac{
e^{-\lambda(x)}
\lambda(x)^{n_i}
}
{
\Gamma(n_i+1)
}
\times
f_\Lambda
(
\lambda(x);
\mu_i,\phi
)
dx.
\end{aligned}
\]

Using an \(R\)-point Gauss--Legendre quadrature rule,
\[
\{x_r,w_r\}_{r=1}^{R},
\]
yields
\begin{equation}
\begin{aligned}
I_i^{(m)}(n_i)
\approx
\frac{U_{im}-L_{im}}{2}
\sum_{r=1}^{R}
w_r
&
\frac{
e^{-\lambda_{imr}}
\lambda_{imr}^{n_i}
}
{
\Gamma(n_i+1)
}
\times
f_\Lambda
(
\lambda_{imr};
\mu_i,\phi
),
\end{aligned}
\label{eq:ig_cell_probability}
\end{equation}
where
\[
\lambda_{imr}
=
\frac{U_{im}+L_{im}}{2}
+
\frac{U_{im}-L_{im}}{2}x_r.
\]

Because the final checkerboard interval is unbounded,
\[
(L_{id},\infty),
\]
direct numerical integration over the infinite interval is avoided.
Instead, the last cell probability is computed as
\begin{equation}
I_i^{(d)}(n_i)
=
I_i(n_i)
-
\sum_{m=1}^{d-1}
I_i^{(m)}(n_i),
\label{eq:ig_last_cell}
\end{equation}
where \(I_i(n_i)\) denotes the marginal probability given in
Section~\ref{marginal_ig}.

This construction guarantees that the checkerboard cell probabilities sum
exactly to the marginal probability while avoiding numerical integration over
an unbounded interval.

In the implementation, the Gauss--Legendre quadrature nodes and weights are
precomputed once and reused for all observations. Numerical inversion of the
Inverse Gaussian distribution is performed only at the checkerboard
boundaries. To improve numerical stability, probabilities smaller than a
prescribed tolerance are truncated to a small positive constant before
logarithms are taken.

\subsubsection{Derivatives of the marginal and cell probabilities}

Efficient likelihood optimization requires derivatives of both the marginal
probability and the checkerboard cell probabilities with respect to the model
parameters.

\paragraph{Marginal probability}

For \(d=1\), the marginal probability is evaluated using the closed-form
representation in \eqref{eq:ig_marginal_probability}. Since
\(n_i-\tfrac12\) is a half-integer, both the modified Bessel function and its
derivative admit finite-sum representations. Consequently, the derivatives of
the marginal probability with respect to \(\mu_i\) and \(\phi\) are evaluated
analytically.

Let
\[
\nu_i=n_i-\frac12,
\qquad
A_i
=
1+\frac{\phi^{-1}}{2\mu_i^2},
\qquad
B=\frac{\phi^{-1}}{2},
\]
and
\[
x_i=2\sqrt{A_iB}.
\]

The logarithm of the marginal probability can be written as
\[
\begin{aligned}
\log I_i(n_i)
&=
-\frac12
\left[
\log\phi+\log(2\pi)
\right]
-\log\Gamma(n_i+1)
+\frac{1}{\phi\mu_i}
+\log 2
\\
& \qquad+
\frac{\nu_i}{2}
\left(
\log B-\log A_i
\right)
+
\log K_{\nu_i}(x_i).
\end{aligned}
\]

For a generic parameter
\(\vartheta\in\{\mu_i,\phi\}\),
\[
\begin{aligned}
\frac{\partial\log I_i(n_i)}
{\partial\vartheta}
&=
\frac12
\frac{\partial\log(\phi^{-1})}
{\partial\vartheta}
+
\frac{\partial(\phi^{-1}/\mu_i)}
{\partial\vartheta}
\\
& \ +
\frac{\nu_i}{2}
\left[
\frac{\partial\log B}{\partial\vartheta}
-
\frac{\partial\log A_i}{\partial\vartheta}
\right]
+
\frac{\partial\log K_{\nu_i}(x_i)}
{\partial x_i}
\frac{\partial x_i}{\partial\vartheta}.
\end{aligned}
\]

Because \(\nu_i\) is a half-integer, write
\[
K_{\nu_i}(x)
=
\sqrt{\frac{\pi}{2x}}
e^{-x}
P_i(x),
\]
where
\[
P_i(x)
=
\sum_{k=0}^{m_i}
c_{ik}(2x)^{-k},
\qquad
m_i=
\begin{cases}
0, & n_i=0,\\
n_i-1, & n_i\geq 1,
\end{cases}
\]
and
\[
c_{ik}
=
\frac{(m_i+k)!}
{k!(m_i-k)!}.
\]

It follows that
\[
\frac{\partial\log K_{\nu_i}(x)}
{\partial x}
=
-1-\frac{1}{2x}
+
\frac{P_i'(x)}{P_i(x)}.
\]

Since
\[
P_i'(x)
=
-\frac{1}{x}
\sum_{k=0}^{m_i}
k\,c_{ik}(2x)^{-k},
\]
we obtain
\[
\frac{\partial\log K_{\nu_i}(x)}
{\partial x}
=
-1
-
\frac{1}{2x}
-
\frac{1}{x}
\frac{
\sum_{k=0}^{m_i}
k\,c_{ik}(2x)^{-k}
}{
\sum_{k=0}^{m_i}
c_{ik}(2x)^{-k}
}.
\]

For the natural Inverse Gaussian parameterization,
\[
\lambda_{\mathrm{IG}}=\phi^{-1},
\]
the required derivatives are
\[
\frac{\partial A_i}{\partial\mu_i}
=
-\frac{1}{\phi\mu_i^3},
\qquad
\frac{\partial A_i}{\partial\phi}
=
-\frac{1}{2\phi^2\mu_i^2},
\]
and
\[
\frac{\partial B}{\partial\mu_i}=0,
\qquad
\frac{\partial B}{\partial\phi}
=
-\frac{1}{2\phi^2}.
\]

Moreover,
\[
\frac{\partial x_i}{\partial\vartheta}
=
\frac{x_i}{2}
\left[
\frac{\partial\log A_i}{\partial\vartheta}
+
\frac{\partial\log B}{\partial\vartheta}
\right].
\]

Therefore, the derivatives of the marginal probability are
\[
\frac{\partial I_i(n_i)}
{\partial\mu_i}
=
I_i(n_i)
\frac{\partial\log I_i(n_i)}
{\partial\mu_i},
\]
and
\[
\frac{\partial I_i(n_i)}
{\partial\phi}
=
I_i(n_i)
\frac{\partial\log I_i(n_i)}
{\partial\phi}.
\]

In the implementation, the finite sums defining \(P_i(x)\) and its
logarithmic derivative are evaluated simultaneously using normalized
log-sum-exp weights. Hence, the analytical gradient requires essentially the
same finite summation as the marginal probability itself and introduces only
a negligible additional computational cost.

\paragraph{Checkerboard cell probabilities}

For \(d>1\), the full joint likelihood depends on the checkerboard cell
probabilities
\[
I_i^{(m)}(n_i)
=
\int_{L_{im}}^{U_{im}}
h_i(\lambda;\mu_i,\phi)\,d\lambda,
\]
where
\[
h_i(\lambda;\mu_i,\phi)
=
\frac{
e^{-\lambda}\lambda^{n_i}
}{
\Gamma(n_i+1)
}
f_\Lambda(\lambda;\mu_i,\phi),
\]
and
\[
L_{im}
=
Q_\Lambda
\left(
\frac{m-1}{d};\mu_i,\phi
\right),
\qquad
U_{im}
=
Q_\Lambda
\left(
\frac{m}{d};\mu_i,\phi
\right).
\]

Let \(\vartheta\in\{\mu_i,\phi\}.\)

Since both the integrand and the integration boundaries depend on
\(\vartheta\), Leibniz's rule gives
\begin{equation}
\begin{aligned}
\frac{\partial I_i^{(m)}(n_i)}
{\partial\vartheta}
={}&
\int_{L_{im}}^{U_{im}}
h_i(\lambda;\mu_i,\phi)
\frac{
\partial\log f_\Lambda(\lambda;\mu_i,\phi)
}{
\partial\vartheta
}
\,d\lambda
\\
&+
h_i(U_{im};\mu_i,\phi)
\frac{\partial U_{im}}{\partial\vartheta}
-
h_i(L_{im};\mu_i,\phi)
\frac{\partial L_{im}}{\partial\vartheta}.
\end{aligned}
\label{eq:ig_cell_derivative}
\end{equation}

The Poisson kernel
\[
\frac{
e^{-\lambda}\lambda^{n_i}
}{
\Gamma(n_i+1)
}
\]
does not depend on the marginal parameters. Therefore, the derivative of the
integrand is determined entirely by the derivative of the Inverse Gaussian
density.

For the natural Inverse Gaussian parameterization
\(
\lambda_{\mathrm{IG}}=\phi^{-1},
\) the log-density is
\[
\log f_\Lambda(\lambda;\mu_i,\phi)
=
-\frac12\log(2\pi\phi)
-\frac32\log\lambda
-
\frac{
(\lambda-\mu_i)^2
}{
2\phi\mu_i^2\lambda
}.
\]

Its derivative with respect to \(\mu_i\) is
\begin{equation}
\frac{
\partial\log f_\Lambda(\lambda;\mu_i,\phi)
}{
\partial\mu_i
}
=
\frac{\lambda-\mu_i}
{\phi\mu_i^3}.
\label{eq:ig_logdensity_derivative_mu}
\end{equation}

Similarly,
\begin{equation}
\frac{
\partial\log f_\Lambda(\lambda;\mu_i,\phi)
}{
\partial\phi
}
=
-\frac{1}{2\phi}
+
\frac{
(\lambda-\mu_i)^2
}{
2\phi^2\mu_i^2\lambda
}.
\label{eq:ig_logdensity_derivative_phi}
\end{equation}

Substituting
\eqref{eq:ig_logdensity_derivative_mu}
and
\eqref{eq:ig_logdensity_derivative_phi}
into
\eqref{eq:ig_cell_derivative}
provides the analytical derivatives of the integrand contribution.

The remaining terms involve derivatives of the Inverse Gaussian quantile
function. Let
\[
q_i(p)
=
Q_\Lambda(p;\mu_i,\phi),
\]
which satisfies
\[
F_\Lambda(q_i(p);\mu_i,\phi)=p.
\]
Implicit differentiation gives
\begin{equation}
\frac{\partial q_i(p)}
{\partial\vartheta}
=
-
\frac{
\displaystyle
\frac{\partial
F_\Lambda(q_i(p);\mu_i,\phi)}
{\partial\vartheta}
}{
f_\Lambda(q_i(p);\mu_i,\phi)
},
\qquad
\vartheta\in\{\mu_i,\phi\}.
\label{eq:ig_quantile_derivative}
\end{equation}

To obtain explicit expressions, define
\[
s_i(x)
=
\sqrt{\frac{\phi^{-1}}{x}},
\]
\[
a_{i1}(x)
=
s_i(x)
\left(
\frac{x}{\mu_i}-1
\right),
\]
and
\[
a_{i2}(x)
=
-
s_i(x)
\left(
\frac{x}{\mu_i}+1
\right).
\]

The Inverse Gaussian distribution function can then be written as
\[
F_\Lambda(x;\mu_i,\phi)
=
\Phi(a_{i1}(x))
+
\exp
\left(
\frac{2}{\phi\mu_i}
\right)
\Phi(a_{i2}(x)).
\]

For fixed \(x\), the derivatives of the auxiliary quantities with respect to
\(\mu_i\) are
\[
\frac{\partial s_i(x)}{\partial\mu_i}=0,
\quad
\frac{\partial a_{i1}(x)}{\partial\mu_i}
=
-
s_i(x)
\frac{x}{\mu_i^2},
\quad
\frac{\partial a_{i2}(x)}{\partial\mu_i}
=
s_i(x)
\frac{x}{\mu_i^2}.
\]

Moreover,
\[
\frac{\partial}{\partial\mu_i}
\left(
\frac{2}{\phi\mu_i}
\right)
=
-
\frac{2}{\phi\mu_i^2}.
\]

Therefore,
\begin{equation}
\begin{aligned}
\frac{\partial
F_\Lambda(x;\mu_i,\phi)}
{\partial\mu_i}
={}&
\varphi(a_{i1}(x))
\left[
-
s_i(x)\frac{x}{\mu_i^2}
\right]
\\
&+
\exp
\left(
\frac{2}{\phi\mu_i}
\right)
\Bigg\{
-
\frac{2}{\phi\mu_i^2}
\Phi(a_{i2}(x))
\\
&\hspace{30mm}
+
\varphi(a_{i2}(x))
s_i(x)\frac{x}{\mu_i^2}
\Bigg\},
\end{aligned}
\label{eq:ig_cdf_derivative_mu}
\end{equation}
where \(\varphi(\cdot)\) denotes the standard normal density.

For the derivative with respect to \(\phi\),
\[
\frac{\partial s_i(x)}{\partial\phi}
=
-\frac{s_i(x)}{2\phi},
\]
so that
\[
\frac{\partial a_{i1}(x)}{\partial\phi}
=
-
\frac{a_{i1}(x)}{2\phi},
\quad
\frac{\partial a_{i2}(x)}{\partial\phi}
=
-
\frac{a_{i2}(x)}{2\phi}.
\]

In addition,
\[
\frac{\partial}{\partial\phi}
\left(
\frac{2}{\phi\mu_i}
\right)
=
-
\frac{2}{\phi^2\mu_i}.
\]

Hence,
\begin{equation}
\begin{aligned}
\frac{\partial
F_\Lambda(x;\mu_i,\phi)}
{\partial\phi}
={}&
-
\frac{
a_{i1}(x)\varphi(a_{i1}(x))
}{
2\phi
}
\\
&+
\exp
\left(
\frac{2}{\phi\mu_i}
\right)
\Bigg\{
-
\frac{2}{\phi^2\mu_i}
\Phi(a_{i2}(x))
\\
&\hspace{30mm}
-
\frac{
a_{i2}(x)\varphi(a_{i2}(x))
}{
2\phi
}
\Bigg\}.
\end{aligned}
\label{eq:ig_cdf_derivative_phi}
\end{equation}

Combining
\eqref{eq:ig_quantile_derivative},
\eqref{eq:ig_cdf_derivative_mu},
and
\eqref{eq:ig_cdf_derivative_phi}
gives
\[
\frac{\partial L_{im}}{\partial\vartheta}
=
\left.
\frac{\partial q_i(p)}{\partial\vartheta}
\right|_{p=(m-1)/d},
\]
and
\[
\frac{\partial U_{im}}{\partial\vartheta}
=
\left.
\frac{\partial q_i(p)}{\partial\vartheta}
\right|_{p=m/d}.
\]

For the first cell,
\[
L_{i1}=0,
\]
and therefore
\[
\frac{\partial L_{i1}}{\partial\mu_i}
=
\frac{\partial L_{i1}}{\partial\phi}
=
0.
\]

The integral term in
\eqref{eq:ig_cell_derivative}
is evaluated using the same Gauss--Legendre quadrature rule as the cell
probability itself. Specifically, for \(m=1,\ldots,d-1\),
\[
\begin{aligned}
\frac{\partial I_i^{(m)}(n_i)}
{\partial\vartheta}
\approx{}&
\frac{U_{im}-L_{im}}{2}
\sum_{r=1}^{R}
w_r
h_i(\lambda_{imr};\mu_i,\phi)
\\
&\times
\frac{
\partial\log
f_\Lambda(\lambda_{imr};\mu_i,\phi)
}{
\partial\vartheta
}
\\
&+
h_i(U_{im};\mu_i,\phi)
\frac{\partial U_{im}}{\partial\vartheta}
-
h_i(L_{im};\mu_i,\phi)
\frac{\partial L_{im}}{\partial\vartheta},
\end{aligned}
\]
where
\[
\lambda_{imr}
=
\frac{U_{im}+L_{im}}{2}
+
\frac{U_{im}-L_{im}}{2}x_r.
\]

The final checkerboard cell is evaluated by subtraction:
\[
I_i^{(d)}(n_i)
=
I_i(n_i)
-
\sum_{m=1}^{d-1}
I_i^{(m)}(n_i).
\]
Consequently,
\begin{equation}
\frac{
\partial I_i^{(d)}(n_i)
}{
\partial\vartheta
}
=
\frac{
\partial I_i(n_i)
}{
\partial\vartheta
}
-
\sum_{m=1}^{d-1}
\frac{
\partial I_i^{(m)}(n_i)
}{
\partial\vartheta
},
\qquad
\vartheta\in\{\mu_i,\phi\}.
\label{eq:ig_last_cell_derivative}
\end{equation}

Therefore, both the checkerboard cell probabilities and their derivatives are
evaluated using the same Gauss--Legendre quadrature rule, resulting in nearly
identical computational complexity.

\subsection{Pareto Mixing}
\label{marginal_pareto}

For the Pareto mixing distribution, we consider only the default
variance parameterization
\[
\mathbb V(\Lambda_i\mid\mathbf X_i)
=
(1+2\phi)\mu_i^2,
\]

Unlike the Gamma and Inverse Gaussian mixing distributions, neither the resulting marginal probability mass function nor the checkerboard cell probabilities admit convenient closed-form expressions.

\subsubsection{Marginal probability}

The marginal probability corresponds to the special case
\(d=1\), namely
\[
I_i(n_i)
=
\int_0^1
\frac{
e^{-Q_\Lambda(u)}
Q_\Lambda(u)^{n_i}
}
{\Gamma(n_i+1)}
\,du.
\]

Using the transformation
\[
u
=
1-e^{-t},
\qquad
0\le t<\infty,
\]
gives
\[
I_i(n_i)
=
\int_0^\infty
\frac{
e^{-Q_\Lambda(t)}
Q_\Lambda(t)^{n_i}
}
{\Gamma(n_i+1)}
e^{-t}\,dt,
\]
where
\[
Q_\Lambda(t)
=
\beta_i
\left(
e^{t/\alpha_i}-1
\right).
\]

The transformed integral is evaluated numerically using
Gauss--Laguerre quadrature.
\subsubsection{Checkerboard cell probabilities}

For the \(m\)th checkerboard cell,
\[
I_i^{(m)}(n_i)
=
\int_{(m-1)/d}^{m/d}
\frac{
e^{-Q_\Lambda(u)}
Q_\Lambda(u)^{n_i}
}
{\Gamma(n_i+1)}
\,du.
\]

Applying the transformation
\[
1-u
=
\left(1-\frac{m}{d}\right)
+
\frac1d e^{-t},
\qquad
0\le t<\infty,
\]
yields
\[
du
=
\frac1d
e^{-t}\,dt,
\]
and therefore
\[
I_i^{(m)}(n_i)
=
\frac1d
\int_0^\infty
\frac{
e^{-Q_{\Lambda,m}(t)}
Q_{\Lambda,m}(t)^{n_i}
}
{\Gamma(n_i+1)}
e^{-t}\,dt,
\]
where
\[
Q_{\Lambda,m}(t)
=
\beta_i
\left[
\left(
1-\frac{m}{d}
+
\frac1d e^{-t}
\right)^{-1/\alpha_i}
-1
\right].
\]

Hence, every checkerboard cell probability is evaluated using the same
Gauss--Laguerre quadrature rule as the marginal probability. In particular,
when \(d=1\), the above expression reduces to the marginal probability
presented previously.
\subsubsection{Derivatives of the marginal and cell probabilities}

The marginal probability corresponds to the special case \(d=1\).
Therefore, the derivatives of both the marginal and checkerboard cell
probabilities are obtained from the same numerical procedure.

For the transformed integral, let
\[
h(Q)
=
\frac{e^{-Q}Q^{n_i}}
{\Gamma(n_i+1)},
\]
where
\[
Q
=
\beta_i
\left[
\left(
1-\frac{m}{d}
+
\frac1d e^{-t}
\right)^{-1/\alpha_i}
-1
\right].
\]

Differentiating the integrand with respect to a generic parameter
\(
\vartheta\in\{\mu_i,\phi\}
\)
gives
\[
\frac{\partial h(Q)}
{\partial\vartheta}
=
h(Q)
(n_i-Q)
\frac{\partial\log Q}
{\partial\vartheta}.
\]

Since
\[
\log Q
=
\log\beta_i
+
\log
\left[
\left(
1-\frac{m}{d}
+
\frac1d e^{-t}
\right)^{-1/\alpha_i}
-1
\right],
\]
we obtain
\[
\frac{\partial\log Q}
{\partial\vartheta}
=
\frac1{\beta_i}
\frac{\partial\beta_i}
{\partial\vartheta}
-
\frac{
r
}{
\alpha_i
(1-e^{-r})
}
\frac{\partial\alpha_i}
{\partial\vartheta},
\]
where
\[
r
=
-\frac1{\alpha_i}
\log
\left(
1-\frac{m}{d}
+
\frac1d e^{-t}
\right).
\]

The derivatives
\[
\frac{\partial\alpha_i}{\partial\mu_i},
\qquad
\frac{\partial\alpha_i}{\partial\phi},
\qquad
\frac{\partial\beta_i}{\partial\mu_i},
\qquad
\frac{\partial\beta_i}{\partial\phi}
\]
follow directly from the moment parameterization relating
\((\mu_i,\phi)\) to \((\alpha_i,\beta_i)\).

Consequently, the derivatives of both the marginal and checkerboard cell
probabilities are evaluated by replacing the integrand in the
Gauss--Laguerre quadrature with its analytical derivative. Therefore,
both the probabilities and their derivatives are computed using the same
Gauss--Laguerre quadrature rule and require essentially the same
computational cost.

\subsection{Weibull Mixing}
\label{marginal_weibull}

For the Weibull mixing distribution, we consider the general variance
parameterization
\[
\mathbb V(\Lambda_i\mid\mathbf X_i)
=
\phi\mu_i^{p},
\qquad
p\in\{1,2,3\},
\]
corresponding to the \texttt{var\_power} option in the software
implementation.

The Weibull distribution is parameterized by its shape parameter
\(\kappa_i\) and scale parameter \(\lambda_i\), whose values are uniquely
determined from \((\mu_i,\phi)\) through the moment equations
\[
E(\Lambda_i)=\mu_i,
\qquad
\mathbb V(\Lambda_i)
=
\phi\mu_i^{p}.
\]

Equivalently, letting
\[
a_i=\frac1{\kappa_i},
\]
the scale parameter satisfies
\[
\lambda_i
=
\frac{\mu_i}
{\Gamma(1+a_i)},
\]
where \(a_i\) is obtained by solving the corresponding moment equation.

The Weibull quantile function is
\[
Q_\Lambda(u)
=
\lambda_i
\left[
-\log(1-u)
\right]^{a_i},
\qquad
0<u<1.
\]

Unlike the Gamma and Inverse Gaussian mixing distributions, the marginal
mixed-Poisson probability and the checkerboard cell probabilities do not
admit convenient closed-form expressions and are therefore evaluated
numerically.

\subsubsection{Marginal probability}

The marginal probability corresponds to the special case
\(d=1\),
\[
I_i(n_i)
=
\int_0^1
\frac{
e^{-Q_\Lambda(u)}
Q_\Lambda(u)^{n_i}
}
{\Gamma(n_i+1)}
\,du.
\]

Using the transformation
\[
u
=
1-e^{-t},
\qquad
0\le t<\infty,
\]
gives
\[
I_i(n_i)
=
\int_0^\infty
\frac{
e^{-Q_\Lambda(t)}
Q_\Lambda(t)^{n_i}
}
{\Gamma(n_i+1)}
e^{-t}
\,dt,
\]
where
\[
Q_\Lambda(t)
=
\lambda_i
t^{a_i}.
\]

The transformed integral is evaluated using Gauss--Laguerre quadrature.

\subsubsection{Checkerboard cell probabilities}

For the \(m\)th checkerboard cell,
\[
I_i^{(m)}(n_i)
=
\int_{(m-1)/d}^{m/d}
\frac{
e^{-Q_\Lambda(u)}
Q_\Lambda(u)^{n_i}
}
{\Gamma(n_i+1)}
\,du.
\]

Applying the transformation
\[
1-u
=
1-\frac{m}{d}
+
\frac1d e^{-t},
\qquad
0\le t<\infty,
\]
yields
\[
du
=
\frac1d
e^{-t}\,dt,
\]
and therefore
\[
I_i^{(m)}(n_i)
=
\frac1d
\int_0^\infty
\frac{
e^{-Q_{\Lambda,m}(t)}
Q_{\Lambda,m}(t)^{n_i}
}
{\Gamma(n_i+1)}
e^{-t}
\,dt,
\]
where
\[
Q_{\Lambda,m}(t)
=
\lambda_i
\left[
-\log
\left(
1-\frac{m}{d}
+
\frac1d e^{-t}
\right)
\right]^{a_i}.
\]

Hence, every checkerboard cell probability is evaluated using the same
Gauss--Laguerre quadrature rule as the marginal probability. In particular,
when \(d=1\), the above expression reduces to the marginal probability.

\section{Additional Simulation Results} \label{simulation_result}

This appendix provides supplementary simulation results for the Gaussian factor checkerboard copula. The estimated loading matrices are reported in the main text. Tables~\ref{tab:appendix_J1} and \ref{tab:appendix_J2} compare the true and estimated implied correlation matrices under one-factor and two-factor specifications, respectively. The results demonstrate that the proposed estimation procedure accurately recovers the underlying dependence structure.

\begin{table}[ht]
\centering
\small
\renewcommand{\arraystretch}{1.2} 
\caption{True and estimated implied correlation matrices under the Gaussian factor checkerboard copula (\(J=1\)).}
\label{tab:appendix_J1}
\resizebox{\columnwidth}{!}{%
\begin{tabular}{cc}
\toprule
\textbf{True correlation matrix} 
&
\textbf{Estimated correlation matrix}
\\[1.3mm]
\midrule
$\displaystyle
\begin{bmatrix}
 1.000 & 0.586 & -0.586 & 0.483 & -0.483\\
 0.586 & 1.000 & -0.390 & 0.321 & -0.321\\
-0.586 & -0.390 & 1.000 & -0.321 & 0.321\\
 0.483 & 0.321 & -0.321 & 1.000 & -0.265\\
-0.483 & -0.321 & 0.321 & -0.265 & 1.000
\end{bmatrix}
$

&

$\displaystyle
\begin{bmatrix}
 1.000 & 0.596 & -0.608 & 0.463 & -0.508\\
 0.596 & 1.000 & -0.412 & 0.314 & -0.345\\
-0.608 & -0.412 & 1.000 & -0.320 & 0.351\\
 0.463 & 0.314 & -0.320 & 1.000 & -0.268\\
-0.508 & -0.345 & 0.351 & -0.268 & 1.000
\end{bmatrix}
$
\\ 
\bottomrule
\end{tabular}
}
\end{table}

\begin{table}[ht]
\centering

\renewcommand{\arraystretch}{1.2} 
\caption{True and estimated implied correlation matrices under the Gaussian factor checkerboard copula (\(J=2\)).}
\label{tab:appendix_J2}
\resizebox{\columnwidth}{!}{%
\begin{tabular}{cc}
\toprule
\textbf{True correlation matrix}
&
\textbf{Estimated correlation matrix}
\\[1.3mm] \midrule
$\displaystyle
\begin{bmatrix}
 1.000 & 0.262 & 0.496 & -0.514 & -0.514\\
 0.262 & 1.000 & -0.436 & 0.398 & -0.691\\
 0.496 & -0.436 & 1.000 & -0.656 & 0.100\\
-0.514 & 0.398 & -0.656 & 1.000 & -0.068\\
-0.514 & -0.691 & 0.100 & -0.068 & 1.000
\end{bmatrix}
$
&
$\displaystyle
\begin{bmatrix}
 1.000 & 0.262 & 0.490 & -0.517 & -0.528\\
 0.262 & 1.000 & -0.437 & 0.369 & -0.681\\
 0.490 & -0.437 & 1.000 & -0.678 & 0.085\\
-0.517 & 0.369 & -0.678 & 1.000 & -0.029\\
-0.528 & -0.681 & 0.085 & -0.029 & 1.000
\end{bmatrix}
$
\\ 
\bottomrule
\end{tabular}
}
\end{table}

\section{Dataset used in the numerical application}
\label{appendix:data}
The empirical studies employ four datasets from different application domains.
The MEPS dataset contains healthcare utilization records together with
individual-level demographic, socioeconomic, and health-related characteristics.
The terrorism dataset records the numbers of attacks by Fulani extremists and
Boko Haram across geographical grid cells in Nigeria, along with population,
terrain, and spatial-location variables. The VHLSS dataset contains information
on outpatient and inpatient healthcare utilization, demographic characteristics,
socioeconomic status, health insurance coverage, and geographical regions.
Finally, the single-cell RNA sequencing dataset contains gene-expression counts
and cell-line information and is used to assess the proposed model in a
moderately high-dimensional setting.

The MEPS, terrorism, and single-cell RNA sequencing datasets are distributed
with the \texttt{GJRM}, \texttt{bizicount}, and \texttt{PLNmodels} packages,
respectively. The processed VHLSS dataset is publicly available as a CSV file
from the GitHub repository accompanying this paper. 

All code used to reproduce the simulation studies and empirical analyses is publicly available from the GitHub repository associated with this paper. The MEPS, \texttt{terror}, and \texttt{scRNA} datasets can be loaded directly from the corresponding \textsf{R} packages using the scripts provided in the GitHub repository. The VHLSS dataset is provided as a CSV file in the same repository. Since this dataset requires data cleaning and preprocessing before model fitting, the corresponding preprocessing scripts are also included. Detailed instructions for reproducing all analyses are available in the repository.

\begin{table*}[!t]
\centering
\scriptsize
\renewcommand{\arraystretch}{1.2}
\setlength{\tabcolsep}{4pt}

\caption{Regression specifications used in the empirical studies.}
\label{tab:datasets_regression_specifications}

\begin{threeparttable}

\begin{tabular}{
p{1.5cm}
p{1.8cm}
p{4.2cm}
p{4.0cm}
p{4.0cm}
}

\toprule

\textbf{Dataset}
&
\textbf{Response}
&
\textbf{Count model}
&
\textbf{Zero model}
&
\textbf{Purpose}

\\

\midrule


\multirow{2}{*}{MEPS}

&
\texttt{dvisit}

&
\makecell[l]{
\texttt{bmi + income + age}\\
\texttt{+ education + ethnicity2}\\
\texttt{+ ethnicity3 + ethnicity4}\\
\texttt{+ region2 + region3 + region4}\\
\texttt{+ gender + hypertension}\\
\texttt{+ hyperlipidemia}
}

&
--

&
\multirow{2}{3.5cm}[-2.5em]{
Evaluation of the proposed bivariate mixed Poisson model and comparison with \texttt{GJRM}.
}

\\
\addlinespace[1ex]
\cmidrule(lr){2-4}
&

\texttt{ndvisit}

&
\makecell[l]{
\texttt{bmi + income + age}\\
\texttt{+ education + ethnicity2}\\
\texttt{+ ethnicity3 + ethnicity4}\\
\texttt{+ region2 + region3 + region4}\\
\texttt{+ gender + hypertension}\\
\texttt{+ hyperlipidemia}
}

&
--

&

\\
\addlinespace[1ex]
\midrule


\multirow{2}{*}{Terror}

&
\texttt{att.ful}

&
\makecell[l]{
\texttt{pop + mtns}\\
\texttt{+ xcoord*ycoord}
}

&
\makecell[l]{
\texttt{pop + mtns}\\
\texttt{+ xcoord*ycoord}
}

&
\multirow{2}{3.5cm}[-1.0em]{
Evaluation of the proposed
bivariate zero-inflated model
and comparison with
\texttt{bizicount}.
}

\\
\addlinespace[2ex]
\cmidrule(lr){2-4}
&

\texttt{att.bok}

&
\makecell[l]{
\texttt{pop + mtns}\\
\texttt{+ xcoord*ycoord}
}

&
\makecell[l]{
\texttt{pop + mtns}\\
\texttt{+ xcoord*ycoord}
}

&

\\
\addlinespace[1ex]
\midrule


\multirow{2}{*}{VHLSS}

&
\texttt{FRE\_OUT}

&
\makecell[l]{
\texttt{GENDER + RELATE\_new\_OUT}\\
\texttt{+ GROUP\_AGE + MARRIED\_new\_OUT}\\
\texttt{+ EDU\_new + WORKING + REGION}\\
\texttt{+ INSURANCE\_STATUS}\\
\texttt{+ INCOME\_LEVEL\_new}\\
\texttt{+ SOCIAL\_ALLOWANCE\_MONTHLY}
}

&
\makecell[l]{
\texttt{GENDER + RELATE\_new\_OUT}\\
\texttt{+ GROUP\_AGE + MARRIED\_new\_OUT}\\
\texttt{+ INTERNET\_OR\_NOT + EDU\_new}\\
\texttt{+ INSURANCE\_STATUS + REGION}\\
\texttt{+ SOCIAL\_ALLOWANCE\_MONTHLY}
}

&
\multirow{2}{3.5cm}[-2.5em]{
Evaluation of the proposed bivariate zero-inflated mixed Poisson model using administrative healthcare data.
}

\\
\addlinespace[2ex]
\cmidrule(lr){2-4}
&

\texttt{FRE\_IN}

&
\makecell[l]{
\texttt{GENDER + RELATE\_new\_IN}\\
\texttt{+ GROUP\_AGE\_new\_IN}\\
\texttt{+ EDUCATION + WORKING}\\
\texttt{+ REGION}\\
\texttt{+ SOCIAL\_ALLOWANCE\_MONTHLY}
}

&
\makecell[l]{
\texttt{GENDER + RELATE\_new\_IN}\\
\texttt{+ GROUP\_AGE\_new\_IN}\\
\texttt{+ MARRIED\_STATUS}\\
\texttt{+ INTERNET\_OR\_NOT}\\
\texttt{+ EDUCATION}\\
\texttt{+ INSURANCE\_STATUS}\\
\texttt{+ INCOME\_LEVEL\_new}\\
\texttt{+ SOCIAL\_ALLOWANCE\_MONTHLY}
}

&

\\
\addlinespace[1ex]
\midrule


\multirow{3}{*}{scRNA}

&
Gene 1

&
\texttt{cell\_line}

&
--

&
\multirow{3}{3.5cm}{
Evaluation of the proposed multivariate mixed Poisson model in moderately high-dimensional settings.
}

\\

&

Gene 2

&
\texttt{cell\_line}

&
--

&

\\

&

Gene $m$, $m\ge3$

&
\texttt{cell\_line}

&
\texttt{cell\_line}

&

\\
\addlinespace[1ex]
\bottomrule

\end{tabular}

\begin{tablenotes}
\footnotesize
\item
MEPS: Medical Expenditure Panel Survey;
VHLSS: Vietnam Household Living Standard Survey.
For zero-inflated models (Terror and VHLSS), separate regression equations are specified for the count and zero-inflation components.
\end{tablenotes}

\end{threeparttable}

\end{table*}

\newpage

\bibliographystyle{apalike}
\bibliography{refe}

\end{document}